\newtheorem{lemma}{Lemma} 
\newtheorem{observation}{Observation}
\theoremstyle{thmstyleone}%
\newtheorem{theorem}{Theorem}%
\newtheorem{proposition}[theorem]{Proposition}%
\theoremstyle{thmstyletwo}%
\newtheorem{example}{Example}%
\theoremstyle{thmstylethree}%
\begin{document}
	
	\title[Non-independence of modular additions in ARX ciphers]{Towards non-independence of modular additions in searching differential trails of ARX ciphers: new automatic methods with  application to SPECK and Chaskey}

	\author[1,2]{ \sur{Haiwen Qin}}%
	
	\author[1,2]{\sur{Baofeng Wu}}%

	\affil[1]{\orgdiv{State Key Laboratory of Information Security}, \orgname{Institute of Information Engineering, Chinese Academy of Sciences}, \orgaddress{\city{Beijing}, \postcode{100093}, \country{China}}}
	
	\affil[2]{\orgdiv{School of Cyber Security}, \orgname{University of Chinese Academy of Sciences}, \orgaddress{\city{Beijing}, \postcode{100049}, \country{China}}}

	\abstract{
		ARX-based ciphers, constructed by the modular addition,  rotation and XOR operations, have been receiving a lot of attention in the design of lightweight symmetric ciphers.
		For their differential cryptanalysis, most automatic search methods of differential trails adopt the assumption of independence of modulo additions.
		However, this assumption does not necessarily hold when the trail includes consecutive modular additions (CMAs). %
		It has already been found that in this case  some differential trails searched by automatic methods before are actually impossible, but the study is not in 
		depth yet, for example, few effort has been paid to exploiting the root causes of non-independence between CMAs and accurate calculation of probabilities of the valid trails. 
		In this paper, we devote to solving these two problems. By examing the differential equations of single and consecutive modular additions, we find that the influence of non-independence  can be described by  relationships between constraints on the intermediate state of two additions. Specifically, constraints of the first addition can make some of  its output bits non-uniform, and when they meet the  constraints of the second addition, the differential probability of the whole CMA may be different from the value calculated under the independence assumption.
		As a result, we can build SAT models to verify the validity of a given differential trail of ARX ciphers and \#SAT models to calculate the exact probabilities of the differential propagation through CMAs in the trail, promising  a more accurate evaluation of probability of the trail.
		Our automic  methods and searching tools are applied to search related-key differential trails of SPECK and Chaskey including CMAs in the key schedule and the round function respectively. 
		Our SAT-based search strategies find that the probabilities of optimal related-key differential trails of 11-round  SPECK32/64 and 11- and 12-round SPECK48/96 are $2^3$, $2$, and $2^4$ times higher than the existing results, respectively, 
		and probabilities of some good related-key differential trails of 15-round SPECK32/64, SPECK48/96 and SPECK64/128, including several non-independent CMAs, are $2^{9}$,  $2^{5.5}$, and $2^{16}$ times higher, respectively.
		For Chaskey, we give a more accurate probability evaluation for the 8-round differential trail given by the designers considering the non-independence of additions, which turns out to be $2^{7.8}$ times higher than it calculated  under the independence assumption.
		
	}

	\keywords{	Modular Addition, Differential, ARX cipher, SPECK, Chaskey }

	\maketitle

	\section{Introduction}
	\label{sec:Introduction} 
	ARX-based ciphers, constructed by the modular addition, rotation, and XOR operations, have received great attention in the design of lightweight symmetric ciphers in recent years. 
	Since these three cryptographic primitives are very simple and easy to implement in software as well as hardware, ARX-based ciphers are usually simpler and faster than Sbox-based ciphers. Besides, the combination between these three operations can provide sufficiently high security, and make the cipher resistant to side-channel attacks.
	Some famous ARX ciphers, to name a few, include the block ciphers SPECK \cite{DBLP:conf/dac/BeaulieuSSTWW15}, Sparkle \cite{ToSC:BBCGPUVW20}, the  MAC algorithm Chaskey \cite{mouha2014chaskey}, the stream ciphers Salsa20 \cite{DBLP:series/lncs/Bernstein08} and Chacha \cite{bernstein2008chacha}, the hash functions Siphash \cite{DBLP:conf/indocrypt/AumassonB12}, Skein \cite{ferguson2010skein} and BLAKE \cite{BLAKE}.
	Among them, SPECK is released by the NSA and standardized by ISO as a part of the RFID air interface standard, ISO/29167-22.
	Siphash has become part of the default hash table implementation in Python.
	With a similar structure to Siphash, Chaskey is the MAC algorithm standardized in ISO/IEC 29192-6.
	There are mainly four cryptanalytic methods for  ARX ciphers, differential attack, linear attack, rotational attack, and differential-linear attack. The core of differential cryptanalysis is to search a differential trail (characteristic) with  probability $p$ which is higher than $2^{-n}$, where $n$ is the length of the input of the cipher, and use the trail to build a distinguisher to obtain some information of the key or recover the key with a complexity of $O({1}/{p})$. To reduce complexity and increase success probability of such an attack, it is important to find trails with probabilities as high as possible, and as accurate as possible.

	Since  modular addition is the only non-linear operation in  ARX ciphers, 
	the calculation of  probability of a differential trail comes down to that of each modular addition in the trail. 
	In the past two decades,  differential propagation of modular addition has been well studied. 
	The first effective algorithm was proposed by Lipmaa and Moriai \cite{FSE:LipMor01}, 
	where both the validity verification and probability calculation of a differential  were expressed by formulas about the input and output differences. In this paper we call them Lipmaa-Moriai formulas.
	Later, Mouha et al. \cite{DBLP:conf/sacrypt/MouhaVCP10} used  S-functions and graph theory to calculate the probability of a given differential by matrix multiplications.
	Following that, Schulte-Geers \cite{DBLP:journals/dcc/Schulte-Geers13} 
	studied the differential property of the  modular addition function from the perspective of  CCZ-equivalence, and proposed a theoretical explanation for the calculation of the differential probability. 
	These three methods can accurately calculate the differential probability when two inputs of the addition operation are uniform random, and have been widely used to build models for searching optimal differential trails of ARX ciphers.
	However, when the inputs are not uniform random, their calculations may be inaccurate. 
	In most previous work, the automatic search methods of differential trails for ARX ciphers are based on  the independence assumption, where the differential propagation on each modular addition is considered to be independent, and the probability of the whole differential trail is computed as the product of differential probabilities of each addition.
	With the above three methods to calculate the differential probability of a modular addition, there are mainly four methods to automatically search differential trails.
	The first is the SMT-based method proposed in \cite{EPRINT:MouPre13,DBLP:conf/fse/Biryukov0V14}, where Lipmaa-Moriai formulas were transformed into some logical operations between bit-vectors of input and output differences, and were used in building an SMT model to capture the differential propagation with a specific target probability. In this method, the automatic search for optimal trails is realized by  solving the SMT models 
	and adjusting the target probability according to the search results.
	The second is the MILP-based method introduced in \cite{FSE:FWGSH16}, where Lipmaa-Moriai formulas were transformed into a series of linear inequalities, and used to build an MILP model to characterize the propagation of differences, the  probability of which is transformed into the objective function. The state of art solver Gurobi \cite{gurobi} are often used in this method to solve the MILP models.
	The third is  Matsui's branch-and-bound algorithm for ARX ciphers  \cite{DBLP:journals/tit/LiuLJW21}, where Lipmaa-Moriai formulas 
	are used to construct the carry-bit-dependent difference distribution table (CDDT) for computing the differential probability on modular addition.
	The last is the SAT-based method proposed in \cite{DBLP:journals/tosc/SunWW21}, where the CNF representation of Lipmaa-Moriai  formulas is used to build a SAT model for searching trails with  differential probabilities less than a target value. The authors used the sequential encoding method \cite{DBLP:conf/cp/Sinz05} to encode the probability constraint and added  Matsui's bounding constraints to the SAT model to speed up the search process with low-round optimal probabilities. The SAT solver CaDiCaL \cite{BiereFazekasFleuryHeisinger-SAT-Competition-2020-solvers} was used to solve the SAT models for searching  optimal trails.
	These four methods can effectively search differential trails of ARX ciphers, and help cryptographic researchers to analyze  security of them.
	From the results for optimal differential trails of the SPECK family in \cite{DBLP:journals/tosc/SunWW21}, it seems that the SAT- and SMT-based methods are more efficient and more suitable for characterizing differential propagation of the modular addition than other two methods. 

	However, the independence assumption sometimes does not hold, especially when the output of one modular addition function behaves as the input of another one in the differential trail. This case is called  consecutive modular addition (CMA) in this paper.
	Since the output of the first addition operation may not be uniform random when it satisfies the differential propagation, the probability of the differential propagation on the second operation calculated by the previous methods may be inaccurate and even incorrect.
	It has been found in \cite{SAC:WanKeldun07,AC:Leurent12} that 
	some published differential trails that include CMAs are in fact impossible 
	because the output of certain addition does not satisfy the differential propagation of its following one. 
	And in \cite{EPRINT:MouPre13}, it was found a case that the probability of a differential of the CMA was higher than it calculated independently.
	Therefore, the probability of the differential trail computed under the independence assumption maybe inaccurate when there are CMAs in the trail.
	It was found in \cite{AC:Leurent12,AFRICACRYPT:ElSAbdYou19} that some trails were impossible because of the contradictory constraints of differential propagation of a CMA on certain adjacent bits of the intermediate state.
	To avoid these impossible trails during the search, Leurent provided the ARX Toolkit \cite{AC:Leurent12}, which was based on finite state machines and considered the differential constraints of modular additions on several adjacent bits of the input and output states.	
	Recently, ElSheikh et al. \cite{AFRICACRYPT:ElSAbdYou19} 
	introduced an MILP-based search model with a new series of variables to represent the constraints on two adjacent bits.
	The two methods are effective to avoid those impossible trails,
	but both are limited by the scale of the problem and not efficient for large models. 
	Moreover, as can be found from the experiments in \cite{AC:Leurent12}, there are still lots of impossible trails that can not be recognized by their methods and the reasons for invalidity of the trails need further investigation.

	Since it is hard to avoid  impossible differentials of CMAs during the search, it is necessary to verify the validity of the found differential trails, that is, verify whether it includes any right pairs of inputes following the trail. 
	Following the work of \cite{C:LiuIsoMei20}, Sadeghi et al. \cite{DBLP:journals/dcc/SadeghiRB21} proposed an MILP-based method to depict the state  propagation of a given differential trail of ARX-based ciphers, and experimentally verified whether the trail admitted any right input pairs. 
	By this method, they showed several published RX-trails of SIMECK and SPECK were impossible, and found better and longer related-key differential trails for SPECK. 

	These works have taken into account the impossible differential trails caused by the non-independence of  CMAs, 
	but paid little attention to the influence of non-independence on the differential probabilities. To the best of our knowledge, a systematic method to accurately calculate probabilities of differential trails considering non-independence of modular additions is still missing in the literature.
	This paper is devoted to this problem. We try to figure out the root causes of non-independence between
	CMAs, giving more conditions that can detect impossible differentials and methods to accurate the probability calculation of differential trails. 
	On basis of these, we would like to develop automatic methods and tools to search and evaluate differential trails for ARX ciphers including CMAs,  such as SPECK and Chaskey. It is important to analysis security of these two ciphers, in particular because they are both standardized by ISO. 

	\subsection{Our contributions}
	\label{subsec:contribution}
	Our contributions in this paper are summarized as follows.
	
	\textbf{A theoretic study of differential properties of the single and consecutive modular addition functions.}
	For single modular addition, we use mathematical induction to analyze the differential equation to obtain the constraints of a differential propagation on each bit, 
	which gives a clear explanation for the calculation of differential probability and the non-uniform distribution of the output.		
	For consecutive modular additions, we theoretically reveal several cases of non-independence by comparing the differential constraints of each addition on the intermediate state, 
	including the impossible difference caused by contradictory constraints and inaccurate probability calculation  caused by the non-uniform distribution of the intermediate state.
	
	\textbf{New models for verification and probability calculation of  CMAs.}
	Given a differential trail of a CMA, we develop a SAT model to capture the state transition of each modular addition following the trail, 
	and apply the \#SAT method to calculate its accurate differential probability. 
	For a differential trail of ARX ciphers including CMAs, we verify its validity by applying the SAT method to capture the state transition and find the state that satisfies the trail quickly, 
	and estimate its differential probability by applying the \#SAT model to calculate the probability of CMAs in the trail, which is more accurate than previous methods under the independence assumption.

	\textbf{Better related-key differential (RKD) trails for SPECK.}
	We apply the SAT method to  verify the validity of  related-key differential trails of round reduced SPECK obtained under the independence assumption before. 
	For valid trails, we apply the \#SAT method to re-estimate their differential probabilities, 
	and for invalid ones, we give a method to automatically detect differences that cause contradictory constraints and record them to avoid this kind of impossible trails in the  search of new valid trails, which reduced a lot of search time.
	Our approach can find valid RKD trails as well as their weak keys, with higher probabilities than the work in \cite{DBLP:journals/dcc/SadeghiRB21}.
	For SPECK32/64, we find the optimal RKD trails of 10 to 13 rounds for the first time. The probabilities of trails of 11 and 15 rounds we find are $2^3$ and $2^9$ times higher than the exiting results.
	For SPECK48/96, we find optimal RKD trails of 11 and 12 rounds for the first time, whose probabilities are $2$ and $2^4$ times higher than  previous results, and the trails of 14 and 15 rounds  have probabilities $2^{1.5}$ and $2^{5.5}$ times higher.
	For SPECK64/128, the RKD trails of 14 and 15 rounds have a probability $2^{16}$ higher than the exiting results. The detailed results are sorted in Table \ref{tab:speck32/64}, Table \ref{tab:speck48/96}, and Table \ref{tab:speck64/128}.
	
	\textbf{A tighter security bound for Chaskey.}
	We apply our methods to analysis the differential trail of 8 rounds for Chaskey, which  was found by the designers in \cite{mouha2014chaskey}.
	We assume that the two  chains of modular additions in the rount function are independent of each other, and use the \#SAT method to calculate the differential probabilities of two chains respectively. 
	We find that the probability of this trail computed under the consideration of  non-independence of CMAs is $2^{-285.1713}$, which is  $2^{7.8}$ times higher than the value $2^{-293}$ calculated under the independence assumption. This provides a tighter security bound for Chaskey against differential attacks.

	\subsection{Outline} 
	\label{subsec:outline}
	The rest of the paper is organized as follows. 
	Sect. \ref{sec:preliminaries} recalls some definitions that will be used in the remaining contents.
	Sect. \ref{sec:diff property of mdad and consecutive mdad} introduce the differential properties of single modular addition and consecutive modular additions, including the influence of non-independent CMAs.
	Sect. \ref{sec:SAT model} presents the SAT model of states propagation on  modular additions, which is used to verify the validity of differential trails and do  accurate calculation of differential probabilities on CMAs.
	The application of our method on the SPECK family of block ciphers and Chaskey are shown in Sect. \ref{sec:application on SPECK} and Sect. \ref{sec:application on Chaskey}, respectively. Conclusions are given in Sect. \ref{sec:conclusion}.

	\section{Preliminaries}
	\label{sec:preliminaries}
	\subsection{Notations}
	\label{subsec:notations}
	In this paper, we denote an $n$-bit vector as $ x = (x_{n-1},\ldots, x_1, x_0) $, where $x_0$ is the least significant bit and $x_{i}$ is called the $i$-th bit (bit $i$) of $x$. We use $x[i,j] = \left(x_i,...,x_{j}\right)$ for $j\leq i$ to represent the $j$-th to the $i$-th bit of $x$. 
	In differential cryptanalysis, we denote two $n$-bit input states as $x$ and $x'$, and their difference as $\varDelta x  = (\varDelta x_{n-1},\ldots, \varDelta x_1, \varDelta x_0) = x \oplus x' $. 
	Some conventional operations are listed in Table \ref{talbe: operations}.
	\begin{table}[tp]
		\begin{center}
			\begin{minipage}{320pt}
				\caption{Basic operations}
				\label{talbe: operations}
				\begin{tabular}{ccccccc}
					\toprule
					XOR 		& Modular addition 	& Left rotation & Right rotation & AND & OR & NOT \\
					\midrule
					$\oplus$ 	& $\boxplus $	 	& $\lll$        & $\ggg$    &$\wedge$ & $\vee$ & $\neg$ \\  
					\botrule
				\end{tabular}
			\end{minipage}
		\end{center}
	\end{table}

	For a modular addition $z = x \boxplus y$, we denote the vector of  carry bits by $c = (c_{n-1},\ldots,c_1,c_0)$ where $c_0 = 0$. Then we have
	\begin{align}
		z_0 &= x_0 \oplus y_0,\\
		z_{i} &= x_{i} \oplus y_{i} \oplus c_{i},\\
		c_{i+1} &= x_{i}y_{i} \oplus (x_{i} \oplus y_{i})c_{i}=x_i\oplus y_i \oplus x_{i}y_{i} \oplus x_{i}z_{i} \oplus y_{i}z_{i}  ,\label{recur_carry}
	\end{align} 
	for $i \geq 1$. We use $c_{i+1} = f_{carry}(x_i,y_i,c_i)$ to denote the carry function.
	Note that for any  $0\leq j<i \leq n-1$, $c_i$ is a Boolean function in $c_j$ and $x_k$, $y_k$ for $k=j,j+1,\ldots, i-1$. So we use 
	\begin{equation}
		c_i = f_{carry}\left(x[i-1,j], y[i-1,j],c_j\right) %
	\end{equation}
	to represent the carry function on the $i$-th bit with the initial carry bit $c_j$. 
	
	We denote a differential propagation  on $z = x \boxplus y$ as an event $M:(\varDelta x,\varDelta y)\xrightarrow{\boxplus}\varDelta z$, that is, input differences $\varDelta x,\varDelta y$ propagates to an output difference $\varDelta z$. When the inputs are independent and uniform random, the differential probability of $M$ is denoted as 
	\begin{equation*}
		\label{eq: the probability of M}
		\begin{aligned}
			\Pr(M) &= \Pr((\varDelta x,\varDelta y)\xrightarrow{\boxplus}\varDelta z) \\
			&= \frac{\#\left\{x,y\in \mathbb{F}_2^n: (x\boxplus y)\oplus ((x\oplus \varDelta x) \boxplus (y \oplus \varDelta y)) =\varDelta z \right\}}{2^{2n}}.
		\end{aligned}
	\end{equation*}
	For $i = 0,1,\ldots,n-1$, we use $P_M^i$ to denote the differential probability of bit $i$ under the conditions that the differential propagation on all previous bits are successful.
	
	\begin{figure}[tp]  %
		\begin{center}
			\includegraphics[ width=0.40\textwidth]{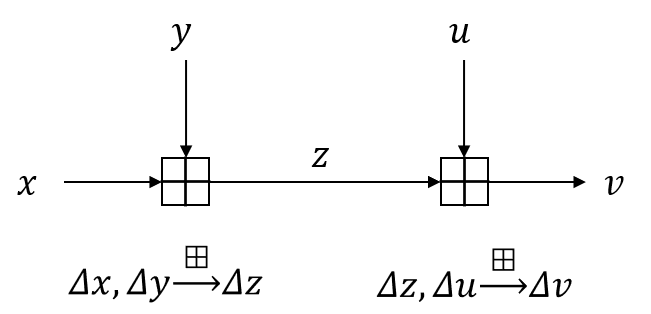}
			\caption{Consecutive modular addition}
			\label{fig:CMA-0}
		\end{center}
	\end{figure}

	The consecutive modular addition (CMA) is depicted in Fig. \ref{fig:CMA-0}, where the two additions are denoted by $z = x \boxplus y$ and $v = z \boxplus u$ and the differential propagation on them are denoted as $M_1:(\varDelta x,\varDelta y)\xrightarrow{\boxplus}\varDelta z$ and $M_2:(\varDelta z,\varDelta u)\xrightarrow{{\small \boxplus}}\varDelta v$, respectively. Note that the output difference $\varDelta z$ of the first addition coincides with the input difference of the second one. 
	The carry vectors of these two additions are $c =x\oplus y\oplus z$ and $d =z\oplus u\oplus v$, respectively.
	Assume the three inputs of the CMA, namely,  $x$, $y$ and $v$, are uniform random. Then the probability of the differential propagation on the CMA can be calculated by $\Pr(M_2,M_1)=\Pr(M_1) \times \Pr(M_2 \vert M_1)$, where $\Pr(M_2 \vert M_1)$ represents the conditional probability of $M_2$ under the condition that $M_1$ happens.  Note that when $M_1$ happens, if the output of the first addition is uniform random in the whole space, then we have $\Pr(M_2 \vert M_1) = \Pr(M_2)$. Given a differential trail, we are interested in the non-independent CMAs where $\Pr(M_2 \vert M_1) \neq \Pr(M_2)$.
	
	\subsection{SAT and \#SAT}
	\label{subsec:SAT and sharpSAT}
	The Boolean satisfiability problem (SAT) studies the satisfiability of a given Boolean formula. A SAT problem is said satisfiable if there exists at least one solution of binary variables that make the formula  true. Although the SAT problem is the first problem that was proved to be NP-complete \cite{DBLP:conf/stoc/Cook71}, the state of art SAT solvers, such as miniSAT, cryprominisat \cite{DBLP:conf/sat/SoosNC09}, CaDiCaL \cite{BiereFazekasFleuryHeisinger-SAT-Competition-2020-solvers}, can solve instances with millions of variables.
	
	The sharp satisfiability problem (\#SAT) \cite{TCS:Valiant97} is the problem that counts the number of solutions to a given Boolean formula. Although the number of solutions can be counted by repeatedly solving the SAT problem and excluding the solutions that have been found until the SAT solver returns \textit{UnSAT}, it is ineffective in solving large-scale models. There are two modern solvers: SharpSAT \cite{DBLP:conf/sat/Thurley06} which is based on modern DPLL based SAT solving technology, and GANAK \cite{DBLP:conf/ijcai/SharmaRSM19} which is a new scalable probabilistic exact model counter. They can both deal with large-scale problems.
	
	Almost all modern SAT and \#SAT solvers take the conjunctive normal form (CNF) of Boolean formulas as inputs. For example, the CNFs of several simple boolean functions are:

	\begin{equation}
		\label{eq: cnf of a oplus b  } 
		a\oplus b \Longleftrightarrow \ \left(\lnot a\vee\lnot b\right)\wedge\left(a\vee b\right),
	\end{equation}
	\begin{equation}
		\label{eq: cnf of a = b}  a=b \Longleftrightarrow \left(\lnot a \vee b\right) \wedge \left(a \vee \lnot b\right),
	\end{equation}
	\begin{equation}
		\label{eq: cnf of g = a oplus b} 
		\begin{aligned}
			g=a \oplus b \Longleftrightarrow & \left(\lnot a \vee b \vee g\right) \wedge \left(a \vee b \vee \lnot g\right)  \wedge \left(a \vee \lnot b \vee g\right) \\
			&  \wedge \left(\lnot a \vee \lnot b \vee \lnot g \right) ,
		\end{aligned}
	\end{equation}
	
	\begin{equation}
		\label{eq: cnf of g = ab} 
		g=ab \Longleftrightarrow \left(\neg g \vee a\right) \wedge \left(\neg g \vee b\right) \wedge \left(g\vee \neg a \vee \neg b\right).
	\end{equation}
	
	\subsection{A brief description of SPECK}
	\label{subsec:SPECK}
	\begin{figure}[tp]  %
		\centering
		\includegraphics[ width=0.50\textwidth]{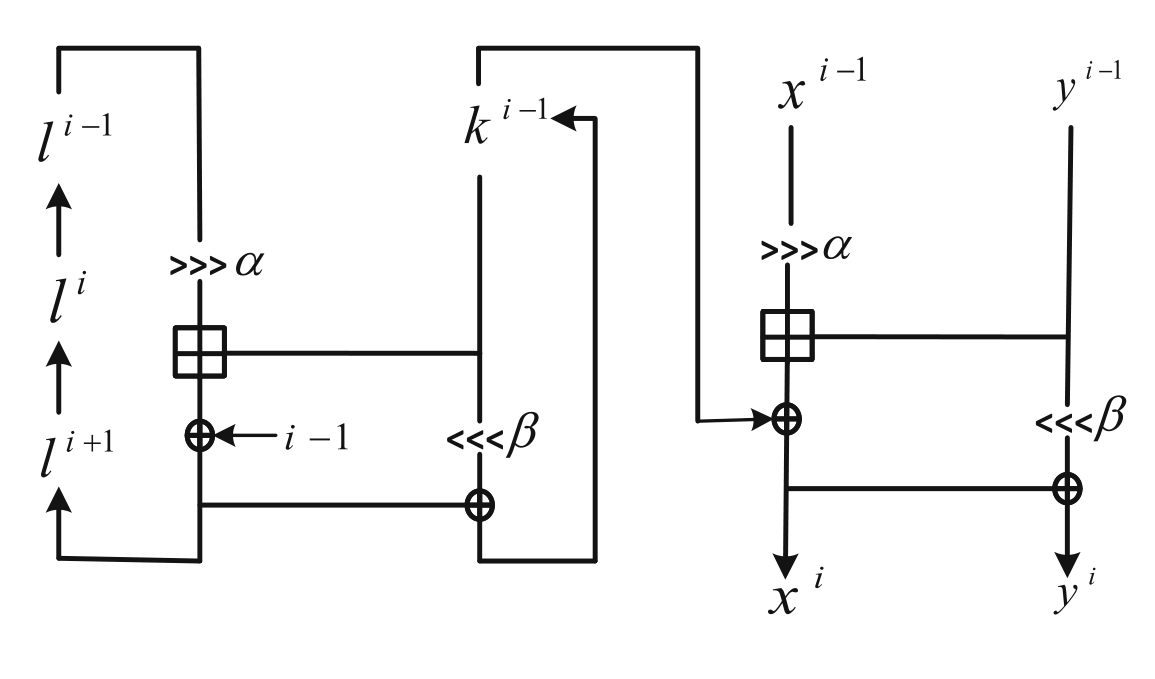}
		\caption{One round of SPECK(for $m =4$)}
		\label{fig:speck}
	\end{figure}    
	SPECK is a family of lightweight block ciphers designed by NSA in 2015 \cite{DBLP:conf/dac/BeaulieuSSTWW15}. For SPECK$2n/mn$,  the  block size is $2n$ bits, while the master key size is $mn$ bits. 
	The round function  and the key schedule function  of SPECK are shown in Fig.\ref{fig:speck}. Let $i\geq 1$. For round $i-1$, the input states and the sub-key are denoted as $(x^{i-1}, y^{i-1})$ and $k^{i-1}$ respectively, then the output state is computed as
	\begin{equation*}
		\left( x^{i}, y^{i} \right)  = \left(\left(\left(x^{i-1} \ggg \alpha\right) \boxplus y^{i-1}\right)\oplus k^{i-1}, \left(y^{i-1}\lll \beta\right) \oplus x^{i} \right),
	\end{equation*}
	where $(\alpha, \beta) = (7,2)$ for $n = 16$ and $(\alpha, \beta) = (8,3)$ in other cases. The round function of key schedule is same to the round function, except that the inserted sub-keys are replaced by round constants
	. Taking $K = (l^{m-2},l^{m-3},\ldots , l^{0},k^{0})$ as the master key, for the $(i-1)$-th round of the key schedule, the output  is
	\begin{equation}
		\left( l^{i+m-2}, k^{i}\right)  = \left(\left(\left(l^{i-1} \ggg \alpha\right) \boxplus k^{i-1}\right)\oplus r^{i-1}, \left(k^{i-1}\lll \beta\right)\oplus l^{i+m-2} \right),
	\end{equation}
	where $r^{i-1} = i-1$.

	Note that in the data encryption part, there is a sub-key insertion between two modular additions of consecutive rounds, so  the input of the second addition can be considered as uniform random, while in the key schedule part, there is only linear operations between two additions, which means the input of the second addition is related to the first one and thus can not be regarded as uniform random. 
	In this paper, we set $m=4$ and study the related-key differential trails for SPECK32/64, SPECK48/96, and SPECK64/128 in Sect. \ref{sec:application on SPECK}.
	
	\subsection{A brief description of Chaskey}
	\label{subsec:Chaskey}
	\begin{figure}[tp]  %
		\centering
		\includegraphics[ width=0.80\textwidth]{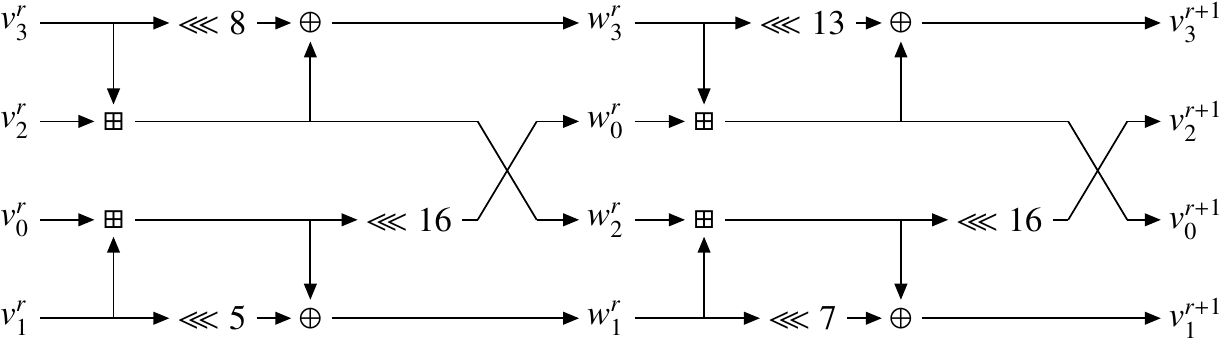}
		\caption{One round of the Chaskey permutation $\pi$ (12 rounds in full)}
		\label{fig:Chaskey}
	\end{figure}
	Chaskey, designed by Mouha et al. \cite{mouha2014chaskey}, is a very efficient MAC algorithm for microcontrollers and was standardized in ISO/IEC 29192-6. The internal permutation $\pi$ of Chaskey has totally 12 rounds, and one of them is depicted in Fig. \ref{fig:Chaskey}. 
	It is obvious that there are two CMAs in one round of  $\pi$, which are
	$w_2^r=v_2^r\boxplus v_3^r$, $v_2^{r+1}\ggg 16=w_2^r\boxplus w_1^r$ and 
	$w_0^{r}\ggg 16=v_0^r\boxplus v_1^r$, $v_0^{r+1}=w_0^r\boxplus w_3^r$. 
	For the entire $\pi$, there are two chains of modular additions with 24 additions in each one. We call the addition chain started with $(v_0^0, v_1^0)$ as Chain 1, and the other one as Chain 2.
	
	Since  Chain 1 and Chain 2 both contain 12 pairs of CMAs which are connected by linear operations (rotations),  in  differential cryptanalysis the non-independence of them will also definitely affect the entire differential trail, and it is important to find out this influence. The designer of Chaskey reported a best found differential trail of 8 rounds in \cite{mouha2014chaskey}, whose differential probability is $2^{-289.9}$ calculated by Leurent's ARX Toolkit. We find it has a higher probability considering the non-independence of CMAs in Sect. \ref{sec:application on Chaskey}.

	\section{Differential properties of the (consecutive) modular addition function}
	\label{sec:diff property of mdad and consecutive mdad}
	\subsection{Differential properties of a single modular addition }
	\label{subsec:modular addition}
	For one $n$-bit modular addition $z=x \boxplus y $, assume that the input variables $x$ and $y$ are independent and uniform random. Given a differential propagation  $M: (\varDelta x,\varDelta y)\xrightarrow{\boxplus}\varDelta z$ for fixed differences $\varDelta x,\varDelta y,\varDelta z\in \mathbb{F}_2^n$, let $x' = x \oplus \varDelta x$, $y' = y \oplus \varDelta y$ and $z' = x'\boxplus y'$. Then the differential equation of $M$ is refereed to as $z\oplus z^\prime=\varDelta z$. Let the carry vectors of $z=x \boxplus y $ and $z' = x'\boxplus y'$ be $c$ and $c'$, respectively.

	\subsubsection{Differential equation on each bit}
	\label{subsubsec:DEB}  
	We first derive the constraints on the input and output of a modular addition function (called differential constraints) under the differential propagation $M$  by studying the differential equation in the bit level. For $i = 0,1,\ldots, n-1$, the output of $z=x \boxplus y $ and $z' = x'\boxplus y'$ satisfies
	\begin{equation}\nonumber
		\begin{aligned}
			&z_i = x_i \oplus y_i \oplus c_i, \\
			&z'_i = x'_i \oplus y'_i \oplus c'_i = x_i \oplus y_i  \oplus \varDelta x_i \oplus \varDelta y_i \oplus c'_i,
		\end{aligned} 
	\end{equation}
	so the differential equation $z\oplus z' = \varDelta z$ on this bit is
	\begin{equation}
		\label{eq:initial diff equation on bits}
		c_i \oplus c'_i = \varDelta x_i \oplus \varDelta y_i \oplus \varDelta z_i .
	\end{equation}
	Then the propagation rule of the difference on each bit can be analyzed as follows. Note here that the main argument comes from the recursive relation \eqref{recur_carry} for the carry function, which makes the differential equation can be analyzed recursively.
	
	For the least significant bit, since $c_0=c'_0= 0$, the differential equation is a constant equation, namely,
	\[\varDelta x_0 \oplus \varDelta y_0 \oplus \varDelta z_0 = 0.\]
	So $x_0$, $y_0$ have no constraints and the differential probability is $ P_M^0 = 1_{\varDelta z_0 = \varDelta x_0 \oplus \varDelta y_0 }$. 	
	
	For any $0\leq i\leq n-2$, assume the differential equation \eqref{eq:initial diff equation on bits} has been solved from the 0-th to the $i$-th bit and the solutions are $\{(x_{i-1},\ldots,x_0), (y_{i-1},\ldots,y_0) \in \mathbb{F}_2^{i}\}$. 
	We consider the equation on the $(i+1)$-th bit. According to \eqref{recur_carry}, we have
	\begin{equation}\nonumber
		\begin{aligned}
			&c_{i+1}=x_{i} y_{i} \oplus x_{i} \oplus y_{i} \oplus\left(x_{i} \oplus y_{i}\right) z_{i}, \\
			&c_{i+1}^{\prime}=x_{i}^{\prime} y_{i}^{\prime} \oplus x_{i}^{\prime} \oplus y_{i}^{\prime} \oplus\left(x_{i}^{\prime} \oplus y_{i}^{\prime}\right) z_{i}^{\prime}.
		\end{aligned}  
	\end{equation}	
	Substituting $x'_i = x_i \oplus \varDelta x_i$, $y'_i = y_i \oplus \varDelta y_i$,	$z'_i = z_i \oplus \varDelta z_i$ and summing the two equalities, the differential equation (\ref{eq:initial diff equation on bits}) of the ($i+1$)-th bit becomes
	\begin{equation}\label{eq:diff equation on bits}
		\begin{aligned}
			&\left(\Delta x_{i} \oplus \Delta y_{i}\right) z_{i} 
			\oplus\left(\Delta x_{i} \oplus \Delta z_{i}\right)  y_{i} 
			\oplus\left(\Delta z_{i} \oplus \Delta y_{i}\right)  x_{i} \\
			=&\left(\Delta x_{i} \oplus \Delta y_{i}\right) \neg \Delta z_{i}\oplus\Delta x_{i} \Delta y_{i}\oplus\Delta x_{i+1} \oplus \Delta y_{i+1} \oplus \Delta z_{i+1}.
		\end{aligned}
	\end{equation}	
	Note that (\ref{eq:diff equation on bits}) is actually an equation in the variables $x_i$, $y_i$ since $c_i=x_i\oplus y_i\oplus z_i$ only depends on  $x_{i-1},\ldots,x_0, y_{i-1},\ldots,y_0$ (which can be viewed as known in this step). 
	Unless a constant equation, it is a linear equation, the solution of which gives constraints on $x_i$, $y_i$. To simply the analysis in the following, we also view $c_i$ as an variable for $0\leq i\leq n-1$, which naturally satisfies Eq. \eqref{recur_carry}.
	Let $\varDelta c_0 = 0$ and $\varDelta c_{j+1} = \varDelta z_{j+1} \oplus \varDelta x_{j+1} \oplus \varDelta y_{j+1}$ for $j = 0,1,\ldots,i$. 
	Then the constraints on $x_i$, $y_i$, $c_i$ and $c_{i+1}$ derived from Eq. \eqref{eq:diff equation on bits} iterating over all possible differences are shown in Table \ref{tab:diff_constraint}.
	
	We simply explain  Table \ref{tab:diff_constraint} by an example. See Row 8 in Table \ref{tab:diff_constraint}. 
	Given the difference $ (\varDelta x_i, \varDelta y_i, \varDelta z_i ) = (1,0,0)$ and $\varDelta c_{i+1}=\varDelta x_{i+1}\oplus\varDelta y_{i+1}  \oplus \varDelta z_{i+1} = 1$,  Eq. (\ref{eq:diff equation on bits}) is simplified to
	\[z_i\oplus y_i=x_i\oplus c_i=0.\] 
	Then the differential probability is
	\begin{equation}
		\nonumber
		\begin{aligned}
			P_M^{i+1}&=\Pr\left(x_i\oplus c_i=0 \left\vert  c_i=f_{carry}\left(x[i-1,0],y[i-1,0], c_0\right)\right.\right)\\	
			&= \Pr\left(x_i=0\right) \cdot \Pr \left(c_i = 0\right) + \Pr\left(x_i = 1\right) \cdot \Pr\left(c_i = 1\right)\\
			& = \frac{1}{2} \left(\Pr\left(c_i=0\right) + \Pr\left(c_i=1\right)\right)\\
			&=\frac{1}{2} ,
		\end{aligned} 
	\end{equation}
	since the input variable $x_i$  is uniform random. This means in this case, no matter what the probability distribution of $c_i$ is, we always have $P_M^{i+1}=1/2$. 
	Furthermore, we also know that $z_i$ is uniform since $z_i = y_i$, and $c_{i+1}$ has the same distribution with $c_i$ since $c_{i+1} = x_iy_i \oplus y_ic_i \oplus x_ic_i  = c_i$, when the differential propagation on this bit is successful.
	
	Table \ref{tab:diff_constraint} is similar to  \cite[Table 2]{AFRICACRYPT:ElSAbdYou19}, but our way to generate it is different.
	
	\begin{table}[tp]   %
		\begin{center}
			\begin{minipage}{\textwidth}
				\caption{Differential  constraints of $(\varDelta x_{i+1},\varDelta y_{i+1}) \rightarrow \varDelta z_{i+1}$} 
				\label{tab:diff_constraint}
				\renewcommand{\arraystretch}{1.6}%
				\renewcommand{\thefootnote}{\fnsymbol{footnote}}
				\begin{tabular*}{\textwidth}{ccccccccc}
					\toprule
					\multirow{2}{*}{} & \multirow{2}{*}{$\varDelta x_{i}$} & \multirow{2}{*}{$\varDelta y_{i}$} & \multirow{2}{*}{$\varDelta z_{i}$} & \multirow{2}{*}{$\varDelta c_{i+1}$} & \multicolumn{3}{@{}c@{}}{ Differential Constraints} & \multirow{2}{*}{$P_M^{i+1} $}   \\
					\cmidrule{6-8}
					&                    &                    &                    &                    &Inputs       &Output          &\quad Carry        &   \\ 
					\midrule
					1 &$a$        &$a$       &$a$       &$a$       &$ -$				&$-$      	&Carry function \eqref{recur_carry}   			  &1   \\
					2 &$a$        &$a$       &$a$       &$\neg a$  &$\times$	    		&$\times$ 		 	&$\times$        &0   \\
					3 &$a$        &$a$       &$\neg a$  &$a$       &$x_{i} = y_{i}$      	&$z_{i}=c_{i}$      &$c_{i+1}=x_{i}=y_{i}$     	&1/2 \\
					4 &$a$        &$a$       &$\neg a$  &$\neg a$  &$x_{i} = \neg y_{i}$ 	&$z_{i}=\neg c_{i}$ &$c_{i+1}=c_{i}=\neg z_{i}$ &1/2	\\
					5 &$a$        &$\neg a$  &$a$       &$a$       &$y_{i} = \neg c_{i}$ 	&$z_{i}=\neg x_{i}$ &$c_{i+1}=x_{i}=\neg z_{i}$	&1/2 \\
					6 &$\neg a$   &$a$       &$a$       &$a$       &$x_{i} = \neg c_{i}$ 	&$z_{i}=\neg y_{i}$ &$c_{i+1}=y_{i}=\neg z_{i}$ &1/2 \\
					7 &$a$        &$\neg a$  &$a$       &$\neg a$  &$y_{i} = c_{i}$      	&$z_{i}=x_{i}$      &$c_{i+1}=c_{i}=y_{i}$      &1/2	\\
					8 &$\neg a$   &$a$       &$a$       &$\neg a$  &$x_{i} = c_{i}$      	&$z_{i}=y_{i}$      &$c_{i+1}=c_{i}=x_{i}$      &1/2	\\ 
					\botrule
				\end{tabular*}
				\footnotetext{ Note: $a$ iterates over $ \left\{0,1\right\}$. 
					``$-$" means there is no constraint; ``$\times$" means this is an unsatisfactory case.}
			\end{minipage}
		\end{center}
	\end{table}
	
	In Table \ref{tab:diff_constraint}, Row 2 is the only invalid case for differential propagation, which corresponds to the case when Eq. \eqref{eq:diff equation on bits} is a constant contradictory equation. Similarly, Row 1 corresponds to the case when Eq. \eqref{eq:diff equation on bits} is a constant equality.  Row 3 to Row 8 show that if the $i$-th bit of the input and output differences  are not all equal, the differential probability $P_M^{i+1}$  is always 1/2. 
	
	From the above discussions, the differential properties of the modular addition operation can be summarized into the following two theorems, as obtained in \cite{FSE:LipMor01,DBLP:journals/dcc/Schulte-Geers13}. They will be used to build the SAT model for capturing differential propagations on modular additions in Sect. \ref{subsubsec:SAT for search}.
	
	\begin{theorem}[\cite{FSE:LipMor01,DBLP:journals/dcc/Schulte-Geers13}] \label{thm:verification of modular addition}
		Assume the two inputs of modular addition are uniform random. Then the differential $M:(\varDelta x,\varDelta y)\xrightarrow{\boxplus}\varDelta z$ is valid if and only if the differences satisfy
		\begin{enumerate}
			\item for the LSB, $\varDelta x_0 \oplus \varDelta y_0 \oplus \varDelta z_0 = 0$;
			\item for $i = 0,1,\ldots, n-2$, if $\varDelta x_i = \varDelta y_i = \varDelta z_i$, then $ \varDelta x_{i+1} \oplus \varDelta y_{i+1} \oplus \varDelta z_{i+1} = \varDelta x_i $.
		\end{enumerate}
	\end{theorem}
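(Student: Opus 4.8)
The plan is to prove both implications at once by reducing validity of $M$ (i.e. $\Pr(M)>0$, the existence of a right pair $(x,y)\in\mathbb{F}_2^n\times\mathbb{F}_2^n$ with $z\oplus z'=\varDelta z$) to solvability of the bit‑indexed system of differential equations just derived, and then running an induction on the bit index. By Eq.~\eqref{eq:initial diff equation on bits} the single equation $z\oplus z'=\varDelta z$ decomposes into one equation per bit: the bit‑$0$ equation is the constant condition $\varDelta x_0\oplus\varDelta y_0\oplus\varDelta z_0=0$ (condition~1), and for $0\le i\le n-2$ the bit‑$(i+1)$ equation is Eq.~\eqref{eq:diff equation on bits}. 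The first thing I would record is the ``triangular'' structure of this system: after substituting $z_i=x_i\oplus y_i\oplus c_i$, Eq.~\eqref{eq:diff equation on bits} is an \emph{affine} equation in the two unknowns $x_i,y_i$, with linear part $(\varDelta x_i\oplus\varDelta z_i)\,x_i\oplus(\varDelta y_i\oplus\varDelta z_i)\,y_i$, while $c_i=f_{carry}\!\left(x[i-1,0],y[i-1,0],c_0\right)$ depends only on strictly lower bits; thus the bit‑$(i+1)$ equation constrains only the pair $(x_i,y_i)$, with $c_i$ a known constant at that stage.

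Next I would split on whether this linear part vanishes, which is exactly the dichotomy already tabulated in Table~\ref{tab:diff_constraint}. It vanishes precisely when $\varDelta x_i=\varDelta y_i=\varDelta z_i$, and in that case a short computation (using $\varDelta x_i\varDelta y_i=\varDelta x_i$ and $\varDelta x_i\oplus\varDelta y_i=0$) collapses Eq.~\eqref{eq:diff equation on bits} to the constant equation $\varDelta x_{i+1}\oplus\varDelta y_{i+1}\oplus\varDelta z_{i+1}=\varDelta x_i$, which is condition~2: it is the vacuous Row~1 when satisfied and the contradictory Row~2 when violated. When $\varDelta x_i,\varDelta y_i,\varDelta z_i$ are not all equal the linear part is a nonzero $\mathbb{F}_2$‑linear form in $(x_i,y_i)$, so the bit‑$(i+1)$ equation has exactly two solutions among the four values of $(x_i,y_i)$, no matter what value $c_i$ inherited from the lower bits (Rows~3--8; this is also where $P_M^{i+1}=1/2$ comes from).

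The induction then finishes the proof. If condition~1 fails the bit‑$0$ equation is contradictory, and if some instance of condition~2 fails the corresponding bit‑$(i+1)$ equation is the contradictory Row~2; in either case $\Pr(M)=0$, so $M$ is invalid. Conversely, assuming conditions~1 and~2, I would construct a right pair greedily: choose $x_0,y_0$ arbitrarily (bit~$0$ imposes no constraint on variables), and for $k=1,\dots,n-1$, having already fixed $x_0,\dots,x_{k-2},y_0,\dots,y_{k-2}$ (hence $c_{k-1}$), extend by a pair $(x_{k-1},y_{k-1})$ solving the bit‑$k$ equation — possible by the previous paragraph, either vacuously when $\varDelta x_{k-1}=\varDelta y_{k-1}=\varDelta z_{k-1}$ (invoking condition~2) or because that equation has two solutions; finally $x_{n-1},y_{n-1}$ are free. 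The resulting $(x,y)$ is a right pair, so $\Pr(M)>0$ and $M$ is valid. As a byproduct the same construction gives $\Pr(M)=\prod_i P_M^{i+1}=2^{-t}$ with $t=\#\{\,0\le i\le n-2:\varDelta x_i,\varDelta y_i,\varDelta z_i\text{ not all equal}\,\}$, recovering the Lipmaa--Moriai weight.

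The step I expect to be the main obstacle is making the ``triangularity'' precise, namely verifying that once $z_i$ (equivalently $c_i$) has been eliminated the bit‑$(i+1)$ equation genuinely involves no unknown beyond $x_i,y_i$, so that the greedy extension never has to revisit an earlier choice. This rests only on $c_i$ being a function of the strictly lower bits, but it has to be spelled out carefully because the raw form of Eq.~\eqref{eq:diff equation on bits} still displays $z_i$; everything after that is the finite case analysis of Table~\ref{tab:diff_constraint}.
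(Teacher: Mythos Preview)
Your proposal is correct and follows essentially the same route as the paper: the paper does not give a separate formal proof of this theorem but derives it from the bit-level analysis in Section~\ref{subsubsec:DEB}, noting that Eq.~\eqref{eq:diff equation on bits} is a (possibly constant) linear equation in $x_i,y_i$ once $c_i$ is determined by lower bits, and reading off the validity condition from the Row~1/Row~2 dichotomy of Table~\ref{tab:diff_constraint}. Your write-up simply makes the induction and the greedy construction of a right pair more explicit than the paper does.
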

	\begin{theorem}[\cite{FSE:LipMor01,DBLP:journals/dcc/Schulte-Geers13}] \label{thm:probability calculation of modular addition}
		Assume the two inputs of modular addition are uniform random. Then if the differential $M:(\varDelta x,\varDelta y)\xrightarrow{\boxplus}\varDelta z$ is valid, its probability is
		\begin{equation}
			\begin{aligned}
				\operatorname{Pr}((\varDelta x, \varDelta y) \xrightarrow{\boxplus}\varDelta z) &={\Large 1}_{\varDelta z_0 = \varDelta x_0 \oplus \varDelta y_0 } \times P_M^{1} \times \cdots \times P_M^{n-1} \\ &=2^{-\sum_{i=0}^{n-2} \neg eq\left(\varDelta x_{i}, \varDelta y_{i}, \varDelta z_{i}\right)}.
			\end{aligned}
		\end{equation}
		Here, the Boolean function $ eq\left(a,b,c\right)$ outputs 1 if and only if $ a = b = c$.
	\end{theorem}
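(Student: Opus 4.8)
The plan is to express $\Pr(M)$ as a telescoping product of the per-bit conditional probabilities $P_M^i$ and then read off each factor from the bit-level case analysis already summarised in Table~\ref{tab:diff_constraint}. Regard $M$, i.e.\ the event $z\oplus z'=\varDelta z$, as the intersection $E_0\cap E_1\cap\cdots\cap E_{n-1}$, where $E_i$ is the event that the difference equation holds on bit $i$, namely $z_i\oplus z'_i=\varDelta z_i$, equivalently $c_i\oplus c'_i=\varDelta x_i\oplus\varDelta y_i\oplus\varDelta z_i$ by \eqref{eq:initial diff equation on bits}. The chain rule then gives $\Pr(M)=\prod_{i=0}^{n-1}\Pr(E_i\mid E_0\cap\cdots\cap E_{i-1})=\prod_{i=0}^{n-1}P_M^i$, which is exactly the first displayed equality once we recall that $c_0=c'_0=0$ makes $E_0$ a deterministic, difference-only condition, so that $P_M^0=1_{\varDelta z_0=\varDelta x_0\oplus\varDelta y_0}$; and by validity (Theorem~\ref{thm:verification of modular addition}(1)) this first factor equals $1$.

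Next I would fix $0\le i\le n-2$ and evaluate $P_M^{i+1}=\Pr(E_{i+1}\mid E_0\cap\cdots\cap E_i)$. Feeding the carry recursion \eqref{recur_carry} and the substitutions $x'_i=x_i\oplus\varDelta x_i$, $y'_i=y_i\oplus\varDelta y_i$, $z'_i=z_i\oplus\varDelta z_i$ into $E_{i+1}$ yields Eq.~\eqref{eq:diff equation on bits}, and a short rearrangement (using $z_i=x_i\oplus y_i\oplus c_i$) puts it in the affine form
\[
(\varDelta x_i\oplus\varDelta z_i)\,x_i\ \oplus\ (\varDelta y_i\oplus\varDelta z_i)\,y_i\ \oplus\ (\varDelta x_i\oplus\varDelta y_i)\,c_i\ =\ \mathrm{const},
\]
where $\mathrm{const}$ depends only on the fixed differences and $c_i$ is a function of $x_0,\dots,x_{i-1},y_0,\dots,y_{i-1}$. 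The structural point I would stress is that each of $E_0,\dots,E_i$ is likewise a relation among $x_0,\dots,x_{i-1},y_0,\dots,y_{i-1}$ only, so conditioning on $E_0\cap\cdots\cap E_i$ leaves the conditional law of $(x_i,y_i)$ uniform on $\mathbb{F}_2^2$ and independent of everything determined by the earlier bits, in particular of $c_i$.

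Then the count of factors follows by a two-case split. If $eq(\varDelta x_i,\varDelta y_i,\varDelta z_i)=1$ (Rows 1--2 of Table~\ref{tab:diff_constraint}) all three coefficients above vanish, so $E_{i+1}$ is a constant condition; validity and Theorem~\ref{thm:verification of modular addition}(2) force it to be the trivial identity, hence $P_M^{i+1}=1$. If $\neg eq(\varDelta x_i,\varDelta y_i,\varDelta z_i)=1$, then $\varDelta x_i,\varDelta y_i,\varDelta z_i$ are not all equal, so at least one of $\varDelta x_i\oplus\varDelta z_i$, $\varDelta y_i\oplus\varDelta z_i$ is nonzero (both being zero would force all three equal); thus the equation genuinely constrains $x_i$ or $y_i$ — say $x_i$ equals some function of $y_i$ and $c_i$ — and by the uniformity and independence just noted $P_M^{i+1}=1/2$, matching Rows 3--8 of Table~\ref{tab:diff_constraint} and the worked example after it. Multiplying the factors, $\Pr(M)=1_{\varDelta z_0=\varDelta x_0\oplus\varDelta y_0}\cdot\prod_{i=0}^{n-2}P_M^{i+1}=2^{-\#\{\,i\in\{0,\dots,n-2\}:\ \neg eq(\varDelta x_i,\varDelta y_i,\varDelta z_i)=1\,\}}=2^{-\sum_{i=0}^{n-2}\neg eq(\varDelta x_i,\varDelta y_i,\varDelta z_i)}$.

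The one place that needs genuine care — and that I would write out in full — is the propagation of the uniform-independence property through the conditioning: one must verify that ``success on bits $0,\dots,i$'' is a statement about $x_0,\dots,x_{i-1},y_0,\dots,y_{i-1}$ alone, so that the conditional distribution of $(x_i,y_i)$ is still the uniform product law and $c_i$ is independent of it. Granting that, the remainder is the bookkeeping already performed for Table~\ref{tab:diff_constraint}, together with the elementary fact that a non-degenerate affine equation over $\mathbb{F}_2$ in a variable that is uniform and independent of the remaining variables is satisfied with probability exactly $1/2$.
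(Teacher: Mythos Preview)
Your proposal is correct and follows essentially the same route as the paper: a bit-by-bit chain-rule decomposition $\Pr(M)=\prod_i P_M^{i}$, with each factor read off from the differential equation \eqref{eq:diff equation on bits} and the case analysis of Table~\ref{tab:diff_constraint}, relying on the fact that conditioning on $E_0\cap\cdots\cap E_i$ only constrains $(x_0,\dots,x_{i-1},y_0,\dots,y_{i-1})$ so that $(x_i,y_i)$ remains uniform. Your rearrangement into the single affine form $(\varDelta x_i\oplus\varDelta z_i)x_i\oplus(\varDelta y_i\oplus\varDelta z_i)y_i\oplus(\varDelta x_i\oplus\varDelta y_i)c_i=\mathrm{const}$ is a clean way to collapse Rows~3--8 into one argument, but it is a cosmetic streamlining of the same computation the paper carries out case by case.
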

	
	\subsubsection{Non-uniformity of outputs under a differential propagation}
	\label{subsubsec:uneven output}   %

	Under a differential propagation $M$, the differential constraints on input bits of the modular addition function will cause non-uniformity of some output bits. 
	This further results in non-uniform distribution of the output vectors following $M$, for example, certain vectors may never appear as the outputs. 
	In this part, we further analyze the difference Eq. \eqref{eq:diff equation on bits} and Table \ref{tab:diff_constraint}  with a focus on the distribution of the output bit $z_i$ for $0\leq i\leq n-2$ under the differential $M:(\varDelta x, \varDelta y) \xrightarrow{\boxplus}\varDelta z$.  We divide into the following four cases.
	
	\begin{enumerate}
		\item When the differences lie in Row 1 of Table \ref{tab:diff_constraint}, there is no constraint on the output bit, which means $z_i$ is uniform random.
		\item When the differences lie in Row 8 (resp., Row 7) of Table \ref{tab:diff_constraint}, the constraints associate the output bit to one uniform random input bit, namely, $z_i = y_i$ (resp., $z_i = x_i $), and next carry bit to the current bit, namely, $c_{i+1} = c_i = x_i$ (resp., $c_{i+1} = c_i = y_i$). Obviously, $z_i$ is also uniform random.
		\item When differences lie in Row 5 (resp., Row 6) of Table \ref{tab:diff_constraint}, the output bit $z_i = \neg y_i$ (resp., $z_i = \neg x_i $) is uniform random, but the constraints on the carry bit, $c_{i+1} = \neg z_i $, may associate $z_i$ to a higher bit of output, which depends on the differential constraints on higher bits.
		\item When the differences lie in Row 3 (resp., Row 4) of Table \ref{tab:diff_constraint}, the constraints associate the output bit $z_i$ to the carry bit $c_i$ (resp., $\neg c_i$). In this case, distribution of $z_i$ is determined by lower bits, which can be non-uniform. Besides, $z_i$ may be associated to certain lower output bits, which depends on the differential constraints on lower bits.
	\end{enumerate}
	
	Since the output bit $z_i$ in the first three cases is uniform random, we focus on the fourth case. 
	We only discuss the case for Row 3 of Table \ref{tab:diff_constraint} since the discussion for Row 4 is similar. In this case we have $z_i = c_i$.
	
	\begin{table}[tp] %
		\begin{center}
			\begin{minipage}{300pt}
				\caption{Differential constraints associate carry bits: $ z_i = c_i = c_{i-1} = \cdots = c_{j+1}$}
				\label{tab:ci-eq-cj}
				\setlength{\tabcolsep}{13pt}%
				\renewcommand{\thefootnote}{\fnsymbol{footnote}}
				\begin{tabular}{cccccc}
					\toprule
					$M$	    &$i+1$\footnotemark[1]  &$i$    &$i-1$\footnotemark[2]  &$\quad \cdots$\footnotemark[2]   &$j+1$\footnotemark[2]  \\
					\midrule
					$\varDelta x$   &$\cdot$    &$a$     &$\cdot$   &$\quad \cdots$   &$\cdot$ \\
					$\varDelta y$   &$\cdot$    &$a$     &$\cdot$   &$\quad \cdots$   &$\cdot$  \\
					$\varDelta z$   &$\cdot$    &$\neg a$&$a$       &$\quad \cdots$   &$a$  \\
					$\varDelta c$   &$a$        &$\neg a$&$\neg a$  &$\quad \cdots$   &$\neg a$   \\ 
					\midrule
					Constr &$\cdot$&  $z_i=c_{i}$ &\multicolumn{3}{c}{$c_{i} = c_{i-1} =\ \cdots\ = c_{j+2} = c_{j+1}$}    \\
					\botrule		
				\end{tabular}
				\footnotetext{Note: $a$ iterates over $\{0,1\}$. $M$ is valid and  $0\leq j+1 \leq i-1$, when $j+1 = 0$, $a$ satisfy $\varDelta c_0 = 0$.} 
				\footnotetext[*]{ The differences of the $(i+1)$-th bit satisfy $\varDelta x_{i+1} \oplus \varDelta y_{i+1} \oplus \varDelta z_{i+1} =\varDelta c_{i+1} =  a$. }
				\footnotetext[\dagger]{ For any $j+1\leq t \leq i-1$, we set $\varDelta c_{t+1} = \neg a$, $\varDelta z_t = a$, and $(\varDelta x_t, \varDelta y_t) \in \{(\neg a, a), ( a,\neg a)\} $, resulting in  the constraint $c_{t+1} = c_{t} = x_{t}$ (or $= y_{t}$).   }
			\end{minipage}    
		\end{center}
	\end{table}

	Firstly, we study the case  $z_{i} = c_{i} = c_{i-1} = \cdots = c_{j+1}$ where $0\leq j+1\leq i-1 $, i.e., the differential constraints associate $z_i$ with non-adjacent lower bits by connecting some consecutive bits of the carry vector.
	As shown in Table \ref{tab:ci-eq-cj}, for any $j+1 \leq t \leq i-1$, the differences of the $t$-th and the $(t+1)$-th bit lie in Row 7 or Row 8 of Table \ref{tab:diff_constraint} with the constraint $c_{t+1} = c_{t}$ ($= x_{t}$ or $y_{t}$). Then the output bit $z_t$ is uniform and we have 
	\[c_{i} = c_{i-1} = \cdots = c_{j+2} = c_{j+1}.\] 
	The constraints make $z_i$ related to $c_{j+1}$ and independent of the output bits $z_{i-1}, z_{i-2}, \ldots, z_{j+1}$.
	Moreover, it can be learned from Table \ref{tab:ci-eq-cj} that probabilities of differential propagation from the $(j+1)$-th to the $(i-1)$-th bit  are all $1/2$.
	Therefore, the more consecutive carry bits are connected by differential constraints, the lower the overall differential probability is.
	
	In the following part, starting from $z_i= c_{j+1}$ for $0\leq j+1\leq i$, we introduce all  cases of non-uniformity of $z_i$ by analyzing differential constraints on $c_{j+1}$. 
	From Table \ref{tab:diff_constraint}, there are totally four cases, namely,
	\begin{equation} 
		\label{eq: c_j+1}
		c_{j+1} = 
		\begin{cases}
			0 & j+1 = 0, \varDelta c_0 = 0,\\ 
			x_j = y_j & j\geq 0,\ {\rm Row\ 3\ of\ Table\ 2},\\
			\neg z_j  & j\geq 0,\ {\rm Row\ 4,\ 5,\ or\ 6\ of\ Table\ 2}, \\
			f_{carry}(x_{j} ,y_{j} ,c_{j}) & j\geq 0,\ {\rm Row\ 1\ of\ Table\ 2},
		\end{cases}
	\end{equation}
	where conditions of each case except case 1 are on the differences of the $j$-th bit.
	We first find that in the second case where differences of the $j$-th bit lie in Row 3 of Table \ref{tab:diff_constraint}, we have $z_{i} = c_{j+1}= x_{j}= y_{j}$, thus $z_i$ is uniform random. 
	We also remark that $z_j$ could be non-uniform due to $z_j = c_j$ in this case, and the analysis of it is the same as (\ref{eq: c_j+1}) and will eventually fall into the other three cases.
	
	The other three cases are the sources of non-uniformity of $z$.

	\textbf{One output bit is restricted to a certain value.} This is the first case of (\ref{eq: c_j+1}), in which the differential constraints restrict $z_i$ to a fixed value, that is, $z_i = c_i = \cdots = c_0 = 0$, so the output is non-uniform. 

	\textbf{Two output bits are restricted by differential constraints.} This is the third case of (\ref{eq: c_j+1}), in which the differential constraints associate $z_i$ to a lower bit of the output. As shown in Table \ref{tab:zi-not-eq-zj}, when the differences of the $j$-th bit lie in Row 4, Row 5 or Row 6 of Table \ref{tab:diff_constraint} with the constraint $c_{j+1} = \neg z_{j}$, two output bits $z_i$ and $z_j$ are connected by carry bits, that is, $z_i= \neg z_{j}$.
	This kind of differential constraint relating the value of two output bits may make the distribution of the output $z$ non-uniform, as shown in Example \ref{example: z_i not eq z_j}. 
	In particular, during our search for differential trails, we find the case that two adjacent output bits are related, namely, $z_i = \neg z_{i-1}$ or $z_i = z_{i-1}$, often occurs. 
	\begin{table}[tp] %
		\begin{center}
			\begin{minipage}{315pt}
				\caption{Differential constraints restrict output bits: $z_i = c_{i} = \cdots = c_{j+1} = \neg z_{j}$} 
				\label{tab:zi-not-eq-zj}
				\renewcommand{\thefootnote}{\fnsymbol{footnote}}
				\begin{tabular}{ccccccccccc}
					\toprule
					$M$				&$i+1$\footnotemark[1]     &$i$        &$i-1$\footnotemark[2]   &$\ \cdots$\footnotemark[2]    & $j+1$\footnotemark[2]   &\multicolumn{5}{c}{$j$ \footnotemark[3] }  \\ 
					\midrule
					$\varDelta x$   &$\cdot$     &$a$      &$\cdot$    &$\ \cdots$&$\cdot$ &$\neg a$    & \multirow{4}{*}{or}  & $\ \ a$    &  \multirow{4}{*}{or}   & $a$\\
					$\varDelta y$   &$\cdot$     &$a$      &$\cdot$    &$\ \cdots$&$\cdot$   &$\ \ a $     &  & $\neg a$   &   & $a$   \\
					$\varDelta z$   &$\cdot$     &$\neg a$ &$a$        &$\ \cdots$&$a$       &$\neg a$     &  & $\neg a$   &   & $\neg a$ \\
					$\varDelta c$   &$a$         &$\neg a$ &$\neg a$   &$\ \cdots$&$\neg a$  &$\ \ a $     &  & $\ a$      &   & $\neg a$               \\ 
					\midrule
					\multirow{2}{*}{Constr} &$\cdot$ &$z_i = c_i$ &\multicolumn{3}{c}{$c_{i} = c_{i-1}= \cdots =c_{j+1}$}    &\multicolumn{5}{c}{$c_{j+1} = \neg z_{j}$ }   \\
					&  & & &  &  & \multicolumn{5}{c}{$c_{j} = \neg y_{j}$ or $\neg x_{j}$ or $\neg z_{j}$}  \\ 
					\botrule		
				\end{tabular}
				\footnotetext{ Note: $a$ iterates over $\{0,1\}$. $M$ is valid and $0\leq j \leq i-1 \leq n-3 $.} 
				\footnotetext[*]{ The differences of the $(i+1)$-th bit satisfy $\varDelta x_{i+1} \oplus \varDelta y_{i+1} \oplus \varDelta z_{i+1} =\varDelta c_{i+1} =  a$. }
				\footnotetext[\dagger]{ When $j< i-1$, for any $j+1\leq t \leq i-1$, differences on the $t$-th bit are the same as Table \ref{tab:ci-eq-cj} and $c_t = c_{t-1}$ holds. When $j = i-1$, this part is omitted and we have $z_i = \neg z_{i-1}$. }
				\footnotetext[\ddagger]{ At the $j$-th bit, since $c_{j+1} = \neg a$, there are totally three cases for $c_{j+1} = \neg z_j$. When $j = 0$, the value of $a$ must satisfy $\varDelta c_0 = 0$. }
			\end{minipage}    
		\end{center}
	\end{table}	
	
	\begin{figure}[tp]  %
		\centering
		\includegraphics[ width=1\textwidth]{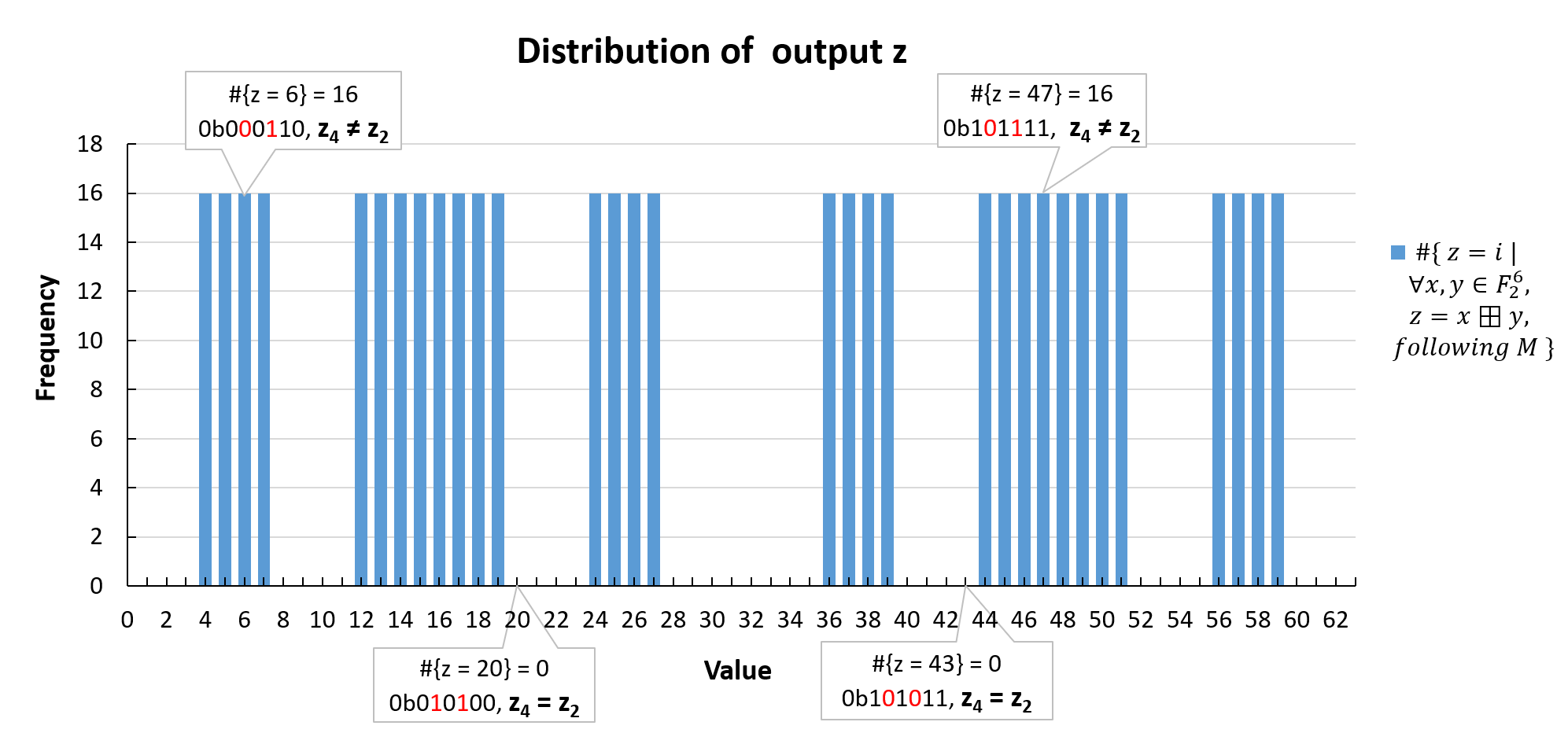}
		\caption{The uneven output caused by two restricted adjacent bits }
		\label{fig:uneven-z-eq-z}
	\end{figure}
	\begin{example} \label{example: z_i not eq z_j}
		Given the differential propagation $M$ with the differences
		\begin{equation}
			\nonumber
			\begin{aligned}
				\varDelta x = \rm 0b000000 ,\\
				\varDelta y = \rm 0b001100 ,\\
				\varDelta z = \rm 0b010100 ,\\
				\varDelta c = \rm 0b011000 ,\\
			\end{aligned}
		\end{equation}	
		the differential probability is $\Pr(M) = \Pr(\neg x_2 = c_2)\times \Pr(y_3 = c_3) \times \Pr(x_4 = y_4) = 2^{-3}$, and the constraints on the outputs are $z_2 = \neg y_2$, $z_3 = x_3$, and $z_4 = c_4 = c_3 = \neg z_2$. So the output bits are independent and uniform random except the case $z_4=\neg z_2$.
		The output distribution is shown in Fig. \ref{fig:uneven-z-eq-z}, where the horizontal and vertical axis are output values and their frequencies, respectively. 
		Obviously, the output $z$ is non-uniform  and it can be found that the frequency of the value satisfying $z_4 = z_2$ is always 0, such as $z = 20 $ and $z=43$, and other values with $z_4 = \neg z_2$ have the same frequency. 
	\end{example}
	
	\begin{table}[tp]
		\begin{center}
			\begin{minipage}{\textwidth}
				\caption{Differential constraints associate the output bit to carry bits}
				\label{tab:zi-eq-cj}
				\setlength{\tabcolsep}{3.2pt}%
				\renewcommand{\arraystretch}{1.3}%
				\renewcommand{\thefootnote}{\fnsymbol{footnote}}
				\begin{tabular*}{\textwidth}{ccccccccc}
					\toprule
					$M$	&$i+1$    &$i$   &$i-1 \cdots j+1$\footnotemark[1]   &$j$  &$\cdots$\footnotemark[2]&$k+1$ &$k$ (case 3)   &$k$ (case 4)\footnotemark[3]   \\ 
					\midrule
					$\varDelta x$   &$\cdot$    &$a$      &$\cdots$   &$\neg a$&$\cdots$&$\neg a$   &$\neg a$   &$\neg a$   \\
					$\varDelta y$   &$\cdot$    &$a$      &$\cdots$   &$\neg a$&$\cdots$&$\neg a$   &$\neg a$   &$a$        \\
					$\varDelta z$   &$\cdot$    &$\neg a$ &$a$        &$\neg a$&$\cdots$&$\neg a$   &$a$        &$\neg a$   \\
					$\varDelta c$   &$a$        &$\neg a$ &$\neg a$   &$\neg a$&$\cdots$&$\neg a$   &$a$        &$a$        \\ 
					\midrule
					\multirow{2}{*}{Constr} &          &\multicolumn{2}{c}{$z_i = c_i = \cdots = c_{j+1}$}                 &$c_{j}$&$\cdots$&$c_{k+1}$        &$c_{k+1} = x_k =  y_{k}$ & $c_{k+1} = x_{k} =\neg z_k$\\
					&          &                  &  &\multicolumn{3}{c}{Carry function}   &$z_{k} = c_{k}$ &$\neg y_k = c_{k}$\\
					\botrule												
				\end{tabular*}
				\footnotetext{ Note: $a$ iterates over $\{0,1\}$. $M$ is valid and $0\leq k+1 \leq j \leq i-1\leq n-2$. }
				\footnotetext[*]{ When $j < i-1$, for any $j+1\leq t \leq i-1$, differences on the $t$-th bit are the same as Table \ref{tab:ci-eq-cj} and $c_t = c_{t-1}$ holds. When $j = i-1$, this part is omitted. }
				\footnotetext[\dagger]{ For any $k+1 \leq t \leq j$, the differences of the $t$-th bit satisfy Row 1 of Table \ref{tab:diff_constraint}. There are no differential constraints on these bits, only carry functions. When $k+1 = 0$, the value of $a$ must satisfy $\varDelta c_0 = 0$. }
				\footnotetext[\ddagger]{ In the case 4 of (\ref{eq:c_k+1}), there are two other cases where the $k$-th bit's difference results in $ c_{k+1} = \neg z_{k}$, similar to the $j$-th bit of Table \ref{tab:zi-not-eq-zj}. When $k = 0$, the value of $a$ must satisfy $\varDelta c_0 = 0$. }
			\end{minipage}
		\end{center}
	\end{table}

	\textbf{One output bit is associated to multiple lower bits.} This is the fourth case  of (\ref{eq: c_j+1}), in which $z_i = c_{j+1} = f_{carry}(x_j,y_j,c_j)$. This case has not been discovered before in the literature. 
	We have $z_i = x_{j} y_{j} \oplus (x_{j} \oplus y_{j}) c_{j}$ and $z_{j} = x_{j} \oplus y_{j} \oplus c_{j}$. Assuming that $ \Pr(c_{j} = 0) = q$, we have $\Pr( z_i = 0 ) = (1+2q)/4$, %
	\begin{equation} \nonumber
		\left\{ 
		\begin{aligned}
			&\Pr( z_i = 0 \vert z_{j} = 1 ) = \frac{1+q}{2},\\
			&\Pr( z_i = 1 \vert z_{j} = 1) = \frac{1-q}{2},
		\end{aligned}
		\right.   \quad {\rm and} \quad
		\left\{ 
		\begin{aligned}
			&\Pr( z_i = 0 \vert z_{j} = 0 ) = \frac{q}{2},\\
			&\Pr( z_i = 1 \vert z_{j} = 0 ) = \frac{2-q}{2}.
		\end{aligned}
		\right.
	\end{equation}
	Obviously, regardless of the distribution of $c_{j}$, $z_i$ is not independent of $z_j$ and the correlation between them will make the distribution of the output $z$ non-uniform.
	So we still regard the $i$-th bit as the source of non-uniformity, even when $q = 1/2$ due to $c_{j} = x_{j-1}$ (or $=y_{j-1}$) and $\Pr( z_i = 0 ) = 1/2$. 

	Generally, for $0\leq k+1 \leq j$, assume that differences of the $(k+1)$-th to $j$-th bit lie in Row 1 of Table \ref{tab:diff_constraint} with no constraints, and if $k\geq 0$, the $k$-th bit is the first one which has differential constraints in the lower bits, as shown in Table \ref{tab:zi-eq-cj}.
	In this case, the output bit $z_i$ are restricted by a $(j-k)$-bit carry function with the initial carry bit $c_{k+1}$, that is,
	\begin{equation}
		\label{eq: c_j - c_k+1}
		z_i = c_i = \cdots = c_{j+1} = f_{carry} \left(x[j,k+1] , y[j,k+1], c_{k+1}\right),
	\end{equation}
	For another scenario $z_i = \neg c_i$, the differential is similar to Table \ref{tab:zi-eq-cj} except that $\varDelta c_{i+1} = \neg a$. According to the carry function, $z_i$ is not independent of $\left( z_j,...,z_{k+1}\right)$, in other words, associated to those bits and their correlation makes the output $z$ uneven. 
	The distribution of $c_{k+1}$, similar to \eqref{eq: c_j+1}, can be divided into the following four cases,
	\begin{equation}
		\label{eq:c_k+1}
		c_{k+1} = 
		\begin{cases}
			0 & k+1 = 0,\\ 
			c_{k}\ (= x_{k}\ {\rm or}\ y_{k}) & k>0,\ {\rm  Row\ 7\ or\ Row\ 8\ of\ Table\ 2},\\
			x_k = y_k & k\geq 0,\ {\rm  Row\ 3\ of\ Table\ 2},\\
			\neg z_k  & k\geq 0,\ {\rm Row\ 4,\ 5,\ or\ 6\ of\ Table\ 2}, \\
		\end{cases}
	\end{equation}
	where conditions of each case except case 1 are for the differences of the $k$-th bit.
	Consider the second case, we have constraints $c_{k+1} = c_{k}= x_{k}$ (resp., $c_{k+1} = c_{k}= y_{k}$) and $z_k = y_k$ (resp., $z_k =x_k$), thus $z_k$ is uniform random and independent of $z_i$. Moreover, since $c_k = f_{carry}(x_{k-1},y_{k-1},c_{k-1})$, $z_i$ is related to more lower bits, namely, $(z_j,\ldots,z_{k+1},z_{k-1})$, and the analysis on $c_{k-1}$ is the same as (\ref{eq:c_k+1}). 

	For the third case $c_{k+1} = x_k = y_k$ as shown in Table \ref{tab:zi-eq-cj}, we have $\Pr(z_i = 0) = 1/2$ since $\Pr(c_{k+1}= 0) = 1/2$, and $z_i$ is related to $\left( z_j,...,z_{k+1}\right)$ but independent of $z_{k}$ due to the constraint $z_{k} = c_{k}$. So the output is uneven.
	
	As for case 1 where $k+1 =0$ and case 4 where the differences of the $k$-th bit ($k\geq 0$) lie in Row 4, Row 5, or Row 6 of Table \ref{tab:diff_constraint}, the differential constraints associate $z_{i}$ to $\left( z_j,...,z_{1}, z_0\right)$ and $\left( z_j,...,z_{k+1}, z_{k}\right)$ with
	\begin{align}
		&z_i = c_i = \cdots = c_{j+1} = f_{carry} \left(x[j,0] , y[j,0], c_0\right), \label{eq: z_i = c_j+1 = c_0} \\
		&z_i = c_i = \cdots = c_{j+1} = f_{carry} \left(x[j,k+1] , y[j,k+1], \neg z_{k}\right),\label{eq: z_i = c_j+1 = z_k} 
	\end{align}
	respectively. The distribution of the output $z$ in both cases is non-uniform random, but they differ from case 3 in that certain values of the output will never appear when following the differential, as shown in Proposition \ref{prop: neg z_i = c_j}. Let $i = j+1$ and $k = 0$, we give the simplest and most common example to show the uneven output. 
	\begin{proposition}
		\label{prop: neg z_i = c_j}
		When the differential constraints of $M$ are Eq. (\ref{eq: z_i = c_j+1 = c_0}), the output satisfies that if $z_j = z_{j-1} = \cdots = z_{0} = 1$, then $z_i = 0$ ($z_i = 1$ for another scenario $z_i = \neg c_i$). 
		When the differential constraints of $M$ are Eq. (\ref{eq: z_i = c_j+1 = z_k}), the output satisfies that if $z_j = z_{j-1} = \cdots = z_{k+1} = z_{k}$, then $z_i = \neg z_{j}$ ($z_i = z_{j}$ for another scenario $z_i = \neg c_i$).
	\end{proposition}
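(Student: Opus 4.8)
The plan is to strip away the differential bookkeeping and reduce the claim to an elementary fact about carry propagation in a ripple-carry adder, then close it with a one-line induction on the bit index in each of the two cases. Throughout I will use the bit relations $z_t=x_t\oplus y_t\oplus c_t$ and $c_{t+1}=x_ty_t\oplus(x_t\oplus y_t)c_t$ of \eqref{recur_carry}, together with the rewriting $x_t\oplus y_t=z_t\oplus c_t$. By the case hypotheses recorded in Table~\ref{tab:zi-eq-cj} and its footnotes, the bits traversed by the carry chain---bits $0,\ldots,j$ in the situation of \eqref{eq: z_i = c_j+1 = c_0}, and bits $k+1,\ldots,j$ in that of \eqref{eq: z_i = c_j+1 = z_k}---lie in Row~1 of Table~\ref{tab:diff_constraint}, so there the carry vector genuinely obeys \eqref{recur_carry} with no extra constraint; the only anchoring conditions are $c_0=0$ in the first case and $c_{k+1}=\neg z_k$ (from Row~4, 5, or 6 of Table~\ref{tab:diff_constraint} at bit $k$) in the second. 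Consequently $z_i$ equals $c_{j+1}$ in the scenario $z_i=c_i$, and $z_i=\neg c_{j+1}$ in the companion scenario $z_i=\neg c_i$, so it suffices to pin down $c_{j+1}$.

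For the first case, assume $z_j=z_{j-1}=\cdots=z_0=1$; I claim $c_t=0$ for all $0\le t\le j+1$. This is immediate by induction on $t$: the base $c_0=0$ is the anchoring condition, and if $c_t=0$ with $0\le t\le j$ then $x_t\oplus y_t=z_t\oplus c_t=1$, hence $x_ty_t=0$, hence $c_{t+1}=x_ty_t\oplus(x_t\oplus y_t)c_t=0$. Thus $c_{j+1}=0$, giving $z_i=c_{j+1}=0$ in the scenario $z_i=c_i$ and $z_i=\neg c_{j+1}=1$ in the scenario $z_i=\neg c_i$. For the second case, let $b\in\mathbb{F}_2$ be the common value $z_k=z_{k+1}=\cdots=z_j$; I claim $c_t=\neg b$ for all $k+1\le t\le j+1$. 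Again by induction: the base $c_{k+1}=\neg z_k=\neg b$ is the anchoring condition, and if $c_t=\neg b$ with $k+1\le t\le j$ then $x_t\oplus y_t=z_t\oplus c_t=b\oplus\neg b=1$, hence $x_ty_t=0$, hence $c_{t+1}=\neg b$. Thus $c_{j+1}=\neg b$, giving $z_i=c_{j+1}=\neg b=\neg z_j$ in the scenario $z_i=c_i$ and $z_i=\neg c_{j+1}=b=z_j$ in the scenario $z_i=\neg c_i$.

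I do not anticipate a genuine obstacle here: once the reduction is in place the argument is two short inductions. The points that need care are (i) justifying that the intermediate bits of the carry chain carry only the plain carry function, so that the inductive step is legitimate---this is exactly what the case assumptions tabulated in Table~\ref{tab:zi-eq-cj} guarantee; (ii) treating the two scenarios $z_i=c_i$ and $z_i=\neg c_i$ uniformly, since they differ only by a final negation; and (iii) the degenerate index ranges, such as $j=k+1$ (a one-bit chain), $i=j+1$, $k=0$, or the specialization $i=j+1$, $k=0$ used right after the statement, all of which are covered by the same induction with an empty or length-one chain. Conceptually the proposition records the familiar fact that a maximal run of all-ones output bits in a ripple-carry addition absorbs the incoming carry, forcing the carry out of the block to be the complement of the carry into it; here that incoming carry is fixed (to $0$ or to $\neg z_k$) by the differential constraints, which is precisely what makes $z_i$ deterministic on the sub-event considered.
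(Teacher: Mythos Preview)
Your proposal is correct and follows essentially the same route as the paper's own proof: both arguments reduce the claim to showing that the carry is constant across the block (equal to $0$ in the first case and to $\neg z_k$ in the second) and then read off $z_i$ from $z_i=c_{j+1}$ or $z_i=\neg c_{j+1}$. Your version is simply more explicit about the induction step, spelling out $x_t\oplus y_t=z_t\oplus c_t=1\Rightarrow x_ty_t=0\Rightarrow c_{t+1}=c_t$, whereas the paper states this as a one-line appeal to ``the carry function of several consecutive bits.''
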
 
	\begin{proof}	
		For Eq. (\ref{eq: z_i = c_j+1 = c_0}), since $c_0 = 0$, it can be derived from the carry function of several consecutive bits that if $z_{j} = \cdots = z_{1} = z_0 = 1$, the carry bits on the corresponding bits satisfy $ c_{j+1} = c_{j} = \cdots = c_{1} = c_{0} = 0$. Therefore, there is $ z_i = 0$ if $ z_i = c_{j+1}$ and $ z_i = 1$ if $ z_i = \neg c_{j+1}$.
		
		For Eq. (\ref{eq: z_i = c_j+1 = z_k}), the proof is similar. If $z_{j} = \cdots = z_{k+1} = z_k = 0 $ (or $= 1$), the carry bits on the corresponding bits satisfy $ c_{j+1} = c_{j} = \cdots = c_{k+1} = \neg z_{k} = 1 $ (or $= 0$). So $c_{j+1} = \neg z_{j}$ under those conditions, and there is $ z_i = \neg z_{j}$ if $ z_i = c_{j+1}$ and $ z_i = z_{j}$ if $ z_i = \neg c_{j+1}$. 
	\end{proof}
	
	Note that for the scenario consists of case 4 and case 2, Proposition \ref{prop: neg z_i = c_j} needs some modification. Assume that there is one bit in case 4 (Eq. (\ref{eq: z_i = c_j+1 = z_k})), for example, the $t$-th bit ($k+1 \leq t < j$), whose differential constraint is $c_{t+1} = c_{t}$, then $z_i$ is associated to $\left( z_j,...,z_{t+1}, z_{t-1},\ldots, z_{k}\right)$, and the modification is if $z_j = \cdots =z_{t+1}= z_{t-1}= \cdots = z_{k}$, then $z_i = \neg z_{j}$. The same applies to the combination of case 2 and case 1.
	
	\begin{example}
		Given the differential propagation $M$ with the difference:
		\begin{equation}
			\nonumber
			\begin{aligned}
				\varDelta x = \rm 0b010000, \\
				\varDelta y = \rm 0b010000, \\
				\varDelta z = \rm 0b000000, \\
			\end{aligned}
		\end{equation}
		the differential probability is $\Pr(M) = \Pr(x_4 = \neg y_4) =1/2$, and the constraints on output is $ z_4 = \neg c_4 = \neg f_{carry}(x[3,0],y[3,0],c_0)$. 
		The output distribution is shown in Fig. \ref{fig:uneven-z-eq-c}, and the distribution of each output bit except $z_4$ is shown in Table \ref{tab:the bits of uneven bits}. 
		It can be found from Table \ref{tab:the bits of uneven bits} that all bits of the output are independent of each other and have uniform random values, except for $z_4$. 
		From Fig. \ref{fig:uneven-z-eq-c}, it can be learned that:
		\begin{enumerate}
			\item The output distribution is uneven.
			\item The output has the same distribution on the intervals [0,31] and [32,63], indicating that $z_5$ is independent of other bits.
			\item From the distribution on [0,15] and [16,31], it is obvious that the probability of  $z_4 = 0$ changes with the values of the lower four bits. For example, $z=6$ and $z = 22$ have different frequencies, which means $\Pr(z_4 = 0) \neq \Pr(z_4 = 1)$ when $(z_3,z_2,z_1,z_0) = (0,1,1,0)$.
			Therefore, $z_4$ related to $(z_3,z_2,z_1,z_0)$ is non-uniform random.
			\item The frequency of $z = 15$ (0b001111) and  $z = 47$ (0b101111) is 0, which means that if $z_0=z_1=z_2=z_3=1$, the value of $z_4$ must be 1.
		\end{enumerate}
	\end{example}
	\begin{figure}[tp]  %
		\centering
		\includegraphics[ width=1\textwidth]{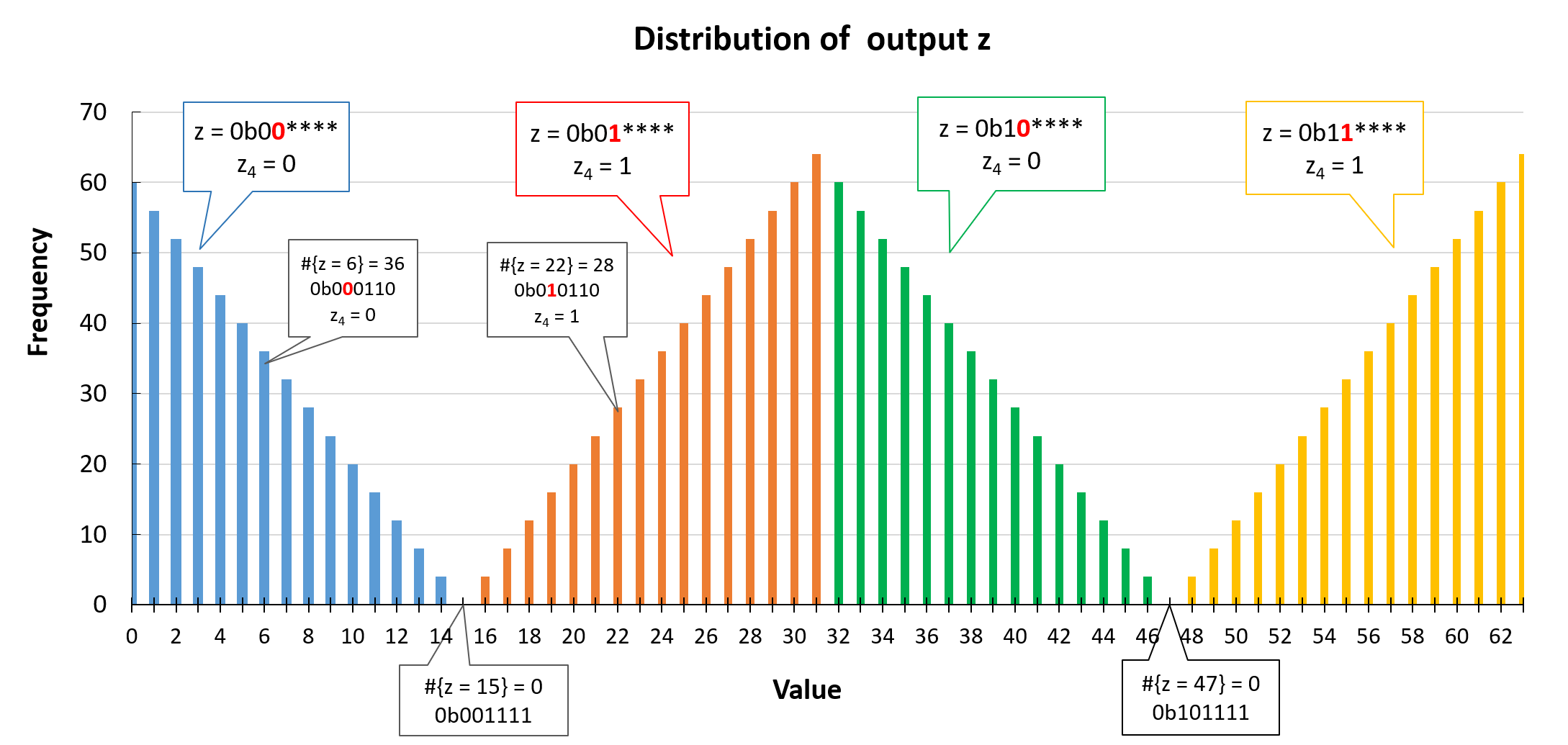}
		\caption{The uneven output of the modular addition following differential $M$}
		\label{fig:uneven-z-eq-c}
	\end{figure}
	\begin{table}[tp]
		\begin{center}
			\begin{minipage}{290pt}
				\caption{The probability that each bit of the output takes a value of zero }
				\label{tab:the bits of uneven bits}
				\begin{tabular}{cccccc}
					\toprule
					Bit $i$	& 5    & 3    & 2    & 1    & 0    \\ 
					\midrule
					$\#\{z: (\varDelta x,\varDelta y)\xrightarrow{\boxplus}\varDelta z, z_i = 0 \}$& 1024 & 1024 & 1024 & 1024 & 1024 \\ 
					$\Pr(z_i = 0)$& 0.5  & 0.5  & 0.5  & 0.5  & 0.5  \\\botrule
				\end{tabular}
			\end{minipage}
		\end{center}
	\end{table}
	
	In a summary, the uneven output of the addition following a differential can be divided into the following three part: $z_i  = c_i= \cdots = c_0 = 0$, $z_i = c_i= \cdots = c_j =\neg z_j$, and $z_i = c_i= \cdots = c_j = f_{carry}(x[j,k+1],y[j,k+1], c_{k+1})$
	where $0 \leq k < j \leq i-1$ and the cases of $z_i = \neg c_i$ are similar.
	Therefore, it is the differential constraints $z_i = c_i$ and $\neg z_i= c_i$ that cause the distribution of the output uneven, and we call the output bit with these two constraints as the non-uniform random bit.
	
	\begin{lemma}
		\label{lemma: uneven output}
		If the differences of a valid differential $M: (\varDelta x,\varDelta y)\xrightarrow{\boxplus}\varDelta z$ satisfy that $\exists\ 0\leq i < n-1$ and $a \in \{0,1\}$, $(\varDelta x_i, \varDelta y_i, \varDelta z_i) = (a,a,\neg a)$, then the output state is uneven. %
	\end{lemma}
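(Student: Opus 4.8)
The plan is to reduce the lemma to a statement about the single bit position $i$ and then invoke the bit-level analysis already carried out in Sect.~\ref{subsubsec:uneven output}. First I would use Table~\ref{tab:diff_constraint}: since $(\varDelta x_i,\varDelta y_i,\varDelta z_i)=(a,a,\neg a)$ and $i<n-1$, the value $\varDelta c_{i+1}=\varDelta x_{i+1}\oplus\varDelta y_{i+1}\oplus\varDelta z_{i+1}$ is defined, so bit $i$ matches Row~3 (if $\varDelta c_{i+1}=a$) or Row~4 (if $\varDelta c_{i+1}=\neg a$) of the table; both rows are consistent with $M$ valid (they carry $P_M^{i+1}=1/2$), and validity also makes the set of right pairs of $M$ non-empty, so it is meaningful to ask whether the distribution of the output $z=x\boxplus y$ over those pairs equals the uniform distribution on $\mathbb{F}_2^n$. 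On every right pair the constraint at bit $i$ is $z_i=c_i$ (Row~3) or $z_i=\neg c_i$ (Row~4); since the substitution $z_i\mapsto\neg z_i$ carries the second into the first, I may assume $z_i=c_i$. Because $c_i=f_{carry}\!\left(x[i-1,0],y[i-1,0],0\right)$ is a Boolean function of the bits strictly below $i$, it then suffices to show that, over the right pairs, $z_i$ is not simultaneously uniform and independent of $(z_{i-1},\ldots,z_0)$; that alone rules out $z$ being uniform on $\mathbb{F}_2^n$.

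To do so I would unwind $c_i$ downward with the carry recursion~\eqref{recur_carry}, exactly as in the discussion surrounding \eqref{eq: c_j+1}, \eqref{eq:c_k+1} and Table~\ref{tab:zi-eq-cj}. Scanning bits $i-1,i-2,\ldots$: a Row~1 bit is absorbed into the carry function, $c_{t+1}=f_{carry}(x_t,y_t,c_t)$; a Row~7 or Row~8 bit gives $c_{t+1}=c_t$; a Row~2 bit cannot occur since $M$ is valid; and the first bit $j$ in Rows~3, 4, 5 or 6 (or else reaching $c_0=0$ at the bottom) fixes the final shape of $z_i$. This leaves three terminal possibilities: (i) $z_i$ is a constant, when the scan bottoms out at $c_0=0$, so $\Pr(z_i=0)\in\{0,1\}\neq 1/2$; (ii) $z_i$ is correlated with a strictly lower output bit, either as $z_i=\neg z_j$ (Rows~4, 5, 6) or as $z_i=f_{carry}(x[j,k+1],y[j,k+1],c_{k+1})$ through a block of Row~1 bits, in which case the conditional probabilities from Sect.~\ref{subsubsec:uneven output}, $\Pr(z_i=0\mid z_j=0)=q/2$ and $\Pr(z_i=0\mid z_j=1)=(1+q)/2$ with $q=\Pr(c_j=0)$, differ for every $q$, and Proposition~\ref{prop: neg z_i = c_j} moreover pins certain output values to probability $0$ while validity keeps others positive; or (iii) the ``relay'' shape $z_i=x_j=y_j$ coming from Row~3 at bit $j$, where $z_i$ is itself uniform and independent of $(z_{i-1},\ldots,z_0)$ but the Row~3 constraint forces $z_j=c_j$, a fresh instance of the same situation at the strictly smaller index $j$. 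In cases (i) and (ii) the output is already non-uniform; in case (iii) I recurse on $j$, and since the active index strictly decreases the recursion terminates in case (i) or (ii) (at the latest at a bit pinned to a constant, such as $z_0=c_0=0$ when the relay reaches bit~$0$). The case $z_i=\neg c_i$ is handled by the substitution $z_i\mapsto\neg z_i$ made at the start, completing the argument.

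The bit-by-bit bookkeeping is routine, since it merely re-reads Tables~\ref{tab:diff_constraint} through~\ref{tab:zi-eq-cj}. The part that needs real care is the relay recursion of case~(iii): one must check that it is well founded (the active index strictly decreasing) and, more importantly, that every terminal configuration it can actually reach forces a biased or correlated output bit, paying particular attention to the low-order boundary bits, where the identities $c_0=0$ and $c_1=f_{carry}(x_0,y_0,0)=x_0y_0$ replace the generic carry relation. Carrying out that case analysis exhaustively is the bulk of the proof; with it in place, the statement that the output state is uneven follows at once.
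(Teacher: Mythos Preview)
Your proposal is correct and follows the same route as the paper: the paper does not give Lemma~\ref{lemma: uneven output} a standalone proof but states it as a summary of the case analysis already carried out in Sect.~\ref{subsubsec:uneven output}, and your plan is precisely a structured recap of that analysis, terminating in the same three sources of non-uniformity. The one place where you are more explicit than the paper is the recursion in your case~(iii): the paper only remarks that in the Row~3 relay ``the analysis of it is the same as~\eqref{eq: c_j+1} and will eventually fall into the other three cases,'' whereas you spell out the well-foundedness (strictly decreasing index) argument.
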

	
	The differences mentioned above are the basic cases that cause the uneven output, and a differential $M$ may includes several of them at the same time.
	When these uneven outputs is accepted as the input of another modular addition, it may cause inaccurate calculation of its differential probability and even make the difference to fail to propagate.

	\subsection{Differential properties of the consecutive modular addition}
	\label{subsec:consecutive modular addition}  %
	
	The consecutive modular addition is depicted in Fig. \ref{fig:CMA-0}. Two $n$-bit modular additions are $z = x \boxplus y $ and $v = z\boxplus u$ with differential trails $M_1: (\varDelta x,\varDelta y)\xrightarrow{\boxplus}\varDelta z$ and $M_2: (\varDelta z,\varDelta u)\xrightarrow{\boxplus}\varDelta v$ and carry vectors $c = x\oplus y \oplus z$ and $d = z\oplus u \oplus v$ respectively.
	The intermediate state $z$ connects two additions, and the difference $\varDelta z$ has only one value. The values of three inputs $x$, $y$ and $v$ of the CMA are assumed to be uniform random.
	As the output of the first addition is the input of the second, the probability of $M_2$ is $\Pr(M_2\vert M_1)$. There are $\Pr(M_2)= \Pr(M_2\vert M_1)$ in the following three cases.
	\begin{enumerate}
		\item Following $M_1$, the output of the first addition is uniform random.
		\item If there are non-uniform random bits of $z$ following $M_1$, they are not included in the differential equations (constraints on inputs) of $M_2$.
		\item If the non-uniform random bit $z_i$ are included in the differential equation of $M_2$, this equation must be $z_i \oplus u_j = 0$ or $z_i \oplus \neg u_j = 0$ ($j\leq i$) with the probability of 1/2 since $u_j$ is uniform random.
	\end{enumerate}

	In this section, we introduce two important effects of non-independence on differential propagation of the CMA.
	The most obvious one is impossible trails caused by the contradictory differential constraints imposed by $M_1$ and $M_2$ on the intermediate state.
	Another one is the error between the true probability of $M_2$ and that calculated under the independence assumption when the output $z$ of the first addition is uneven. %
	
	\subsubsection{Invalid differential trails caused by non-independence}
	\label{subsubsec:impossible difference} %
	
	According to Sect. \ref{subsubsec:uneven output}, there are three kinds of uneven output state of $M_1$, and when they are accepted as the input state of $M_2$, the conflict between the constraints of two modular additions will cause the differential propagation to be invalid. Three basic cases are introduced below. 
	
	The first case is that the output bit $z_i$ is restricted to a certain value by the constraint $z_i = c_i = \cdots = c_0 = 0$ (resp., $z_i = \neg c_0 = 1$) of $M_1$, while the differential equation of $M_2$ is $ z_i = \neg d_i = \cdots = \neg d_0 = 1$ (resp., $z_i = d_0 = 0$). It is obvious that the differential trail of the CMA is invalid.
	
	\begin{table}[tp] %
		\begin{center}
			\begin{minipage}{\textwidth}
				\caption{Conflicting constraints of the CMA} 
				\label{tab:conflicting constraints 1}
				\setlength{\tabcolsep}{3.0pt}%
				\renewcommand{\arraystretch}{1.2}%
				\renewcommand{\thefootnote}{\fnsymbol{footnote}} 
				\begin{tabular}{ccccccclccccccc}
					\cline{1-7} \cline{9-15}
					$M_1$ &$i+1$\footnotemark[1]   &$i$ &$i-1$\footnotemark[2] &$\cdots$&$j+1$\footnotemark[2]  &$j$\footnotemark[3]        &  &$M_2$  & $i+1$\footnotemark[1]   &$i$     &$i-1$\footnotemark[2] &$\cdots$&$j+1$\footnotemark[2]      &$j$\footnotemark[3]          \\ 
					\cline{1-7} \cline{9-15}
					$\varDelta x$   &$\cdot$    &$a$     &$\cdot$ &$\cdots$&$\cdot$ &$\neg a$       & 
					&$\varDelta z$  &{\color[HTML]{CB0000} $b$}       & {\color[HTML]{CB0000} $\neg a$} &{\color[HTML]{CB0000} $a$} & {\color[HTML]{CB0000} $\cdots$} &{\color[HTML]{CB0000} $a$} & {\color[HTML]{CB0000} $\neg a$}  \\
					$\varDelta y$   &$\cdot$   &$a$     &$\cdot$  &$\cdots$ &$\cdot$  &$a$&  
					&$\varDelta u$  &$\cdot$   &$a$     &$\cdot$  &$\cdots$ &$\cdot$  &$\neg a$ \\
					$\varDelta z$   &{\color[HTML]{CB0000} $b$}       &{\color[HTML]{CB0000}$\neg a$} &{\color[HTML]{CB0000} $a$} & {\color[HTML]{CB0000} $\cdots$} &{\color[HTML]{CB0000} $a$}  &{\color[HTML]{CB0000}$\neg a$}  &  
					&$\varDelta v$  &$\cdot$   &$a$     &$\cdot$  &$\cdots$&$\cdot$    &$a$    \\
					$\varDelta c$   &$a$       &$\neg a$&$\neg a$ &$\cdots$&$\neg a$  &$a$ &  
					&$\varDelta d$  &$\neg a$   & $\neg a$  &$\neg a$ &$\cdots$ &$\neg a$ &$a$ \\ 
					\cline{1-7} \cline{9-15}
					Constr     & \multicolumn{6}{c}{$ z_i =c_i = \cdots= c_{j+1} = \neg z_{j}$} &  & Constr     & \multicolumn{6}{c}{$ z_i = d_i =\cdots = d_{j+1} =  z_{j}$} \\ 
					\cline{1-7} \cline{9-15}
				\end{tabular}
				\footnotetext{Note: $a$ and $b$ iterate in $\{0,1\}$. $M_1$ and $M_2$ are valid, and $0< j+1 \leq i< n-1$. When $j+1 = i$, the middle part is omitted.} 
				\footnotetext[*]{ For the $(i+1)$-th bit of $M_1$ and $M_2$, the differences satisfy $\varDelta x_{i+1} \oplus \varDelta y_{i+1} = b \oplus a$ and $\varDelta u_{i+1} \oplus \varDelta v_{i+1} = b \oplus \neg a$.}
				\footnotetext[\dagger]{ When $j+1<i$, for any $j+1 \leq t \leq i-1$, differences of $M_1$ satisfy $\varDelta c_{t+1} = \neg a$, $\varDelta z_{t} = a$, and $(\varDelta x_{t}, \varDelta y_{t}) \in \{(\neg a, a), (a, \neg a)\}$ holding the constraint $c_{t+1} = c_{t} = x_{t}$ (or $= y_{t}$) and differences of $M_2$ satisfy $\varDelta d_{t+1} = \neg a$ and $(\varDelta u_{t}, \varDelta v_{t}) \in \{(\neg a, a), (a, \neg a)\}$ holding the constraint $d_{t+1} = d_{t}= u_{t}$ (or $= v_{t}$).}
				\footnotetext[\ddagger]{ There are totally nine cases for the differences on the $j$-th bit of $M_{1}$ and $M_2$. Since $\varDelta z_{j} = \varDelta c_{j+1} = \varDelta d_{j+1} = \neg a$, the differences $(\varDelta x_j, \varDelta y_j, \varDelta c_j)$ of $M_1$ and $(\varDelta u_j, \varDelta v_j, \varDelta d_j)$ of $M_2$ both iterate in $\{(\neg a, a, a), (a,\neg a, a), (a,a, \neg a)\}$ to hold the constraint $\neg c_{j+1} = z_j$ and $d_{j+1} = z_j$. }
			\end{minipage}
		\end{center}
	\end{table} 
	
	The second case is caused by contradictory constraints of $M_1$ and $M_2$ on two output bits. As shown in Table \ref{tab:conflicting constraints 1}, the differential constraint $z_i = \neg z_j$ of $M_1$ conflicts with the constraint $z_i = z_j$ of $M_2$ for $i\geq j+1$. Under the assumption of independence, the probability is wrongly calculated by   
	\begin{equation}\nonumber
		P_{M_2}^{i+1} = \Pr( z_i \oplus d_i = 0 \ \vert\  d_i = \cdots= d_{j+1}= z_j) = \Pr(z_i \oplus z_j = 0) = \frac{1}{2}.
	\end{equation}
	In fact, as the output of $M_1$, the input of $M_2$ satisfies $z_i = z_j$, making the differential propagation invalid, that is,
	\begin{equation}\nonumber
		P_{M_2\vert M_1}^{i+1} = \Pr( z_i \oplus z_j = 0 \ \vert \  z_i = \neg z_j) = 0. 
	\end{equation}
	Here, $P_{M_2\vert M_1}^{i+1}$ refers to the conditional probability of $M_2$ on bit $i+1$. 
	There is another basic scenario where the constraints of $M_1$ and $M_2$ are $\neg z_i =c_i = ...= c_{j+1} = \neg  z_{j}$ and $\neg z_i = d_i = ...= d_{j+1} =  z_{j}$, similar to Table \ref{tab:conflicting constraints 1} with $(\varDelta c_{i+1}, \varDelta d_{i+1}) = (\neg a, a)$, and the differential trail is also impossible,
	\begin{equation}\nonumber
		P_{M_2\vert M_1}^{i+1} = \Pr( \neg z_i \oplus z_j = 0 \ \vert \  z_i = z_j) = 0. 
	\end{equation}
	The third invalid case is caused by the uneven output cased by constraint (\ref{eq: z_i = c_j+1 = z_k}) (resp., constraint (\ref{eq: z_i = c_j+1 = c_0})). From the Proposition \ref{prop: neg z_i = c_j}, if the output satisfies: if $z_j = z_{j-1} = \cdots = z_{k}$ (resp., $z_j = z_{j-1} = \cdots = z_{0} = 1$), then $z_i =c_{i} = \neg z_{j}$, and the constraints of $M_2$ includes $z_i = z_{j} = \cdots = z_k$ at the same time, then the constraints of $M_1$ and $M_2$ are conflicting, and the differential propagation is invalid. For another scenario where $z_{i}=\neg c_{i}$ and $z_{i} = \neg d_{i}$, the situation is similar. 
	
	Above are the base cases of invalid differential propagation of a CMA and all of them are regarded as valid differences for probability calculation by previous search methods under the independence assumption. 
	The misjudgment of invalid trails during the search will waste lots of time, making the search method inefficient. 
	The second case, especially the invalid trails caused by conflicting constraints on adjacent bits ($j+1 = i$), are more likely to be encountered during the search.
	Therefore, in Sect. \ref{subsec:detect the contradictory}, we mainly introduce how to avoid this kind of impossible cases during the search. 
	
	\subsubsection{Inaccurate probability calculation caused by the non-independence}
	\label{subsubsec:probability calculation} %
	Similarly, there are three cases for the three kinds of the uneven output mentioned in Sect. \ref{subsubsec:uneven output}. 
	
	The most obvious one is that there exists same constraints of $M_1$ and $M_2$ on adjacent or non-adjacent bits, making the probability of $M_2$ higher than that calculated under the assumption of independence. Table \ref{tab:Same constraints 1} shows the basic scenario, where the constraints of $M_1$ and $M_2$ on the bit $i$ are $z_i = c_i = \cdots = \neg  z_j$ and $\neg z_i = d_i = \cdots = z_j$ ($i>j$) and the differential probability of $M_2$ is calculated by  
	\begin{equation}\nonumber
		P_{M_2\vert M_1}^{i+1} = \Pr(\neg z_i \oplus z_j = 0\ \vert \ z_i = c_i = \cdots = \neg z_j) = 1 > \frac{1}{2} = P_{M_2}^{i+1}.
	\end{equation}
	Since the output of $M_1$ satisfies $z_i = \neg z_j$, the differential equation $\neg z_i \oplus z_j = 0$ of $M_2$ holds with probability 1, two times higher than it computed under the independence assumption. There is another scenario similar to Table \ref{tab:Same constraints 1} where the differences of $M_1$ and $M_2$ on the bit $i+1$ satisfy $\varDelta c_{i+1} = \neg a$ and $\varDelta d_{i+1} = \neg a$, making the constraints be $\neg z_i = c_i = \cdots = \neg  z_j$ and $ z_i = d_i = \cdots = z_j$ ($i>j$), and the probability is also higher and computed by
	\begin{equation}\nonumber
		P_{M_2\vert M_1}^{i+1} = \Pr( z_i \oplus z_j = 0\ \vert \  \neg z_i = c_i = \cdots = \neg z_j) = 1 > \frac{1}{2} = P_{M_2}^{i+1}.
	\end{equation}
	
	\begin{table}[tp]  %
		\begin{center}
			\begin{minipage}{\textwidth}
				\caption{Same constraints of $M_1$ and $M_2$ } 
				\label{tab:Same constraints 1}
				\setlength{\tabcolsep}{3.0pt}%
				\renewcommand{\arraystretch}{1.2}%
				\renewcommand{\thefootnote}{\fnsymbol{footnote}} 
				\begin{tabular}{ccccccclccccccc}
					\cline{1-7} \cline{9-15}
					$M_1$ &$i+1$\footnotemark[1]  &$i$         &$i-1$\footnotemark[2]  &$\cdots$&$j+1$\footnotemark[2]       &$j$\footnotemark[3]        &  &$M_2$  & $i+1$\footnotemark[1]   &$i$     &$i-1$\footnotemark[2]  &$\cdots$&$j+1$\footnotemark[2]      &$j$\footnotemark[3]          \\ 
					\cline{1-7} \cline{9-15}
					$\varDelta x$   &$\cdot$  &$a$     &$\cdot$ &$\cdots$&$\cdot$ &$\neg a$       &  
					&$\varDelta z$  &{\color[HTML]{CB0000} $b$}       & {\color[HTML]{CB0000} $\neg a$} &{\color[HTML]{CB0000} $a$} & {\color[HTML]{CB0000} $\cdots$} &{\color[HTML]{CB0000} $a$} & {\color[HTML]{CB0000} $\neg a$}  \\
					$\varDelta y$   &$\cdot$    &$a$    &$\cdot$ &$\cdots$ &$\cdot$ &$a$&  
					&$\varDelta u$  &$\cdot$    &$a$    &$\cdot$ &$\cdots$ &$\cdot$ &$\neg a$ \\
					$\varDelta z$   &{\color[HTML]{CB0000} $b$} &{\color[HTML]{CB0000}$\neg a$} &{\color[HTML]{CB0000} $a$} & {\color[HTML]{CB0000} $\cdots$} &{\color[HTML]{CB0000} $a$}  &{\color[HTML]{CB0000}$\neg a$}  &  
					&$\varDelta v$  &$\cdot$&$a$        &$\cdot$    &$\cdots$ &$\cdot$ &$a$    \\
					$\varDelta c$   &$a$    &$\neg a$   &$\neg a$   &$\cdots$ &$\neg a$  &$a$ &  
					&$\varDelta d$  &$a$    &$\neg a$   &$\neg a$   &$\cdots$ &$\neg a$ &$a$ \\ 
					\cline{1-7} \cline{9-15}
					Constr     & \multicolumn{6}{c}{$ z_i =c_i = \cdots= c_{j+1} = \neg z_{j}$} &  & Constr     & \multicolumn{6}{c}{$ z_i =  \neg d_i =\cdots =  \neg d_{j+1} =  \neg z_{j}$} \\ 
					\cline{1-7} \cline{9-15}
				\end{tabular}
				\footnotetext{Note: $a$ and $b$ iterate in $\{0,1\}$. $M_1$ and $M_2$ are valid, and $0< j+1 \leq i< n-1$. When $j+1 = i$, the middle part is omitted.} 
				\footnotetext[*]{ For the $(i+1)$-th bit, the differences of $M_1$ and $M_2$ satisfy $\varDelta x_{i+1} \oplus \varDelta y_{i+1} = b \oplus a$ and $\varDelta u_{i+1} \oplus \varDelta v_{i+1} = b \oplus a$.}
				\footnotetext[\dagger]{ Same as Table \ref{tab:conflicting constraints 1}, for $j+1 \leq t \leq i-1$, differences of $M_1$ and $M_2$ hold constraints $c_{t+1} = c_{t}$ and $d_{t+1} = d_{t}$.}
				\footnotetext[\ddagger]{ Same as Table \ref{tab:conflicting constraints 1}, differences of the $j$-th bit hold constraints $\neg c_{j+1} = z_j$ and $d_{j+1} = z_j$. }
			\end{minipage}
		\end{center}
	\end{table}
	
	For the uneven output of the first addition caused by $z_i = c_0 = 0$ (resp., $z_i = \neg c_0 = 1$), if the differential equation of $M_2$ on the $(i+1)$-th bit is $z_i = d_0 = 0$ (resp., $ z_i =\neg d_0 = 1$), then the probability is also higher: $P_{M_2\vert M_1}^{i+1} = 1 > 1/2$.
	
	The uneven output of $M_1$ caused by $z_i=c_{j+1} = f_{carry}(x[j,k+1],y[j,k+1], c_{k+1}) $ (resp., $z_i = \neg c_{j+1}$) for $0 \leq k+1 \leq j \leq i-1 \leq n-2$, as mentioned in Sect. \ref{subsubsec:uneven output}, could also affected the differential probability of $M_2$. 
	In this scenario, as the output of $M_1$, $z_i$ is associated to $(z_j,\cdots,z_{k+1})$ through the carry bit $c_j$, and 
	if the constraints of $M_2$ restrict $z_i$ with $d_{j} = f_{carry}(z[j,k+1],u[j,k+1], d_{k+1})$, the probability $P^{i+1}_{M_2 \vert M_1}$ is very complex to calculate.
	%

	Since there is no good mathematical theory to describe the calculation, so we can only get the probability from experiments.
	In this paper, a \#SAT method introduced in Sect. \ref{subsubsec:calculate} is applied to compute the probability of a CMA. 
	We find two properties from experiments that if the relations of $z_i$ and $c_{j+1}$ are different to that of $z_i$ and $d_{j+1}$, the true probability of $M_2$ is higher than it calculated under the independence assumption, and if the relations are same, the true probability is lower.
	These two properties are shown in Observation \ref{pro: the prob is higher}	and Observation \ref{pro: the prob is lower}, and there are two basic examples with $i = j+1$ and $k+1 = 0$.
	\begin{observation}
		\label{pro: the prob is higher}
		For $i\geq 2$, if the differential constraints of the CMA are: 
		\begin{equation} \nonumber
			\begin{aligned}
				&M_1 : \neg z_i\ or\ z_i = c_{j+1} = f_{carry} \left( x[j,k+1] , y[j,k+1], c_{k+1}\right),\\
				&M_2 : z_i\ or\ \neg z_i = d_{j+1} = f_{carry} \left( z[j,k+1] , u[j,k+1], d_{k+1}\right),
			\end{aligned}
		\end{equation}
		the probability of $M_2$ is higher: $\Pr(M_2\vert M_1) > \Pr(M_2)$.
	\end{observation}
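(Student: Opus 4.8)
The plan is to reduce the inequality to a single bit position and then evaluate one conditional probability in closed form. Write $\Pr(M_2\mid M_1)=\prod_{t=1}^{n-1}P_{M_2\mid M_1}^{t}$ and $\Pr(M_2)=\prod_{t=1}^{n-1}P_{M_2}^{t}$. The first step is to show that $P_{M_2\mid M_1}^{t}=P_{M_2}^{t}$ for every $t\neq i+1$: this is exactly the situation of the three cases listed at the start of Sect.~\ref{subsec:consecutive modular addition}, applied bit by bit, because under $M_1$ the only non-uniform output bit of the first addition is $z_i$ (every bit in the blocks $[k+1,j]$ and $[j+1,i-1]$ lies in Row~1 or Rows~7--8 of Table~\ref{tab:diff_constraint} and is uniform and unconstrained), and $z_i$ enters the differential system of $M_2$ only through the bit-$(i+1)$ equation $z_i=d_{j+1}$ (up to negation). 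Hence the claim collapses to $P_{M_2\mid M_1}^{i+1}>P_{M_2}^{i+1}=\tfrac12$; the value $\tfrac12$ on the right holds because, under the independence assumption, that equation says ``$z_i$ equals $g_2$ or $\neg g_2$'' with $g_2$ a carry, hence balanced, function of variables independent of the uniform bit $z_i$.

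Next I would compute $P_{M_2\mid M_1}^{i+1}$ in the basic example $i=j+1$, $k+1=0$ (the general parameters are commented on below). Here $M_1$ forces $z_i=c_i=f_{carry}(x[i-1,0],y[i-1,0],0)$ and the bit-$(i+1)$ equation of $M_2$ reads $z_i=\neg f_{carry}(z[i-1,0],u[i-1,0],0)$ in the ``different relations'' case (the opposite pairing $\neg z_i=c_{j+1}$, $z_i=d_{j+1}$ is symmetric under $w\mapsto 2^{i}-1-w$, which merely swaps the roles of $\mu$ and $\rho$ below). Put $N=2^{i}$, regard $X=x[i-1,0]$, $Y=y[i-1,0]$, $U=u[i-1,0]$ as integers in $\{0,\dots,N-1\}$, and let $w$ be the integer value of $z[i-1,0]$. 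Conditioned on $M_1$ the pair $(z_i,z[i-1,0])$ is the binary expansion of $X+Y$, so $w$ is uniform; since there are $w+1$ pairs with $X+Y=w$ and $N-1-w$ with $X+Y=w+N$, one has $\mu(w):=\Pr(z_i=0\mid w,M_1)=(w+1)/N$. Because $u$ is independent of $(x,y)$, given $w$ the bit $g_2$ is the carry-out of $w+U$ and depends on $U$ alone, so $z_i$ and $g_2$ are conditionally independent given $w$ and $\rho(w):=\Pr(g_2=1\mid w)=w/N$. Therefore
\[P_{M_2\mid M_1}^{i+1}=\mathbb{E}_{w}\!\left[\mu(w)\rho(w)+(1-\mu(w))(1-\rho(w))\right]=\tfrac12+\tfrac12\,\mathbb{E}_{w}\!\left[(2\mu(w)-1)(2\rho(w)-1)\right].\]

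Finally I would evaluate the expectation: with $\mu(w)=(w+1)/N$, $\rho(w)=w/N$ we get $(2\mu-1)(2\rho-1)=(2w+2-N)(2w-N)/N^{2}$, and summing this quadratic over $w=0,\dots,N-1$ (only the standard formulas for $\sum w$ and $\sum w^{2}$ are needed) gives $\mathbb{E}_{w}[(2\mu-1)(2\rho-1)]=(N^{2}-4)/(3N^{2})$, so
\[P_{M_2\mid M_1}^{i+1}=\frac12+\frac{N^{2}-4}{6N^{2}},\qquad N=2^{i}.\]
This exceeds $\tfrac12$ precisely when $N\ge4$, i.e.\ $i\ge2$ (for $i=1$ it equals $\tfrac12$), which is exactly the hypothesis; combined with the first step this gives $\Pr(M_2\mid M_1)>\Pr(M_2)$. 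For general $0\le k+1\le j\le i-1$ the same scheme works: the bits of $[j+1,i-1]$ merely pass the carry through, and when $k+1>0$ the carry-in $c_{k+1}$ (resp.\ $d_{k+1}$) is itself uniform by cases~2--4 of \eqref{eq:c_k+1}, which makes $\mu\equiv\rho\equiv(2w+1)/2^{\,j-k+1}$ over the uniform $(j-k)$-bit block $w=z[j,k+1]$, so $(2\mu-1)(2\rho-1)=(2\mu-1)^{2}>0$ for every $w$ and the inequality holds a fortiori. I expect this last step to be the main obstacle: the tempting shortcut ``$\mu$ and $\rho$ both increase in $w$, hence are positively correlated'' is false in general --- for $i=1$ both are strictly increasing yet $\mathbb{E}_{w}[(2\mu-1)(2\rho-1)]=0$ --- so one must actually carry out the arithmetic sum and notice that its sign is governed by the block length, which is precisely where the condition $i\ge2$ enters; a secondary, routine difficulty is checking the conditional-independence claim $z_i\perp g_2\mid w$ uniformly over the sub-cases of Table~\ref{tab:zi-eq-cj} and \eqref{eq:c_k+1}.
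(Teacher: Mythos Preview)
Your approach is fundamentally different from the paper's, and in fact strictly stronger: the paper does not prove Observation~\ref{pro: the prob is higher} at all. Immediately before stating it the authors write that ``there is no good mathematical theory to describe the calculation, so we can only get the probability from experiments,'' and the only support they offer is the \#SAT computation of Example~\ref{example: p_sat > p_indep}. Your closed-form evaluation
\[
P_{M_2\mid M_1}^{\,i+1}=\frac12+\frac{N^{2}-4}{6N^{2}},\qquad N=2^{i},
\]
is new relative to the paper; plugging in $N=8$ reproduces the paper's experimentally obtained value $21/32$, and the sign condition $N\ge 4$ explains the otherwise ad~hoc hypothesis $i\ge 2$. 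For the basic case $i=j{+}1$, $k{+}1=0$ you have therefore supplied a genuine proof where the paper offers only numerical evidence.

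Two cautions on the general-case sketch. First, when $k{+}1=0$ but $j<i{-}1$, the Row~7/8 bits in $[j{+}1,i{-}1]$ merely forward the carry, so the effective block length is $j{+}1$ rather than $i$; your formula then needs $2^{\,j+1}\ge 4$, i.e.\ $j\ge 1$, which the hypothesis $i\ge 2$ alone does not enforce (for $j=0$ the computation gives exactly $1/2$). Second, your assertion that $c_{k+1}$ and $d_{k+1}$ are uniform \emph{and} conditionally independent given $w=z[j,k{+}1]$ is not automatic: $d_{k+1}$ is a function of $z[k,0]$, which is tied to $c_{k+1}$ through the first addition, so the conditional-independence step ``$z_i\perp g_2\mid w$'' requires a sub-case check along the lines of \eqref{eq:c_k+1}, not just the marginal uniformity of each carry-in. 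These are real loose ends, but they lie beyond anything the paper itself claims to establish; your argument already proves more than the paper does.
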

	
	\begin{observation}
		\label{pro: the prob is lower}
		For $i\geq 2$, if the differential constraints of the CMA are:
		\begin{equation} \nonumber
			\begin{aligned}
				&M_1 :z_i\ or\ \neg  z_i = c_{j+1} = f_{carry} \left( x[j,k+1] , y[j,k+1], c_{k+1}\right),\\
				&M_2 : z_i\ or\ \neg z_i = d_{j+1} = f_{carry} \left( z[j,k+1] , u[j,k+1], d_{k+1}\right),
			\end{aligned}
		\end{equation}
		the probability of $M_2$ is lower: $\Pr(M_2\vert M_1) < \Pr(M_2)$.
	\end{observation}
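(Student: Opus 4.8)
The two Observations are distilled from the \#SAT experiments, and I would not expect a clean closed form for $\Pr(M_2\mid M_1)$ in full generality. So the plan is to prove the inequality rigorously in the representative regime the authors single out, $i=j+1$ and $k+1=0$ (hence $c_{k+1}=d_{k+1}=0$), and then to indicate how the intermediate carry‑chain bits reduce a slightly larger class of cases to it. In that regime, conditioning on $M_1$ forces $z_i$ to equal $c_{j+1}$ or $\neg c_{j+1}$, and in the ``same relation'' hypothesis the bit‑$(i{+}1)$ constraint $z_i\oplus d_{j+1}=0$ of $M_2$ collapses — under either sign choice — to the single event $c_{j+1}=d_{j+1}$, whereas under the independence assumption $z_i$ is treated as uniform and independent of the $u$‑bits feeding $d_{j+1}$, so $P_{M_2}^{i+1}=1/2$. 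Since every other bit‑constraint of $M_2$ in this regime involves only the still‑uniform bits of $u$ or is a copy relation, it contributes the same factor with or without conditioning, and the whole claim reduces to showing $\Pr(c_{j+1}=d_{j+1})<1/2$.

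\textbf{Key step: translate carries into integer arithmetic.} I would let $X,Y,U\in\{0,\dots,N{-}1\}$, $N=2^{j+1}$, be the integers represented by $x[j,0],y[j,0],u[j,0]$ (independent, uniform). Then $z[j,0]$ represents $(X+Y)\bmod N$, $c_{j+1}=\lfloor(X+Y)/N\rfloor$ and $d_{j+1}=\lfloor(z[j,0]+U)/N\rfloor$, so adding the two carries yields the clean identity $c_{j+1}+d_{j+1}=\lfloor(X+Y+U)/N\rfloor\in\{0,1,2\}$. Hence $c_{j+1}=d_{j+1}$ holds exactly when $X+Y+U<N$ or $X+Y+U\ge 2N$, i.e.\ $\Pr(c_{j+1}=d_{j+1})=1-\Pr(N\le X+Y+U<2N)$, and it remains to prove $\Pr(N\le X+Y+U<2N)>1/2$. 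This is a short stars‑and‑bars count combined with the reflection $(X,Y,U)\mapsto(N{-}1{-}X,N{-}1{-}Y,N{-}1{-}U)$: using that the numbers of triples with sum $\le N{-}1$, $=N{-}1$, $=N{-}2$ are $\binom{N+2}{3},\binom{N+1}{2},\binom{N}{2}$, one gets $\#\{N\le X+Y+U<2N\}=N^3+\binom{N+1}{2}+\binom{N}{2}-2\binom{N+2}{3}$, and the desired inequality simplifies to $N^2>4$. Since $i\ge 2$ and $i=j+1$ force $N=2^{j+1}\ge 4$, this closes the basic case; the borderline $N=2$ ($i=1$) gives equality, which is exactly why $i\ge 2$ is assumed. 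The companion Observation~\ref{pro: the prob is higher} is the same computation for the complementary event $c_{j+1}\ne d_{j+1}$, whose probability is $\Pr(N\le X+Y+U<2N)>1/2$.

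\textbf{Extending it, and the main obstacle.} For $i>j+1$ the chains $z_i=c_i=\cdots=c_{j+1}$ and $z_i=d_i=\cdots=d_{j+1}$ insert at each intermediate bit only a relation $c_{t+1}=c_t$ (resp.\ $d_{t+1}=d_t$) plus a copy that leaves $z_t$ uniform; these contribute the identical factor $1/2$ to $\Pr(M_2)$ and to $\Pr(M_2\mid M_1)$, so they can be peeled off and the argument above runs with $c_{j+1},d_{j+1}$ in place of $z_i$. The genuinely hard case is $k+1>0$: then $c_{k+1}$ and $d_{k+1}$ are themselves constrained bits determined further down, so $X+Y+U$ must be replaced by a sum with additional carry‑in terms whose joint law depends on the rest of the trail, the integer identity only gives $c_{j+1}+d_{j+1}\in\{0,1,2,3\}$, and — worse — the two carry chains may interleave with the other uneven‑output regimes. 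Pinning down when the probability still drops below $1/2$ in every such configuration is precisely where a uniform proof breaks down, and this is why in Sect.~\ref{subsubsec:calculate} we instead verify the inequality numerically with the \#SAT model on the CMAs that actually occur in SPECK and Chaskey.
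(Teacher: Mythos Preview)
Your proposal goes well beyond what the paper actually does. In the paper, Observations~\ref{pro: the prob is higher} and~\ref{pro: the prob is lower} are explicitly presented as empirical findings (``We find two properties from experiments\ldots''), justified only by the \#SAT computations in Examples~\ref{example: p_sat > p_indep} and~\ref{example: p_sat < p_indep}; no mathematical argument is given, and the authors state outright that ``there is no good mathematical theory to describe the calculation, so we can only get the probability from experiments.'' So there is no proof in the paper to compare against --- you are supplying one where the paper has none.

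Your argument for the representative regime $i=j+1$, $k+1=0$ is correct. The identity $c_{j+1}+d_{j+1}=\lfloor(X+Y+U)/N\rfloor$ is exactly the right reduction, and your stars-and-bars count gives $\#\{N\le X+Y+U<2N\}=(2N^3-2N)/3$, so that $\Pr(c_{j+1}=d_{j+1})=(N^2+2)/(3N^2)<1/2$ iff $N^2>4$, which holds for $N=2^{j+1}\ge 4$ under $i\ge 2$. As a sanity check, with $N=8$ this yields $\Pr(c_3=d_3)=176/512=11/32$ and the complementary $21/32$, matching precisely the GANAK outputs the paper reports in Examples~\ref{example: p_sat < p_indep} and~\ref{example: p_sat > p_indep}. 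This is a clean and genuinely new contribution relative to the paper.

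One point to tighten: the claim that ``every other bit-constraint of $M_2$ in this regime \ldots\ contributes the same factor with or without conditioning'' is only clear when the bits of $M_2$ above position $i$ carry no further constraints that touch $z_i$ or $d_{i+1}$ --- which is the case in the paper's examples but is not stated as part of the Observation's hypothesis. It would be safer to phrase the result as an inequality for the single bit contribution, $P_{M_2\mid M_1}^{i+1}<P_{M_2}^{i+1}=1/2$, and then note that the full inequality $\Pr(M_2\mid M_1)<\Pr(M_2)$ follows whenever the remaining constraints factor independently. Your discussion of the peeling for $i>j+1$ and of the obstruction at $k+1>0$ is accurate and matches the paper's own fallback to numerical verification.
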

	
	\begin{example}
		\label{example: p_sat > p_indep}
		For Proposition \ref{pro: the prob is higher}, let the differences of the CMA be 
		\begin{equation}\nonumber
			M_1
			\left\{ \begin{matrix}
				\varDelta x = \rm 0b001000 \\
				\varDelta y = \rm 0b001000 \\
				\varDelta z = \rm 0b000000 \\
			\end{matrix}
			\right. ,
			\quad M_2
			\left\{ \begin{matrix}
				\varDelta z = \rm 0b000000 \\
				\varDelta u = \rm 0b001000 \\
				\varDelta v = \rm 0b001000 \\
			\end{matrix}
			\right. .
		\end{equation}
		The constraint of $M_1$ is $\neg z_3 = c_3$, and the differential equation of $M_2$ on bit 3 is $z_3 = d_3$. With the help of the \#SAT solver GANAK, we get the true probability of $M_2$: 
		\begin{equation}\nonumber
			\Pr(M_2\vert M_1) = \frac{\Pr(M_2,M_1)}{\Pr(M_1)} = \frac{21}{32} > \frac{1}{2} = \Pr(M_2) .
		\end{equation}
	\end{example}
	
	\begin{example}
		\label{example: p_sat < p_indep}
		For Proposition \ref{pro: the prob is lower}, let the differences of the CMA be
		\begin{equation}\nonumber
			M_1
			\left\{ \begin{matrix}
				\varDelta x = \rm 0b001000 \\
				\varDelta y = \rm 0b011000 \\
				\varDelta z = \rm 0b000000 \\
			\end{matrix}
			\right.  ,
			\quad M_2
			\left\{ \begin{matrix}
				\varDelta z = \rm 0b000000 \\
				\varDelta u = \rm 0b001000 \\
				\varDelta v = \rm 0b001000 \\
			\end{matrix}
			\right. .
		\end{equation}
		The constraint of $M_1$ is $ z_2 = c_2$, and the constraint of $M_2$ on the input is $ z_2 = d_2$. With the help of the \#SAT solver GANAK, we get the true probability of $M_2$: 
		\begin{equation}\nonumber
			\Pr(M_2\vert M_1) = \frac{\Pr(M_2,M_1)}{\Pr(M_1)} = \frac{11}{32} < \frac{1}{2} = \Pr(M_2) .
		\end{equation}
	\end{example}
	
	During the search of differential trails, there may exist more complicated scenarios consisting of the above three cases in a valid trail, which is difficult to analysis. Without an effective theory to capture the differential propagation of consecutive modular additions, the \#SAT method is an effective way for the accurate calculation of differential probabilities.  

	\begin{proposition}{\textbf{(Detect non-independent CMAs)}}
		\label{prop:detect non-independent}
		For the CMA with $M_1:(\varDelta x ,\varDelta y)\xrightarrow{\boxplus}\varDelta z\ {\rm and}\ M_2:(\varDelta z ,\varDelta u)\xrightarrow{\boxplus}\varDelta v$,
		let their carry vectors be $c,d \in \mathbb{F}_2^{n}$ respectively.
		If $M_1$ and $M_2$ are non-independent, then the differential constraints of two additions must associate the same bits of the intermediate state $z$ to their carry bits respectively, that is, 
		\begin{equation}
			\label{eq:non-independent}
			\exists\ 0\leq i\leq n-2,\ M_1:\ z_i = c_i \ or \ z_i = \neg c_i,\ and \ M_2:\ z_i = d_i \ or \ z_i = \neg d_i.
		\end{equation}
	\end{proposition}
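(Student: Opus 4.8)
The plan is to prove the contrapositive: assuming that \eqref{eq:non-independent} fails, I would show that the CMA is independent, i.e.\ $\Pr(M_2\mid M_1)=\Pr(M_2)$. The starting point is the classification obtained in Sect.~\ref{subsubsec:uneven output}. By Lemma~\ref{lemma: uneven output} and the case split \eqref{eq: c_j+1}, the intermediate state $z$ can fail to be uniform random after $M_1$ only through the bits $i$ whose $M_1$-constraint reads $z_i=c_i$ or $z_i=\neg c_i$; write $A$ for this index set. Symmetrically, reading off the $M_2$-analogue of Table~\ref{tab:diff_constraint} (replace $x,y,z,c$ by $z,u,v,d$), the input constraint of $M_2$ has the form $z_i=d_i$ or $z_i=\neg d_i$ exactly at the bits with $\varDelta u_i=\varDelta v_i\neq\varDelta z_i$; write $B$ for this index set. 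Then the negation of \eqref{eq:non-independent} is precisely $A\cap B=\emptyset$. If $A=\emptyset$ already, then $z$ is jointly uniform random after $M_1$ and the claim follows (the first of the three cases listed at the start of Sect.~\ref{subsec:consecutive modular addition}); so from now on assume $A\neq\emptyset$ and $A\cap B=\emptyset$.

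Next I would describe, using the carry recursion \eqref{recur_carry} and the constraint table of $M_2$, how a non-uniform bit $z_i$ (with $i\in A$) can occur in a differential constraint of $M_2$. The dichotomy I would establish is: such a constraint is either (a) exactly of the form $z_i=d_i$ or $z_i=\neg d_i$, i.e.\ it attaches $z_i$ to the head of one of $M_2$'s carry chains — this is the only mechanism by which $M_2$ can couple $z_i$ to lower bits of $z$ or pin it to a fixed value, just as in Tables~\ref{tab:ci-eq-cj}--\ref{tab:zi-eq-cj} — so that $i\in B$; or else (b) $z_i$ enters the constraint only XORed against a quantity that is uniform random and independent of $z_i$ given $M_1$, namely a fresh input bit $u_j$ of $M_2$, or another bit $z_m$ of the intermediate state that is independent of $z_i$ under $M_1$. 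The latter uses the structural fact that $M_2$ couples two bits of $z$ only via a constraint $z_a=d_a$ or $z_a=\neg d_a$ at the larger of their two indices $a$ (so $a\in B$), while $M_1$ couples two bits of $z$ only via a constraint $z_a=c_a$ or $z_a=\neg c_a$ at the larger index (so $a\in A$); since $A\cap B=\emptyset$ these two coupling relations share no index, which is what forces $z_m$ to be independent of $z_i$. In case (b) the constraint holds with probability exactly $1/2$, whatever the (joint) distribution of $z_i$ is — these are the second and third of the three cases at the start of Sect.~\ref{subsec:consecutive modular addition}.

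Combining the two steps finishes the contrapositive: under $A\cap B=\emptyset$ no index of $A$ lies in $B$, so for every non-uniform bit $z_i$ of the intermediate state each constraint of $M_2$ involving $z_i$ is of the ``balanced'' type~(b). Hence the conditional probability $\Pr(M_2\mid M_1)$ factors over the bit positions into the same product of per-bit probabilities $P_{M_2}^{t}$ as the computation of $\Pr(M_2)$ under the independence assumption (Theorem~\ref{thm:probability calculation of modular addition} applied to $M_2$), so $\Pr(M_2\mid M_1)=\Pr(M_2)$. Equivalently, if the CMA is non-independent then there is some $i\in A\cap B$, which is exactly \eqref{eq:non-independent}.

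The step I expect to be the real obstacle is the dichotomy in the middle paragraph. One must follow a non-uniform bit $z_i$ through $M_2$'s own carry propagation — the Row-7/Row-8 type transitions shuttle a carry value up the chain unchanged, while Row-3/Row-5 type transitions can first inject $z_i$ into a carry — and verify that the only way $M_2$ ever ties two intermediate-state bits $z_a,z_b$ together is via a $z_a=d_a$ or $z_a=\neg d_a$ constraint at the larger index $a$, with the symmetric statement for $M_1$ and $A$. Matching $M_1$'s induced coupling structure on the bits of $z$ against that of $M_2$ — and checking that a non-trivial interaction between the two is impossible unless they share a ``head'' index, which then lies in $A\cap B$ — is where the bookkeeping lies; once that is in place, the probabilistic conclusion is a routine consequence of the independence and uniformity of the fresh inputs $x,y$ of $M_1$ and $u$ of $M_2$.
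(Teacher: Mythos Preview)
Your contrapositive is correct and formalises exactly what the paper does --- except that the paper does not actually prove the proposition: it writes only that it ``is concluded from above non-independent cases \ldots\ and Lemma~\ref{lemma: uneven output}'', i.e.\ it appeals to the three sufficient conditions for $\Pr(M_2\mid M_1)=\Pr(M_2)$ listed at the start of Sect.~\ref{subsec:consecutive modular addition} together with the bit-level case analysis of Sect.~\ref{subsubsec:uneven output}. Your sets $A,B$ and the hypothesis $A\cap B=\emptyset$ are precisely the right way to package those three cases into one statement.

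The step you flag as the obstacle --- the dichotomy with the ``head index'' bookkeeping --- can be streamlined. Rather than tracking a non-uniform $z_i$ (with $i\in A$) forward through $M_2$'s carry chain, fix a bit $t$ and compute $P_{M_2\mid M_1}^{t+1}$ directly from the $M_2$-row at $t$: Rows~3--4 give $z_t=\pm u_t$ and Rows~5,~7 give $u_t=\pm d_t$, both balanced by the fresh uniform input bit $u_t$; Rows~6,~8 (exactly the bits $t\in B$) give $z_t=\pm d_t$, and since then $t\notin A$ the bit $z_t$ is uniform and independent of $z[t-1,0]$ under $M_1$, hence independent of $d_t$ and of all earlier $M_2$-constraints (these being functions of $z[t-1,0],u[t-1,0]$). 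Every row therefore yields $P_{M_2\mid M_1}^{t+1}=P_{M_2}^{t+1}$, and the product is $\Pr(M_2)$. No coupling-index matching between $M_1$ and $M_2$ is needed beyond the single observation $t\in B\Rightarrow t\notin A$.
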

	
	This proposition is concluded from above non-independent cases that lead to impossible trails and inaccurate calculation of probability and Lemma \ref{lemma: uneven output}.
	According to Table \ref{tab:diff_constraint}, it is obviously that differences satisfying condition (\ref{eq:non-independent}) are $\left({\varDelta x}_i,{\varDelta y}_i,{\varDelta z}_i,{\varDelta u}_i,{\varDelta v}_i\right)=\left(1,1,0,1,1\right)\ {\rm or}\  \left(0,0,1,0,0\right). $ 
	Although this property is not the sufficient condition of non-independent CMAs, it can be used to extract the CMAs which could be possibly non-independent from the searched trail.
	And this will save lots of time for calculating the accurate probability of valid trails and finding the differences that cause the trail invalid.
	
	\section{Verification and probability calculation of differential on CMA}
	\label{sec:SAT model}

	\subsection{The SAT model capturing state transition }
	\label{subsec:state transition model }
	In this section, SAT models for the state (value) transition of the modular addition, rotation and XOR operation are introduced.

	For a modular addition $x\boxplus y=z$, let its carry state be $c \in \mathbb{F}_2^n$ and differential propagation be $M:(\varDelta x, \varDelta y) \xrightarrow{\boxplus} \varDelta z$. The variables of this model are the inputs, output and carry states: $(x,y,z,c)\in \mathbb{F}_2^{4n}$, and the CNF of the carry function, output function and differential constraints are listed below.
	
	\textbf{Carry function.}  
	Since the LSB of the carry state is zero, there are $c_0 = 0 \Leftrightarrow \neg c_0 = 1$ and the carry function $c_1 = x_0y_0$, same as Eq. (\ref{eq: cnf of g = ab}). 
	For $ i = 0,1,...,n-2$ bit, the CNF of carry function is
	\begin{equation}
		\label{eq:carry fuction to cnf}
		\begin{aligned}
			c_{i+1}=x_{i}y_{i} \oplus (x_{i} \oplus y_{i}) c_{i} \Longleftrightarrow 
			\ & (c_{i+1} \vee \neg c_{i} \vee \neg y_{i})  
			\wedge (c_{i+1} \vee c_{i} \vee \neg x_{i} \vee \neg y_{i}) \\
			& \wedge (\neg c_{i+1} \vee c_{i} \vee x_{i})
			\wedge (c_{i+1} \vee \neg c_{i} \vee x_{i} \vee \neg y_{i}) \\
			& \wedge (\neg c_{i+1} \vee c_{i} \vee y_{i}) 
			\wedge (c_{i+1} \vee \neg c_{i} \vee \neg x_{i} \vee y_{i}) \\
			& \wedge (\neg c_{i+1} \vee x_{i} \vee y_{i}).	
		\end{aligned}
	\end{equation}
	
	\textbf{Output function.} 
	For the LSB, the output is $z_0 = x_0 \oplus y_0$, same as Eq. (\ref{eq: cnf of g = a oplus b}) because of $c_0 = 0$.
	For $ i = 1,...,n-1$ bit, the output is 
	\begin{equation}
		\label{eq: output formulas to cnf} 
		\begin{aligned}
			z_{i}=x_{i} \oplus y_{i} \oplus c_{i} \Longleftrightarrow
			\ & (\neg z_{i} \vee x_{i} \vee y_{i} \vee c_{i})
			\wedge (z_{i} \vee \neg x_{i} \vee \neg y_{i} \vee \neg c_{i}) \\ 
			& \wedge (z_{i} \vee \neg x_{i} \vee y_{i} \vee c_{i})
			\wedge (\neg z_{i} \vee x_{i} \vee \neg y_{i} \vee \neg c_{i}) \\ & \wedge (z_{i} \vee x_{i} \vee \neg y_{i} \vee c_{i})
			\wedge (\neg z_{i} \vee \neg x_{i} \vee y_{i} \vee \neg c_{i}) \\ & \wedge (z_{i} \vee x_{i} \vee y_{i} \vee \neg c_{i}) 
			\wedge (\neg z_{i} \vee \neg x_{i} \vee \neg y_{i} \vee c_{i}).
		\end{aligned}
	\end{equation}
	
	\textbf{The differential constraints.} 
	Given the input and output differences of a modular addition, we use Table \ref{tab:diff_constraint} to determine which bits have constraints of the differential propagation, and convert these constraints into their CNF to replace the original carry formulas. For example, if the $(i+1)$-th and $i$-th bits of difference are $(\varDelta x_{i+1},\varDelta y_{i+1},\varDelta z_{i+1}) = (1,0,0)$ and $(\varDelta x_i,\varDelta y_{i},\varDelta z_{i}) = (1,1,0)$, the CNFs of the differential constraints on the $i$-th bit are listed below.
	\begin{enumerate}
		\item The constraint on the inputs (the differential equation): 
		\begin{equation}
			\neg y_i = c_i  \Longleftrightarrow (\neg y_i \vee \neg c_i) \wedge (y_i \vee c_i).
		\end{equation}
		\item The constraint on the output: 
		\begin{equation}
			\neg z_i = x_i  \Longleftrightarrow (\neg z_i \vee \neg x_i) \wedge (z_i \vee x_i).
		\end{equation}
		\item The constraint on the carry: 
		\begin{equation}
			c_{i+1} = \neg z_i  \Longleftrightarrow (\neg c_{i+1} \vee \neg z_i) \wedge (c_{i+1} \vee z_i).
		\end{equation}
	\end{enumerate}
	Replace the original carry formula with the above three formulas, and do similar operations on other bits. 
	The equation (\ref{eq:carry fuction to cnf}) and (\ref{eq: output formulas to cnf}) capture the state propagation of the addition, and other equations limit the states to meet the differential propagation, and formulas of differential constraints are simpler than that of the carry functions, which make the SAT model easier to solve.

	For the XOR operation of variables: $z = x \oplus y$, which has no differential constraints, the Eq. (\ref{eq: cnf of g = a oplus b}) can be used to capture the propagation on each bit;
	
	For the state propagation of linear functions, the output state can be represented by input variables with no differential constrains. For the XOR operation between a variable and the round constant: $y = x \oplus r$, where $r \in \mathbb{F}_2^n$, set a variable $x = (x_{n-1},\cdots, x_0)$ as the input, and the output state $y$ satisfies
	\[y_i = \left\{
	\begin{matrix}
		x_i &if\ r_i = 0, \\
		\neg x_i &if\ r_i = 1,
	\end{matrix}
	\right. \]
	for $i = 0,1,\cdots,n-1$. 

	For the rotation: $y = x \lll l$, where $0\leq l \leq n-1$ is the rotation constant, set a variable $x = (x_{n-1},\cdots, x_0)$ as the input. Then the output state $y$ is
	$y = (x_{n-1-l},\cdots , x_{0}, x_{n-1},\cdots, x_{n-l})$.

	\subsection{The SAT method to identify incompatible differential trails}
	\label{subsubsec:verify diff-trails} %
	To verify the validity of a differential trail, MILP models are built in \cite{DBLP:journals/dcc/SadeghiRB21} and \cite{C:LiuIsoMei20} to find a right pair of states. The MILP model captures the propagation of a pair of states $(x,x')$ on the primitives and uses the XOR operation of states: $\varDelta x = x\oplus x'$ to add differential constraints.
	
	Our SAT method only considers the propagation of one state instead of a pair, which has less variables than the MILP model.
	Since the ARX-based cipher is consisted of the modular addition, the rotation, and the XOR operation, the state propagation of a given differential trail can be captured by a big SAT model combining the SAT models of each operation mention in Sect. \ref{subsec:state transition model }. 
	With the help of a SAT solver, we can quickly verify whether the differential trail includes a right state following it. If there exits states that satisfies the differential propagation, we can get \textit{SAT} and one of the states from the solver. If not, we get \textit{UNSAT} from the solver.  
	
	\subsection{The \#SAT method to calculate the differential probability of CMAs}
	\label{subsubsec:calculate}  %
	Given a CMA with differential $M_1: (\varDelta x ,\varDelta y)\xrightarrow{\boxplus}\varDelta z$ and $M_2: (\varDelta z ,\varDelta u)\xrightarrow{\boxplus}\varDelta v$, the method to calculate accurately the differential probability is introduced in this section. 
	
	Assuming that the input $(x,y,u)$ of the CMA are independent and uniform random, a SAT model is built to capture its state propagation following the given differential. Set the variables be the input, output and carry states of two additions: $ (x,y,u)\in \mathbb{F}_2^{3n}$, $(z,v)\in \mathbb{F}_2^{2n}$, and $(c,d)\in \mathbb{F}_2^{2n}$, where the variable $z$ is both the output of $M_1$ and the input of $M_2$. Build two models mentioned in Sect. \ref{subsec:state transition model } to capture the state propagation of $M_1$ and $M_2$ respectively, and combine them to get the state propagation model for the CMA.
	Since the format of the SAT and \#SAT model are the same, 
	using SAT solver to solve the model, the validity of the differential trail can be quickly verified; 
	Using \#SAT solver to solve it, the number of solutions can be obtained, which is denoted by $\#\{(x,y,u)\}$ since the values of states $(z,v)$ and carry states $(c,d)$ of two additions are determined for a given value of $(x,y,u)$. The accurate value of the differential probability is calculated by 
	\begin{equation}
		\Pr(M_1,M_2) = \frac{\#\{(x,y,u)\}}{2^{3n}}.
	\end{equation}

	If there are some linear functions between two additions of the CMA, such as the rotation and the XOR operation with round constants in the key schedule of SPECK, the state transition models of linear functions mentioned in Sect. \ref{subsec:state transition model } are added into the SAT model.

	The \#SAT solver we use is GANAK \cite{DBLP:conf/ijcai/SharmaRSM19}. For small size CMA, such as the word size $ n =$ 16 and 24, whose number of solutions is $2^{3n}\times \Pr(M_1,M_2)$, the solver output the result in a few seconds or minutes. As for the word size $n \geq 32$ and a high differential probability, which cause a large number of solutions, it takes hours for solvers to output results because the higher difference probability have the fewer differential constraints making the \#SAT model more complicated and difficult to be solved.
	Therefore, in order to speed up the solving,
	our method is to remove the bits which are not related to the differential constraints from the model.
	
	For example, given the differences of $M_1$ and $M_2$ as Example (\ref{example: p_sat < p_indep}), 
	the 4th and 5th bit of those two additions can be omitted when build the state propagation model, because there is neither differential constraints on them nor higher bits associated with them,
	while the lowest two bits can not be omitted, because the constraint of $M_1$ on the 3th bit, which is $z_2 = c_2 = f_{carry}(x[1,0],y[1,0],c_0)$, associates the lowest three bits of the output. 
	With this method, solving large models will be much faster. 
	
	To obtain a more accurate probability of a given differential trail of an ARX-based cipher, we firstly apply Proposition \ref{prop:detect non-independent} to identify the non-independent CMAs in the trail, and then build \#SAT models for these CMAs to calculate their probabilities assuming that their inputs are independent and uniform random.
	Although sometimes the inputs of additions in the trail may not all be uniformly random, this assumption is more accurate than assuming that the differential propagation of all modular additions are independent of each other.
	In order to prove the effectiveness and accuracy of our method, some experiments are implemented in Appendix \ref{sec:Experiments to prove the effectiveness of our method}. We build some toy ciphers with small block size, and find that the probabilities of the differential trails calculated by our method are more accurate and closer to the real value than that calculated under the independence assumption.

	\subsection{Detect the contradictory of differential constraints of a CMA}	
	\label{subsec:detect the contradictory}
	During the search of differential trails for ARX ciphers, invalid trails caused by contradictory constraints on adjacent bits of CMA are often misjudged under the independence assumption and wastes lots of time.
	In this section, we introduce a method to 
	represent adjacent constraints by several basic operations between input and output differences. 
	With this method, we can automatically detect the location of contradictory constraints for a given invalid differential trails of CMA, 
	and recording the exact differences of bits that cause conflicting constraints will help to avoid this kind of invalid trails in subsequent re-searches and save a lot of time. 
	This method can also be used to build SMT models to exclude invalid trails, but same as the methods mentioned above, it is not efficient when searching long trails. 
	
	In the following lemmas, we first introduce the representations of the differential constraints on the output according to Table \ref{tab:diff_constraint}.
	\begin{lemma} \label{lemma:relations between z and c} %
		\normalsize	Given a differential trail of the modular addition $M_1:(\varDelta x ,\varDelta y)\xrightarrow{\boxplus}\varDelta z$ and let $\varDelta c = \varDelta x \oplus \varDelta y \oplus \varDelta z $. Consider the following n-bit values 
		\begin{equation}
			\begin{aligned}
				&a_1 = \neg(\varDelta x \oplus \varDelta y ) \wedge \neg(\varDelta y \oplus \neg\varDelta z) \wedge \neg(\neg\varDelta z \oplus (\varDelta c \gg 1) ), \\
				&a_2 = \neg(\varDelta x \oplus \neg\varDelta y ) \wedge \neg(\varDelta z \oplus (\varDelta c \gg 1) ),\\
				&a_3 = \neg(\varDelta x \oplus \varDelta  y ) \wedge \neg(\varDelta y \oplus \neg\varDelta z) \wedge \neg(\varDelta z \oplus (\varDelta c \gg 1) ). 
			\end{aligned}
		\end{equation}
		Then, for $0\leq i \leq n-2$, we have\footnote[1]{ The '$\Leftrightarrow$' used in the lemmas of this section means that the left equation holds if and only if the differential propagation has the constraint on the right. }
		\begin{equation}
			\begin{aligned}
				&a_{1,i} = 1 \Leftrightarrow z_i=c_i, \\
				&a_{2,i} = 1 \Leftrightarrow \neg z_i=c_{i+1},\\	
				&a_{3,i} = 1  \Leftrightarrow \neg z_i=c_{i+1}=c_i,
			\end{aligned} 
		\end{equation}
		where $a_{j,i}$ denote the $i$-th bit of $a_j$, for $j \in \{1,2,3\}$.
	\end{lemma}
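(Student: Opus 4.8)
The plan is to prove all three equivalences by a finite bitwise case analysis against Table~\ref{tab:diff_constraint}, using that the rows of that table are indexed exactly by the quadruple $(\varDelta x_i,\varDelta y_i,\varDelta z_i,\varDelta c_{i+1})$, where $\varDelta c_{i+1}=\varDelta x_{i+1}\oplus\varDelta y_{i+1}\oplus\varDelta z_{i+1}$.

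First I would record the bookkeeping identity that makes the $n$-bit formulas line up with the per-bit table: since $\varDelta c=\varDelta x\oplus\varDelta y\oplus\varDelta z$ by definition, one has $(\varDelta c\gg 1)_i=\varDelta c_{i+1}$ for every $0\le i\le n-2$, and this is exactly the carry-difference entry that distinguishes Row~3 from Row~4 and Rows~5--6 from Rows~7--8 of Table~\ref{tab:diff_constraint}. I would then rewrite the defining expressions of $a_1,a_2,a_3$ bit by bit: $a_{1,i}=1$ iff $\varDelta x_i=\varDelta y_i$, $\varDelta y_i\neq\varDelta z_i$ and $\varDelta c_{i+1}\neq\varDelta z_i$; $a_{2,i}=1$ iff $\varDelta x_i\neq\varDelta y_i$ and $\varDelta c_{i+1}=\varDelta z_i$; and $a_{3,i}=1$ iff $\varDelta x_i=\varDelta y_i$, $\varDelta y_i\neq\varDelta z_i$ and $\varDelta c_{i+1}=\varDelta z_i$.

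Next I would match each condition to a row of Table~\ref{tab:diff_constraint}. The condition for $a_{1,i}=1$ forces $(\varDelta x_i,\varDelta y_i,\varDelta z_i,\varDelta c_{i+1})=(a,a,\neg a,a)$, which is Row~3, whose recorded output constraint is $z_i=c_i$, and Row~3 is the only row carrying that constraint; hence $a_{1,i}=1\Leftrightarrow z_i=c_i$. Similarly $a_{3,i}=1$ forces $(a,a,\neg a,\neg a)$, i.e.\ Row~4, whose carry constraint is $c_{i+1}=c_i=\neg z_i$, and Row~4 is the unique row with it. For $a_2$, the condition $\varDelta x_i\neq\varDelta y_i$ together with $\varDelta c_{i+1}=\varDelta z_i$ selects exactly Rows~5 and~6 (Rows~7 and~8 are excluded because there $\varDelta c_{i+1}\neq\varDelta z_i$, and Rows~3 and~4 because there $\varDelta x_i=\varDelta y_i$), both of which record the carry constraint $c_{i+1}=\neg z_i$, tied to $x_i$ and to $y_i$ respectively, and no row outside Rows~4, 5, 6 records $c_{i+1}=\neg z_i$. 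Finally I would verify that the remaining cases (Rows~1, 2, 7, 8) all give $a_{1,i}=a_{2,i}=a_{3,i}=0$, so that no spurious solutions are introduced.

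The one delicate point is the reading of the ``$\Leftrightarrow$'' in the footnote: Row~4 logically entails $c_{i+1}=\neg z_i$, yet it is captured by $a_3$ and not by $a_2$, so the three indicators must be understood as markers of the three disjoint row-classes, namely Row~3, the pair of Rows~5 and~6, and Row~4, rather than of the bare logical assertions; I would state this convention explicitly at the start of the proof. The rest is a routine eight-case verification, together with the trivial remark that for the largest admissible index $i=n-2$ the entry $(\varDelta c\gg 1)_{n-2}=\varDelta c_{n-1}$ is still determined by bit $n-1$ of the differences, so there is no boundary difficulty.
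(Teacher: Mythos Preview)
Your proposal is correct and follows essentially the same approach as the paper's proof: unpack each $a_{j,i}$ into conditions on $(\varDelta x_i,\varDelta y_i,\varDelta z_i,\varDelta c_{i+1})$ and match against Table~\ref{tab:diff_constraint}. The paper's version is much terser (it states the condition for $a_{1,i}$ and $a_{2,i}$ and simply appeals to the table), whereas you carry out the row identification explicitly and also flag the convention issue about Row~4 versus the $a_2$ indicator; that extra care is warranted and is in fact consistent with how the paper uses $(a_2\vee a_3)$ in the next lemma.
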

	\begin{proof}
		Note that $a_{1,i} = 1 $ if and only if $\varDelta x_i= \varDelta y_i = \neg \varDelta z_i = \varDelta c_{i+1}$, which is the only case that leads to the constraint $z_i = c_i$ according to Table \ref{tab:diff_constraint} (At other conditions, $z_i$ is independent of $c_i$). The proof of $a_2$ and $a_3$ are the same: $a_{2,i} = 1 $ if and only if $\varDelta z_i= \varDelta c_{i+1} = \neg \varDelta c_i$, which will lead to the constraint $c_{i+1} = \neg z_i$. 
	\end{proof}
	With the vectors of Lemma \ref{lemma:relations between z and c}, we can represent the constraints on the adjacent bits of output.
	\begin{lemma} \label{lemma:constraints on adjacent bits of z}
		\normalsize Consider the following n-bits vector,
		\begin{equation}
			\begin{aligned}
				&b_N = (a_1 \gg 1) \wedge (a_2 \vee a_3),\\	
				&b_E = (a_3 \gg 1) \wedge (a_2 \vee a_3).
			\end{aligned}
		\end{equation}
		\normalsize Then, for $0\leq i \leq n-3$, let $b_{N,i}$ and $b_{E,i}$ denote the $i$-th bit of $b_N$ and $b_E$, we have 
		\begin{equation}
			\begin{aligned}
				&b_{N,i} = 1 \Leftrightarrow z_{i+1} = \neg z_i,\\	
				&b_{E,i} = 1 \Leftrightarrow z_{i+1} = z_i.
			\end{aligned}
		\end{equation}
	\end{lemma}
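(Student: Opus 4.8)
The plan is to unwind the two definitions down to Lemma \ref{lemma:relations between z and c}. First I would record that right shift acts by $(a_j\gg 1)_i = a_{j,i+1}$, so for $0\le i\le n-3$ (the range that keeps the index $i+1$ legal, since Lemma \ref{lemma:relations between z and c} covers indices up to $n-2$) we have $b_{N,i}=a_{1,i+1}\wedge(a_{2,i}\vee a_{3,i})$ and $b_{E,i}=a_{3,i+1}\wedge(a_{2,i}\vee a_{3,i})$. By Lemma \ref{lemma:relations between z and c}, $a_{1,i+1}=1$ says exactly that the propagation imposes $z_{i+1}=c_{i+1}$ (Row 3 of Table \ref{tab:diff_constraint} at bit $i+1$), and $a_{3,i+1}=1$ says exactly that it imposes $\neg z_{i+1}=c_{i+2}=c_{i+1}$ (Row 4 at bit $i+1$), in particular $z_{i+1}=\neg c_{i+1}$. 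The remaining ingredient I would isolate is that $a_{2,i}\vee a_{3,i}=1$ is precisely the event that the propagation imposes $c_{i+1}=\neg z_i$: matching the defining formulas of $a_2,a_3$ against Table \ref{tab:diff_constraint} shows $a_2$ picks out Rows 5 and 6 while $a_3$ picks out Row 4, and these three rows are exactly the rows whose carry column forces $c_{i+1}=\neg z_i$ (the carry is never forced to $z_i$).

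Granting these identifications, the ``$\Leftarrow$'' implications and half of ``$\Rightarrow$'' follow by composing equalities: if $b_{N,i}=1$ the propagation gives $z_{i+1}=c_{i+1}$ together with $c_{i+1}=\neg z_i$, hence it imposes $z_{i+1}=\neg z_i$; if $b_{E,i}=1$ it gives $z_{i+1}=\neg c_{i+1}$ together with $c_{i+1}=\neg z_i$, hence it imposes $z_{i+1}=z_i$. For the converse I would show these are the only routes by which the bit-level analysis can produce a constraint tying the two adjacent output bits. The structural fact, read off the carry recursion \eqref{recur_carry}, is that the only dependency $z_i$ and $z_{i+1}$ share flows through the single variable $c_{i+1}$: $z_i$ depends only on the pairs $x_j,y_j$ with $j\le i$, these feed $z_{i+1}$ solely via $c_{i+1}$, and $z_{i+1}$ additionally involves the fresh free bits $x_{i+1},y_{i+1}$. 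Hence such a constraint requires, on the one hand, that $z_{i+1}$ be tied to a function of $c_{i+1}$ alone --- by Table \ref{tab:diff_constraint} precisely Row 3 ($z_{i+1}=c_{i+1}$) or Row 4 ($z_{i+1}=\neg c_{i+1}$) at bit $i+1$, whereas in Rows 5--8 it is tied to a fresh free bit and in Row 1 it is unconstrained --- and, on the other hand, that $c_{i+1}$, as a function of $x_i,y_i,c_i$, be tied to $\pm z_i$, which occurs only as $c_{i+1}=\neg z_i$ in Rows 4, 5, 6 at bit $i$ (in Row 3 at bit $i$ the carry is pinned to the free bit $x_i=y_i$, in Rows 7--8 to $c_i$ which depends on still-lower bits, and in Row 1 it is unconstrained). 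Combining the two enumerations, a constraint $z_{i+1}=\neg z_i$ forces Row 3 at bit $i+1$ together with one of Rows 4/5/6 at bit $i$, i.e.\ $a_{1,i+1}=1$ and $a_{2,i}\vee a_{3,i}=1$, i.e.\ $b_{N,i}=1$; a constraint $z_{i+1}=z_i$ forces Row 4 at bit $i+1$ together with one of Rows 4/5/6 at bit $i$, i.e.\ $a_{3,i+1}=1$ and $a_{2,i}\vee a_{3,i}=1$, i.e.\ $b_{E,i}=1$.

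I expect the main obstacle to be exactly this converse, and within it the point that no longer-range chain of carry constraints descending from bit $i$ can sneak in a correlation between $z_{i+1}$ and $z_i$ invisible to $b_N$ and $b_E$. This is handled by the observations already used: whenever bit $i$ lies in Rows 7 or 8 (so $c_{i+1}$ is chained to $c_i$), that very row forces $z_i$ to equal a fresh free input bit, which severs any correlation of $z_i$ with $c_{i+1}$; whenever bit $i$ lies in Row 3 (so $z_i=c_i$), the carry $c_{i+1}$ is pinned to the free bit $x_i=y_i$ rather than to $z_i$; and a bit in Row 2 is ruled out by validity of $M_1$. Since every bit index falls in exactly one row of Table \ref{tab:diff_constraint} and the constraint shapes listed there are exhaustive, this finite case check closes the argument.
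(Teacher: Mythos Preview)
Your approach is essentially the paper's: unwind the shift to get $b_{N,i}=a_{1,i+1}\wedge(a_{2,i}\vee a_{3,i})$ and $b_{E,i}=a_{3,i+1}\wedge(a_{2,i}\vee a_{3,i})$, observe that $a_{2,i}\vee a_{3,i}=1$ is exactly the event that Table~\ref{tab:diff_constraint} forces $c_{i+1}=\neg z_i$ (Rows 4, 5, 6), and compose with $a_{1,i+1}=1\Leftrightarrow z_{i+1}=c_{i+1}$ respectively $a_{3,i+1}=1\Rightarrow z_{i+1}=\neg c_{i+1}$ from Lemma~\ref{lemma:relations between z and c}. The paper's proof is literally this one line; it takes the ``only if'' direction for granted under the convention (footnoted at Lemma~\ref{lemma:relations between z and c}) that the right-hand sides denote precisely the bit-level constraints read off Table~\ref{tab:diff_constraint}. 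Your extra case analysis ruling out longer carry chains is more careful than what the paper writes, but it is addressing the same equivalence and is not a different route.
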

	\begin{proof}
		\normalsize Note that $( a_{2,i} \vee a_{3,i} ) = 1 \Leftrightarrow \neg z_i = c_{i+1} $. So $b_{N,i} = a_{1,i+1} \wedge ( a_{2,i}\vee a_{3,i} ) = 1 $ if and only if $z_{i+1} = c_{i+1} = \neg z_i$, and $b_{E,i} = a_{3,i+1} \wedge ( a_{2,i}\vee a_{3,i} ) = 1 $ if and only if $\neg z_{i+1} = c_{i+1} = \neg z_i$.  
	\end{proof}
	Similarly, the differential constraints on adjacent bits of the input are represented in the following Lemma.
	\begin{lemma} \label{lemma:constraints between input z and carry d}
		\normalsize	Given a differential trail $M_2: (\varDelta z ,\varDelta u)\xrightarrow{\boxplus}\varDelta v\ $and $\varDelta d = \varDelta z \oplus \varDelta u \oplus \varDelta v $, the constraint between the input bit $z_i$ and carry bit $d_i$ can be represent by the following bit-vectors:
		\begin{equation}
			\begin{aligned}
				&a_1' = \neg(\varDelta z \oplus \neg \varDelta u ) \wedge \neg(\varDelta u \oplus \varDelta v) \wedge \neg(\varDelta v \oplus (\varDelta d \gg 1) ),\\
				&a_2' = \neg(\varDelta u \oplus \neg\varDelta v ) \wedge \neg(\varDelta z \oplus (\varDelta d \gg 1) ),\\
				&a_3' = \neg(\varDelta z \oplus \neg \varDelta u ) \wedge \neg(\varDelta u \oplus \varDelta v) \wedge \neg(\varDelta z \oplus (\varDelta d \gg 1) ),\\
				&b_N' = (a_1'\gg 1) \wedge (a_2' \vee a_3'),\\
				&b_E' = (a_3'\gg 1) \wedge (a_2' \vee a_3'). 
			\end{aligned}	
		\end{equation}
		\normalsize Then, the constraints on adjacent bits of the input state $z$ can be represented by $b_N'$ and $b_E'$. For $0\leq i \leq n-3$, we have 
		\begin{equation}
			\begin{aligned}
				&b_{N',i} = 1 \Leftrightarrow \neg z_{i+1}=z_i,\\
				&b_{E',i}= 1 \Leftrightarrow z_{i+1}=z_i.
			\end{aligned}	
		\end{equation}
	\end{lemma}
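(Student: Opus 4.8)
The plan is to replay, for the second addition, the arguments of Lemma \ref{lemma:relations between z and c} and Lemma \ref{lemma:constraints on adjacent bits of z}. In $v = z \boxplus u$ the intermediate word $z$ plays the role of the first summand, $u$ that of the second summand, $v$ that of the output, and $d$ that of the carry, so the constraints $M_2$ imposes between an input bit $z_i$ and the carry bits $d_i, d_{i+1}$ are obtained from Table \ref{tab:diff_constraint} under exactly this renaming. The first step is to fix the single-bit meaning of $a_1', a_2', a_3'$, just as Lemma \ref{lemma:relations between z and c} does for $a_1, a_2, a_3$. Unfolding the definitions: $a_{1,i}' = 1$ is the difference pattern $(\varDelta z_i, \varDelta u_i, \varDelta v_i) = (\neg a, a, a)$ with $\varDelta d_{i+1} = a$, i.e. Row 6 of Table \ref{tab:diff_constraint} under the renaming, which is the unique pattern of $M_2$ forcing $z_i = \neg d_i$; $a_{2,i}' = 1$ is the pair of patterns (Rows 3 and 5) characterised by $\varDelta u_i \neq \varDelta v_i$ and $\varDelta z_i = \varDelta d_{i+1}$, both carrying the constraint $d_{i+1} = z_i$; and $a_{3,i}' = 1$ is the pattern (Row 8) carrying $z_i = d_{i+1} = d_i$. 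Hence I would record $a_{1,i}' = 1 \Leftrightarrow z_i = \neg d_i$, $a_{2,i}' = 1 \Leftrightarrow z_i = d_{i+1}$, and $a_{3,i}' = 1 \Leftrightarrow z_i = d_{i+1} = d_i$.

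The second step, mirroring the proof of Lemma \ref{lemma:constraints on adjacent bits of z}, is to observe that Rows 3, 5, and 8 are precisely the valid rows of Table \ref{tab:diff_constraint} whose constraints entail $z_i = d_{i+1}$ (no valid row forces $z_i = \neg d_{i+1}$, and no other valid row ties the first summand to its carry-out bit), whence $(a_{2,i}' \vee a_{3,i}') = 1 \Leftrightarrow z_i = d_{i+1}$. Then the two claimed equivalences follow by eliminating the carry bit $d_{i+1}$, which is the only carry bit touched both at bit $i$ and at bit $i+1$. For $0 \leq i \leq n-3$ we have $b_{N',i} = a_{1,i+1}' \wedge (a_{2,i}' \vee a_{3,i}')$, so $b_{N',i} = 1$ forces $z_{i+1} = \neg d_{i+1}$ and $z_i = d_{i+1}$, hence $\neg z_{i+1} = z_i$; and $b_{E',i} = a_{3,i+1}' \wedge (a_{2,i}' \vee a_{3,i}')$, so $b_{E',i} = 1$ forces (in particular) $z_{i+1} = d_{i+1}$ together with $z_i = d_{i+1}$, hence $z_{i+1} = z_i$. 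For the converse directions I would argue, as in Lemma \ref{lemma:constraints on adjacent bits of z}, that any relation $M_2$ forces between the adjacent input bits $z_{i+1}$ and $z_i$ must pass through the single carry bit $d_{i+1}$; since $z_i$ can only be forced to equal $d_{i+1}$ (never $\neg d_{i+1}$), the opposite-sign case is realised exactly when $b_{N',i} = 1$ and the same-sign case exactly when $b_{E',i} = 1$.

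The only real obstacle is bookkeeping. Table \ref{tab:diff_constraint} is phrased for the first summand of a generic addition, so one has to track signs and roles carefully when specialising to the slots of $z, u, v, d$ in $M_2$, and one has to check that the row-by-row enumeration of which difference patterns produce a $z$-versus-$d$ constraint is exhaustive --- exhaustiveness is what secures the ``only if'' direction of each ``$\Leftrightarrow$'' in the sense of the footnote. One must also watch the index shifts: $b_N'$ and $b_E'$ each contain a $\gg 1$, which is why the statement is restricted to $0 \leq i \leq n-3$, and the two conjuncts forming each of $b_{N',i}, b_{E',i}$ must refer to the same carry bit $d_{i+1}$ for the elimination step to be legitimate. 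No new idea beyond the two preceding lemmas is required; this lemma is their mirror image for the input side of the second addition.
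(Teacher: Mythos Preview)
Your proposal is correct and is exactly the argument the paper intends: the paper gives no explicit proof for this lemma, merely prefacing it with ``Similarly'' after Lemmas~\ref{lemma:relations between z and c} and~\ref{lemma:constraints on adjacent bits of z}, so your replay of those two proofs under the renaming $(x,y,z,c)\mapsto(z,u,v,d)$ is precisely the omitted argument. Your row identifications (Row~6 for $a_1'$, Rows~3 and~5 for $a_2'$, Row~8 for $a_3'$) and the resulting constraints are all correct, and your observation that Rows~3, 5, 8 are exactly the valid rows tying $z_i$ to $d_{i+1}$ (always with the same sign) is the key exhaustiveness check that makes the disjunction $(a_{2,i}'\vee a_{3,i}')$ work.
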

	\begin{theorem}
		\label{lemma: conflicting constraints on adjacent bits } 
		For	$M_1:(\varDelta x ,\varDelta y)\xrightarrow{\boxplus}\varDelta z\ $ and $M_2:(\varDelta z,\varDelta u)\xrightarrow{\boxplus}\varDelta v\ $. Get $(b_N, b_E)$ from Lemma \ref{lemma:constraints on adjacent bits of z} and $(b'_N, b'_E)$ from Lemma \ref{lemma:constraints between input z and carry d}.
		Let	$q=\left(b_N \wedge {b'}_E\right) \vee \left(b_E \wedge {b'}_N\right)$, we have:
		\begin{equation}
			\begin{aligned}
				q_i = 1\ \Longleftrightarrow\ \ &(b_N \wedge b'_E)_i = 1 \Leftrightarrow M_1\ has \ z_{i+1} = \neg z_i \ {\rm and}\ M_2\ has \ z_{i+1} = z_i, \\
				&or\ (b_E \wedge b'_N)_i = 1 \Leftrightarrow M_1\ has \ z_{i+1} =  z_i \ {\rm and}\ M_2\ has \  z_{i+1} = \neg z_i .
			\end{aligned}
		\end{equation}
		Here, $(b_N \wedge b'_E)_i$ refers to the $i$-th bit of $(b_N \wedge b'_E)$ for$\ 0\leq i \leq n-2$. Therefore, $q \neq 0$ if and only if there exists conflicting constraints of $M_1$ and $M_2$ on adjacent bits.
	\end{theorem}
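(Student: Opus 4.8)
The plan is to prove the two displayed chains of equivalences bit by bit, reducing everything to Lemmas~\ref{lemma:constraints on adjacent bits of z} and~\ref{lemma:constraints between input z and carry d}, and then to read off the ``$q\neq 0$'' characterization from the observation that the two surviving cases are exactly the contradictory ones. No new machinery is needed; the substantive work has already been done in those lemmas.

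First I would unfold the bitwise operations. By the definitions of $\vee$ and $\wedge$ on $n$-bit vectors, $q_i = 1$ holds if and only if $(b_N \wedge b'_E)_i = 1$ or $(b_E \wedge b'_N)_i = 1$, and $(b_N \wedge b'_E)_i = 1$ holds if and only if $b_{N,i} = 1$ and $b'_{E,i} = 1$ (similarly for the other term). Then I substitute the characterizations already established: by Lemma~\ref{lemma:constraints on adjacent bits of z}, $b_{N,i} = 1 \Leftrightarrow M_1$ imposes $z_{i+1} = \neg z_i$ and $b_{E,i} = 1 \Leftrightarrow M_1$ imposes $z_{i+1} = z_i$; by Lemma~\ref{lemma:constraints between input z and carry d}, $b'_{N,i} = 1 \Leftrightarrow M_2$ imposes $z_{i+1} = \neg z_i$ and $b'_{E,i} = 1 \Leftrightarrow M_2$ imposes $z_{i+1} = z_i$ (recalling that the constraint written $\neg z_{i+1} = z_i$ there is the same relation $z_{i+1} = \neg z_i$). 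Combining these gives that $(b_N \wedge b'_E)_i = 1$ is equivalent to ``$M_1$ has $z_{i+1} = \neg z_i$ and $M_2$ has $z_{i+1} = z_i$'' and $(b_E \wedge b'_N)_i = 1$ is equivalent to ``$M_1$ has $z_{i+1} = z_i$ and $M_2$ has $z_{i+1} = \neg z_i$'', which are precisely the two equivalences asserted in the statement.

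It then remains to deduce that $q \neq 0$ if and only if $M_1$ and $M_2$ impose conflicting constraints on some pair of adjacent bits. The forward direction is immediate from the two cases above: in each of them $M_1$ forces one relation between $z_{i+1}$ and $z_i$ and $M_2$ forces its negation, so no value of $z$ satisfies both and the CMA differential admits no right pair. For the converse I would invoke Lemmas~\ref{lemma:constraints on adjacent bits of z} and~\ref{lemma:constraints between input z and carry d} once more to note that the only constraint either $M_1$ or $M_2$ can place on an adjacent pair of bits of the intermediate word $z$ is of the form $z_{i+1} = z_i$ or $z_{i+1} = \neg z_i$; two such constraints are incompatible exactly when one is of the ``equality'' type and the other of the ``inequality'' type, and that is precisely what the two disjuncts of $q$ detect. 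Hence any conflict on adjacent bits forces $q_i = 1$ for the corresponding index, so $q \neq 0$.

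I do not anticipate a genuine obstacle. The one point requiring care is the converse: I must argue explicitly that $b_N, b_E, b'_N, b'_E$ enumerate \emph{all} adjacent-bit constraints that $M_1$ and $M_2$ can place on $z$, so that no conflict is missed, and that the consistent combinations ``equality/equality'' and ``inequality/inequality'' are correctly excluded by $q$, so that no false positive is produced. A minor bookkeeping issue is to keep the index range consistent with the lemmas, whose characterizations are stated for $0\leq i\leq n-3$ while an adjacent pair $(z_{i+1},z_i)$ is meaningful up to $i = n-2$; I would simply restrict attention to the range in which the cited lemmas apply.
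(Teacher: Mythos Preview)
Your proposal is correct and matches the paper's approach: the paper states this theorem without a separate proof, treating it as an immediate consequence of Lemmas~\ref{lemma:constraints on adjacent bits of z} and~\ref{lemma:constraints between input z and carry d}, and your argument is precisely the unfolding of those lemmas together with the bitwise definitions of $\wedge$ and $\vee$. The index-range discrepancy you flag (the lemmas are stated for $0\le i\le n-3$ while the theorem mentions $0\le i\le n-2$) is a genuine minor inconsistency in the paper; restricting to the lemmas' range, as you propose, is the right fix.
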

	Given an invalid differential trails of consecutive modular additions, Lemma \ref{lemma: conflicting constraints on adjacent bits } can be used to determine whether there are contradictory constraints on adjacent bits. If so, the error differences and their positions can be obtained by analyzing the position of "1" in $q$, and recording them can avoid this type of impossible differential trails in the following re-search. For example, there are conflict constraints $z_{i+1} = \neg z_{i}$ of $M_1$ and $z_{i+1} = z_{i}$ of $M_2$, then we record the differences of the $i$-th, $(i+1)$-th, and $(i+2)$-th bit.
	\section{Application on SPECK family of block ciphers}
	\label{sec:application on SPECK}
	
	\subsection{Searching the related-key differential trails of SPECK family of block ciphers}
	The key schedule of SPECK includes consecutive modular additions, which would affect the search for differentials trails as mentioned in Chapter \ref{sec:diff property of mdad and consecutive mdad}. 
	In this chapter, we build SAT models to search the related-key differential (RKD) trails under the independence assumption and verify the invalidity of them. If the trail we find is valid, we find the non-independent consecutive modular additions and calculate the accurate value of their probability. If not, we find and record the reasons that cause the trails invalid to avoid them during the next search.
	
	The only non-linear operation in the SPECK round function is the modular addition, 
	and the only key-dependent operation is the XOR operation with sub-key after the modular addition, which means the the cipher operation is completely predictable until this first XOR operation with sub-key. Same as \cite{DBLP:journals/dcc/SadeghiRB21}, we ignore the modular addition in the first round when search for related-key differential trails. 
	Therefore, the input of the first round is $x^{0'} = (x^0 \ggg \alpha) \boxplus y^0$ and $y^{0'} = y^0 \lll \beta$, 
	and the probability of a $R$ round trail is $P_{DK} = P_D \times P_K = \prod_{i=1}^{R-1}P_D^i \times \prod_{i=0}^{R-2}P_K^i$, where $P_D$ and $P_K$ are differential probabilities of the data encryption and the key schedule, and $P_D^i$ and $P_K^i$ are probabilities of each round of them.
	
	\subsubsection{SAT model for the search of differential trail }
	\label{subsubsec:SAT for search}
	In order to model the differential propagation of the ARX-based cipher, it is sufficient to express XOR, bit rotation and modular addition. Since bit rotation and XOR operation are linear functions, the differential propagation on them can be modeled similar to the state propagation in Sect. \ref{subsec:state transition model }.
	
	As for modular addition $(\varDelta x, \varDelta y) \xrightarrow{\boxplus} \varDelta z$, the verification and calculation formula of Theorem \ref{thm:verification of modular addition} and Theorem \ref{thm:probability calculation of modular addition} are converted to CNF by SunLing et al.\cite{DBLP:journals/tosc/SunWW21}. But due to the definition of the LSB in this paper is different from theirs, our conversion result is reproduced below.
	
	\textbf{Verification.} 
	For the LSB, the CNF of the verification formula is
	\begin{equation}
		\begin{aligned}
			&(\neg \varDelta x_0 \vee \varDelta y_0 \vee \varDelta z_0) 
			\wedge (\varDelta x_0 \vee \neg \varDelta y_0 \vee \varDelta z_0)  
			\wedge (\varDelta x_0 \vee \varDelta y_0 \vee \neg \varDelta z_0) \\
			&\wedge (\neg \varDelta x_0 \vee \neg \varDelta y_0 \vee \neg \varDelta z_0).
		\end{aligned}
	\end{equation}
	For the $0 \leq i \leq n-2 $ bit, the CNF of the verification formula is 
	\begin{equation}
		\begin{aligned}
			&(\varDelta x_i \vee \varDelta y_i \vee \varDelta z_i \vee \varDelta x_{i+1} \vee \varDelta y_{i+1} \vee \neg \varDelta z_{i+1})  \\
			&\wedge (\varDelta x_i \vee \varDelta y_i \vee \varDelta z_i \vee \varDelta x_{i+1} \vee \neg \varDelta y_{i+1} \vee \varDelta z_{i+1}) \\
			&\wedge (\varDelta x_i \vee \varDelta y_i \vee \varDelta z_i \vee \neg \varDelta x_{i+1} \vee \varDelta y_{i+1} \vee \varDelta z_{i+1}) \\	
			&\wedge (\varDelta x_i \vee \varDelta y_i \vee \varDelta z_i \vee \neg \varDelta x_{i+1} \vee \neg \varDelta y_{i+1} \vee \neg \varDelta z_{i+1}) \\
			&\wedge (\neg \varDelta x_i \vee \neg \varDelta y_i \vee \neg \varDelta z_i \vee \varDelta x_{i+1} \vee \varDelta y_{i+1} \vee \varDelta z_{i+1}) 	\\
			&\wedge (\neg \varDelta x_i \vee \neg \varDelta y_i \vee \neg \varDelta z_i \vee \varDelta x_{i+1} \vee \neg \varDelta y_{i+1} \vee \neg \varDelta z_{i+1}) \\
			&\wedge (\neg \varDelta x_i \vee \neg \varDelta y_i \vee \neg \varDelta z_i \vee \neg \varDelta x_{i+1} \vee \varDelta y_{i+1} \vee \neg \varDelta z_{i+1}) \\	
			&\wedge (\neg \varDelta x_i \vee \neg \varDelta y_i \vee \neg \varDelta z_i \vee \neg \varDelta x_{i+1} \vee \neg \varDelta y_{i+1} \vee \varDelta z_{i+1}) .\\
		\end{aligned}
	\end{equation}
	
	\textbf{Probability calculation.} 
	Let $w \in \mathbb{F}_2^{n-1}$, so that $w_i = \neg eq\left(\varDelta x_{i}, \varDelta y_{i}, \varDelta z_{i}\right)$ for $0 \leq i \leq n-2 $ and its CNF is
	\begin{equation}
		\label{eq:cnf of prob calculation}
		\begin{aligned}
			&\neg (\varDelta x_i \vee \varDelta z_i \vee w_i) 
			\wedge (\varDelta x_i \vee \neg \varDelta z_i \vee w_i)
			\wedge (\varDelta x_i \vee \varDelta y_i \vee\varDelta z_i \vee \neg w_i) \\
			&\wedge (\neg \varDelta x_i \vee \varDelta y_i \vee w_i) 
			\wedge (\varDelta x_i \vee \neg \varDelta y_i \vee w_i) 
			\wedge (\neg \varDelta x_i \vee \neg \varDelta y_i \vee \neg \varDelta z_i \vee \neg w_i).\\
		\end{aligned}
	\end{equation}
	Then the probability is $\Pr(M) = 2^{-\sum_{i=0}^{n-2}w_i }$. 
	
	Since the modular additions are included in both data encryption and key schedule, we denote the differential probabilities on additions of these two parts as $w_d = -\log_2{P_D}$ and $w_k= -\log_2{P_K}$, and that of each round as $w_d^i = -\log_2{P_D^i}$ and $w_k^i= -\log_2{P_K^i}$. 
	The calculation of these probabilities under the independence assumption is the same as the Formula (\ref{eq:cnf of prob calculation}).
	Given the target probability $W = -\log_2{P_{DK}}$ of the search model, we convert the probability constraints $w_d + w_k \leq W$ into CNF formulas with the sequential encoding method \cite{DBLP:conf/cp/Sinz05}.
	
	\subsubsection{The verification steps of differential trails}
	\label{subsubsec:verifitcation of trails}
	Since our search is under the assumption of independence, the validity of the searched differential trails needs to be verified. 
	The entire verification process is shown in Fig. \ref{fig:verifiaction process}. Given a differential trail, we first build the SAT model in Sect. \ref{subsubsec:verifitcation of trails} to capture state propagation of the entire trail and then apply SAT solver CaDiCal \cite{BiereFazekasFleuryHeisinger-SAT-Competition-2020-solvers} to verify whether the model contains any solutions, that is, the validity of the trail. 
	
	\begin{figure}[tp]  
		\centering
		\includegraphics[ width=0.8\textwidth]{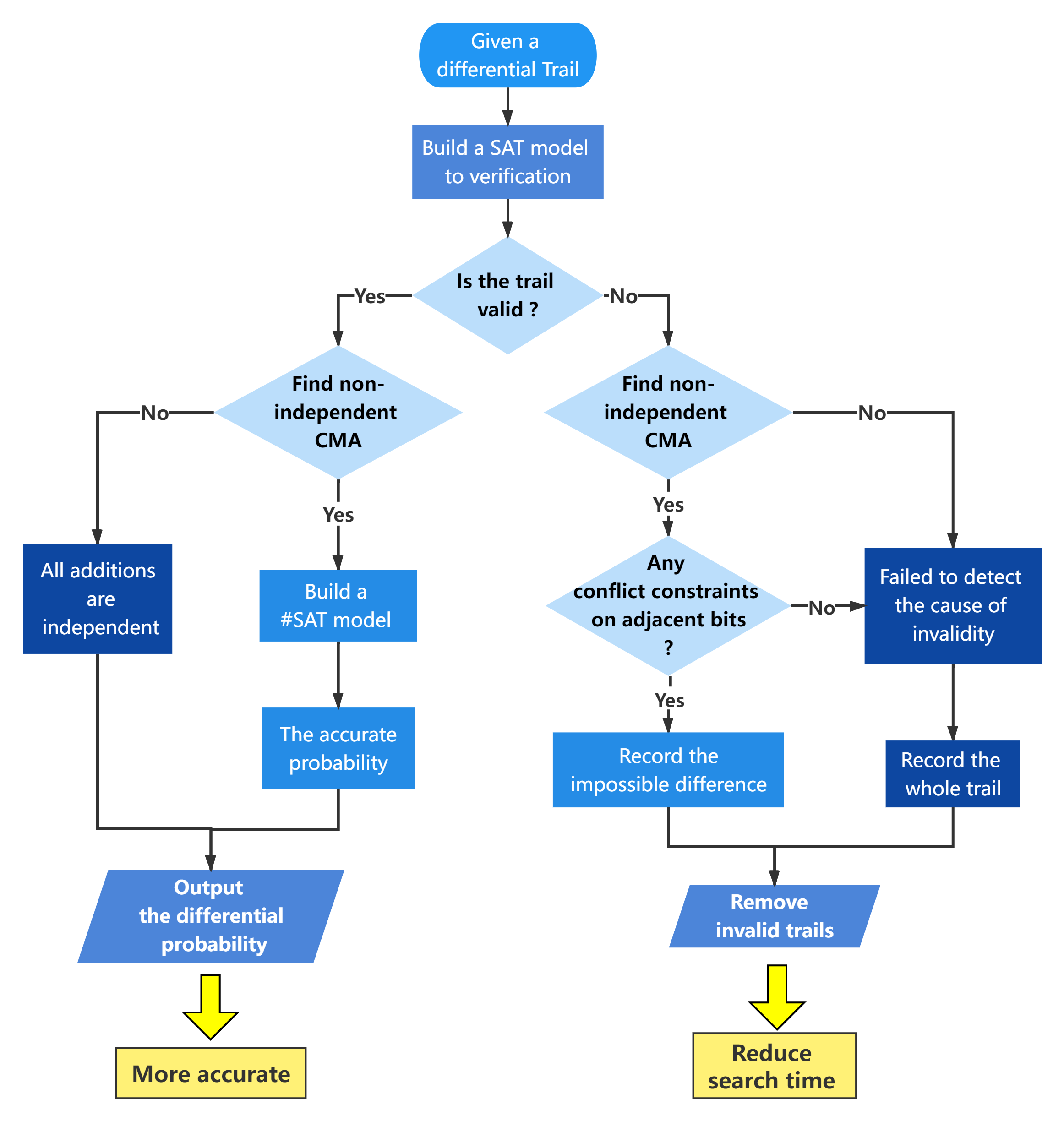}
		\caption{The verification steps of differential trails}
		\label{fig:verification steps}
		\label{fig:verifiaction process}
	\end{figure}
	
	If the trail is verified as valid, 
	we use Proposition \ref{prop:detect non-independent} to find the non-independent CMAs. 
	If there exits, we apply the \#SAT method to calculate their accurate probability as showed in Sect. \ref{subsubsec:calculate} and update the probability of the whole differential trail, 
	if not, we treat each addition as independent. 
	Our method is effective and more accurate than the probability calculated under the independence assumption, which can be proved by some experimental results in the Appendix \ref{sec:Experiments to prove the effectiveness of our method}.
	This step will help to save resources when constructing the differential distinguisher. 
	For example, a valid RKD trails of 15-round SPECK48/96 with a probability of $2^{-87}$ was found under the independence assumption. 
	Our method find that there exists a non-independent CMA consisting of additions in round 10 and 13. Their input and output differences are shown in Table \ref{tab:CMA of 15round SPECK48/96} and the accurate probability of $M_{13}$ is $\Pr(M_{13}\vert M_{10}) = 2^{-5.5081} > 2^{-9} = \Pr(M_{13}) $, more than 8 times higher than that calculated under the independence assumption. 
	Therefore, the probability of this trail should be $2^{-83.5081}$, much higher and more accurate than that computed under the independence assumption. 
	
	\begin{table}[tp]
		\begin{center}
			\begin{minipage}{\textwidth}
				\caption{The non-independent CMA in the key schedule of the 15-round trail of SPECK48/96}
				\label{tab:CMA of 15round SPECK48/96}
				\renewcommand{\thefootnote}{\fnsymbol{footnote}}
				\begin{tabular}{ccc}
					\toprule
					$M_{10}$ & Round 10      & The constraints on the output                         \\
					\midrule
					$\varDelta x$  & 0b00000000{\textcolor{red}{1}}000{\textcolor{red}{0000}}00000000 & $z_{11} = \neg z_{10} = \neg z_{9} = \neg z_8$  \\ %
					$\varDelta y$  & 0b10000000{\textcolor{red}{1}}000{\textcolor{red}{0001}}11100100 & \multirow{2}{*}{$\neg z_{15} = c_{15}= f_{carry}(x[15,12],y[15,12],x[11])$}      \\ 
					$\varDelta z$  & 0b10000000{\textcolor{red}{0}}000{\textcolor{red}{1111}}00100100 &    \\ 				
					\toprule
					$M_{13}$& Round 13      & The constraints on the input         \\ 
					\midrule
					$\varDelta zz$\footnotemark[1] & 0b0010010010000000{\textcolor{red}{0}}000{\textcolor{red}{1111}} & $\neg {zz}_3 = {zz}_2 = {zz}_1 = {zz}_0$  \\ %
					$\varDelta u$  & 0b0000010000000000{\textcolor{red}{1}}010{\textcolor{red}{0001}} &  \multirow{2}{*}{$ {zz}_7 = d_7= f_{carry}(zz[7,4],u[7,4],u[3])$}              \\
					$\varDelta v$  & 0b0010000010000000{\textcolor{red}{1}}010{\textcolor{red}{0000}} &   \\  
					\toprule
					\multicolumn{3}{c}{ $zz = (z\oplus {\rm 0b1010}) \ggg 8$, $\Pr\left(M_{13} \vert M_{10}\right) = \frac{\Pr(M_{10},M_{13})}{\Pr(M_{10})} = 2^{-5.5081} > 2^{-9}= \Pr(M_{13})$}             \\ 
					\bottomrule
				\end{tabular}
				\footnotetext[*]{ One of the inputs of the second addition is denoted as $zz$, different from the output $z$ of the first modular addition, since two additions are connected by a linear operation.}
			\end{minipage}
		\end{center}
	\end{table}
	
	If the trail is verified as invalid, 
	we try to search for the root causes of invalidation and record them in the search model to avoid them during the next search. 
	Similarly, we first search for non-independent CMA in the trail. 
	If there are, we use the method in Sect. \ref{subsec:detect the contradictory} to find the location of differences that cause conflicting constraint, and record them only. 
	If there is no conflicting constraints on the adjacent bits or no non-independent CMA, which means we do not find the reason of invalidity, we record the whole trails.
	Invalid trails or differences will be converted into CNF sentences and added to our SAT-based differential search model, so we can avoid them in the next search, which can greatly shorten the time required for search. 
	
	\begin{table}[tp]
		\begin{center}
			\begin{minipage}{310pt}
				\caption{The contradictory CMA in the key schedule of invalid 14-round trails of SPECK64/128}
				\label{tab:contradictory CMA of SPECK64/128}
				\centering
				\begin{tabular}{ccc}
					\toprule
					$M_8$          & \multicolumn{1}{c}{Round 8}       & The constraints on the output         \\
					\midrule
					$\varDelta x$  & 0b0000000000{\textcolor{red}{000}}0000000000000000000 &   \\
					$\varDelta y$  & 0b0000000000{\textcolor{red}{000}}1111000000000000000 & $z_{20} = z_{19}$		\\ 
					$\varDelta z$  & 0b0000000000{\textcolor{red}{111}}1001000000000000000 & 		\\ 				
					\toprule
					$M_{11}$       & \multicolumn{1}{c}{Round 11}      & The constraints on the input    \\ 
					\midrule
					$\varDelta zz$ & 0b000000000000000000{\textcolor{red}{111}}10010000000 &   \\
					$\varDelta u$  & 0b000000000010000010{\textcolor{red}{000}}10100000000 &  $\neg zz_{12} = zz_{11} $    \\ 
					$\varDelta v$  & 0b000000000010000010{\textcolor{red}{101}}00010000000 &   \\  
					\toprule
					\multicolumn{3}{c}{$zz = (z\oplus {\rm 0b1000}) \ggg 8$, conflict constraints, $\Pr(M_{11}\vert M_8) = 0 \neq 2^8 = \Pr(M_{11}) $}      \\
					\bottomrule
				\end{tabular}
			\end{minipage}
		\end{center}
	\end{table}
	
	It is necessary to exclude invalid trails. During the search for 14-round RK differential trails of SPECK64/128, five different trails are found with a same probability: $P_D = 2^{-35}$, $P_K = 2^{-38}$, and $P_{DK} = 2^{-73}$ (much higher than the probability $2^{-88}$ in \cite{DBLP:journals/dcc/SadeghiRB21}), but they are invalid because of the conflicting constraints of additions in round 8 and 11, as shown in Table \ref{tab:contradictory CMA of SPECK64/128}. 
	According to Table \ref{tab:diff_constraint}, the differential propagation of $M_8$ on bit 21 and 20 has constraints $z_{20} =\neg c_{20} = z_{19}$, and it becomes ${zz}_{12} = {zz}_{11}$ after the linear operation, which is contradictory to constraints $\neg {zz}_{12} = d_{12} = {zz}_{11}$ of $M_{11}$. 
	Therefore, these five trails are all invalid. These impossible differences on these three bits are detected by Lemma \ref{lemma: conflicting constraints on adjacent bits }, and to avoid such kind of contradictory in the next search, we only need add the following clause 
	\begin{equation}	
		\begin{aligned}
			&(x_{21}\vee x_{20}\vee x_{19}) 
			\vee (y_{21}\vee y_{20}\vee y_{19}) 
			\vee (\lnot z_{21} \vee\lnot z_{20} \vee\lnot z_{20}) \\		
			&\vee (u_{13}\vee u_{12}\vee u_{11}) 
			\vee (v_{13}\vee v_{12}\vee v_{11}) 
		\end{aligned} 
	\end{equation}
	to the SAT-based search model, which is much more efficient and simpler than recording all five trails.
	
	\subsection{Search strategies}
	
	\subsubsection{Search for optimal RKD trails}
	\label{subsubsec:search for optimal RKD trails}
	In this section, we introduce our strategy for searching for optimal related-key differential trails.
	According to SunLing et al.\cite{DBLP:journals/tosc/SunWW21}, Matsui's branch and bound boundary conditions are added to our SAT-based search model to make reasonable use of low-round information to speed up the search process. 
	
	Our search strategy shown in Algorithm \ref{algo:search for optimal RK differential trails} is a common method \cite{DBLP:journals/tosc/SunWW21}: 
	To search for $R$ round optimal trails, start from the search for trails of a low round $r < R$ and a high probability $2^{-W}$. 
	If a valid trail is found, record the optimal probability $2^{-W}$ of round $r$, and set $(r+1,W)$ as the target of next search until $r = R$. 
	Otherwise decrease the probability, and set $(r,W+1)$ for next search.
	The function $\operatorname{Search_1}(r,W,bound)$ in Algorithm \ref{algo:search for optimal RK differential trails} is implemented in the following steps. 
	\begin{enumerate}
		\item Build a SAT-based search model for $r$ round optimal trails with the constraint of probability $w_d + w_k \leq W$ and Matsui's boundary conditions. The variable $bound$ is a list for collecting probabilities of optimal trails of different rounds to build boundary conditions.
		\item If a trail is found, verify it as shown in Sect.\ref{subsubsec:verifitcation of trails}. Otherwise, output \textit{UNSAT}.
		\item If the found trail is valid, follow the process in Fig. \ref{fig:verifiaction process} and output \textit{SAT}. Otherwise, add the invalid reason to the search model and re-search until a valid trail is find, and if no valid trail is found, output \textit{UNSAT}.
	\end{enumerate}
	\begin{algorithm}[tp]
		\caption{Search for optimal RKD trails of $R$ rounds for SPECK}%
		\label{algo:search for optimal RK differential trails}
		\begin{algorithmic}[1]
			\renewcommand{\algorithmicrequire}{ \textbf{Input:}}
			\Require   the target round $R$.     %
			\renewcommand{\algorithmicrequire}{ \textbf{Output:}}
			\Require   the probability of an optiaml trails of $R$ round.
			\State $r \leftarrow 1$
			\State $bound \leftarrow list(R)$\
			\While{$r \leq R$}
			\If{$\operatorname{Search_1}(r,W,bound) = \rm \textit{SAT}$}
			\State $ bound[r] \leftarrow  W$
			\State $ r \leftarrow r+1$
			\Else
			\State $ W \leftarrow  W+1$
			\EndIf
			\EndWhile
			\State \Return{$2^{-W}$}
		\end{algorithmic}
	\end{algorithm}

	\subsubsection{Search for good RKD trails}
	\label{subsubsec:search for good RKD trails} 
	\begin{algorithm}[tp]
		\caption{Search for good RKD trails of $R$ rounds for SPECK}%
		\label{algo:search for good differential trails}
		\begin{algorithmic}[1]
			\renewcommand{\algorithmicrequire}{ \textbf{Input:}}
			\Require   the target round $R$ and predetermined upper and lower bounds of the probabilities: $(W^1_d,W^2_d),(W^1_{dk},W^2_{dk})$.
			\renewcommand{\algorithmicrequire}{ \textbf{Output:}}
			\Require	the probability of a good trail.
			\State $W_{dk} \leftarrow W^1_{dk}, W_d \leftarrow W^1_d$
			\While{$W^1_d - W^2_d > 1$}
			\If{$\operatorname{Search_2}(R,W_d,W_{dk}) = SAT$}
			\State $ W^1_d  \leftarrow  W_d$ 
			\Else
			\State $ W^2_d  \leftarrow  W_d$
			\EndIf
			\State $ W_d  \leftarrow  \lfloor\frac{W^1_d+W^2_d}{2} \rfloor$
			\EndWhile
			\State $W_d  \leftarrow  W^1_d$
			\While{$W^1_{dk} - W^2_{dk} > 1$}
			\If{$\operatorname{Search_2}(R,W_d,W_{dk}) = SAT$}
			\State $ W^1_{dk}  \leftarrow  W_{dk}$
			\Else
			\State $ W^2_{dk}  \leftarrow  W_{dk}$
			\EndIf
			\State $ W_{dk}  \leftarrow \lfloor\frac{W^1_{dk} + W^2_{dk}}{2}\rfloor $
			\EndWhile
			\State $W_{dk}  \leftarrow  W^1_{dk}$
			\State \Return{$2^{-W_{d}},\ 2^{-W_{dk}}$} 
		\end{algorithmic}
	\end{algorithm}
	
	For the search of optimal RKD trails of high rounds, our SAT method is inefficient. So We set some restrictions to speed up the search, but this makes our path not guaranteed to be optimal.
	According to the optimal trails found in the previous section, such as Table \ref{tab:The RKD trails of 11 rounds for SPECK32/64} and Table \ref{tab:The RKD trails of 15 rounds for SPECK64/128}, it is found that all these optimal RKD trails have one thing in common: 
	in their key schedule, there exists at most three consecutive rounds whose sub-key has no differences.
	And these three sub-keys lead to four consecutive rounds of data encryption with no differences. 
	It is worth noting that this phenomenon was first discovered in \cite{DBLP:journals/dcc/SadeghiRB21}, and helped them found good RKD trails for SPECK. 
	Same as their work, we set the differences of sub-keys of three consecutive rounds and that of the corresponding round of data encryption to zero during the search.

	The details of our strategy is shown in Algorithm \ref{algo:search for good differential trails}, similar to Algorithm 1 of \cite{liu2017rotational}.
	Different to the SAT-based search model in Sect. \ref{subsubsec:SAT for search}, we increase the probability constraint into two parts: the probability of encryption part $w_d \leq W_d $ and the probability of the entire trail $ w_{dk} = w_d + w_k \leq W_{dk}$. 
	Moreover, the Matsui's condition is not applied in this model because we have no information about the probability of optimal trails of lower rounds.
	Therefore, the function $\operatorname{Search_2}(R,W_d,W_{dk})$ is to find RKD trails of $R$ rounds for SPECK with $w_d \leq W_d $ and $w_{dk} \leq W_{dk}$, and verify them same as $\operatorname{Search_1}(r,W,bound)$. 	
	
	Preset the search range of probabilities as $W^2_d \leq W_d \leq W^1_d$ and $W^2_{dk} \leq W_{dk} \leq W^1_{dk}$.
	We firstly search for RKD trails with lowest $W_{d}$ under a certain trail probability $W_{dk} = W^1_{dk}$. The ideal situation is that there is one valid trail found in the upper bound $W^1_d$ and no trails in the lower bound $ W^2_d$, so the best data probability is $w_d = W^1_d$ when $W^1_d - W^2_d = 1$. These boundaries can be extended if necessary.
	Then, with the fixed $W_{d}$, we search for trails with best $W_{dk}$.

	\subsection{Search results}
	\label{subsec:seach result}
	With the help of SAT solver CaDiCal \cite{BiereFazekasFleuryHeisinger-SAT-Competition-2020-solvers} and \#SAT solver GANAK \cite{DBLP:conf/ijcai/SharmaRSM19}, we find better trails for SPECK32/64, SPECK48/96, and SPECK64/128 than previous works.%
	
	\subsubsection{SPECK32/64}
	\label{subsubsec:Speck32/64}		
	\begin{table}[tp]
		\begin{center}
			\begin{minipage}{\textwidth}
				\caption{The comparison of our RKD trails with the result of \cite{DBLP:journals/dcc/SadeghiRB21} for SPECK32/64}
				\label{tab:speck32/64}
				\renewcommand{\arraystretch}{1.2}%
				\renewcommand{\thefootnote}{\fnsymbol{footnote}}
				\begin{tabular*}{\textwidth}{@{\extracolsep{\fill}}cccccc@{\extracolsep{\fill}}}
					\toprule
					Round & $\log_2{P_{DK}}$ & $\log_2{P_{D}}$ & $\log_2{P_{K}}$ & Search time & Ref.         \\
					\midrule
					10  &$-20$  &$-13$ &$-7$    & $>$ 1 d & MILP (Optimal)  \\
					11  &$-31$  &$-17$ &$-14$   &       & \cite{DBLP:journals/dcc/SadeghiRB21}  \\
					12  &$-37$  &$-24$ &$-13$   &       & \cite{DBLP:journals/dcc/SadeghiRB21}  \\
					13  &$-47$  &$-24$ &$-23$   &       & \cite{DBLP:journals/dcc/SadeghiRB21}  \\
					15  &$-94$  &$-32$ &$-62$   &      	& \cite{DBLP:journals/dcc/SadeghiRB21}  \\
					\midrule
					10  &$-20$ &$-13$ &$-7 $    &176 s       & Our(Optimal) \\
					11  &$-28$ &$-17$ &$-11$    &2922 s      & Our(Optimal) \\
					12  &$-37$ &$-24$ &$-13$    &69153 s     & Our(Optimal) \\
					13  &$-47$ &$-24$ &$-23$    &18.6 d      & Our(Optimal) \\
					15  &$-85$ &$-32$ &$-53$    &       	 & Our         \\
					\botrule
				\end{tabular*}
				\footnotetext{Note: the trails we find for SPECK32/64 include no non-independent CMAs, and the probabilities in this table are computed under the independence assumption.}
			\end{minipage}
		\end{center}
	\end{table}
	We find optimal trails of 10 to 13 rounds for SPECK32/64 as shown in Table \ref{tab:speck32/64}.
	For 10 rounds, our SAT method takes 3 minutes to find the optimal trails, while our MILP method takes more than one day.
	
	For 11 rounds, the optimal trails we found have higher probability than the work of \cite{DBLP:journals/dcc/SadeghiRB21}.
	
	For 12 and 13 rounds, our optimal trails have the same probability to that of \cite{DBLP:journals/dcc/SadeghiRB21} 
	searched under the assumption that three consecutive rounds have no differences, which can not be proved optimal.
	
	In addition, the valid trails of 15 rounds we find using the strategy of Sect. \ref{subsubsec:search for good RKD trails} have higher probabilities and no non-independent CMA.
	It seems that the SAT method is more suitable for the trail search for ARX-based ciphers than MILP method, and the assumption above is useful to find high probability trails. 
	
	\subsubsection{SPECK48/96}
	\label{subsubsec:Speck48/96}
	\begin{table}[tp]
		\begin{center}
			\begin{minipage}{\textwidth}
				\caption{The comparison of our RKD trails with the result of \cite{DBLP:journals/dcc/SadeghiRB21} for SPECK48/96}
				\label{tab:speck48/96}
				\renewcommand{\arraystretch}{1.2}%
				\renewcommand{\thefootnote}{\fnsymbol{footnote}}
				\begin{tabular*}{\textwidth}{@{\extracolsep{\fill}}cccccc@{\extracolsep{\fill}}}
					\toprule
					Round & $\log_2{P_{DK}}$ & $\log_2{P_{D}}$ & $\log_2{P_{K}}$ & Search time & Ref.         \\
					\midrule
					11  &$-30$ &$-17$ &$-13$    &   & \cite{DBLP:journals/dcc/SadeghiRB21}      \\
					12  &$-44$ &$-21$ &$-23$    &   & \cite{DBLP:journals/dcc/SadeghiRB21}      \\
					14  &$-68$ &$-43$ &$-25$    &   & \cite{DBLP:journals/dcc/SadeghiRB21} \\
					15  &$-89$ &$-46$ &$-43$    &   & \cite{DBLP:journals/dcc/SadeghiRB21} \\
					\midrule
					11  &$-29$ &$-18$ &$-11$    & 10842 s     & Our(Optimal) \\
					12  &$-40$ &$-25$ &$-15$    & 5.3d        & Our(Optimal) \\
					14  &$-66.5906(-67)$\footnotemark[1] &$-43$ &$-23.5906(-24)$ & & Our \\			
					15  &$-83.5081(-87)$ &$-42$ &$-41.5081(-45)$ &      		  & Our	\\
					\botrule
				\end{tabular*}
				\footnotetext[*]{ Note: the 14- and 15-round trails we find include several non-independent CMAs and the values in brackets are their probabilities computed under the independence assumption.}
			\end{minipage}
		\end{center}
	\end{table}
	The probabilities of the trails for SPECK48/96 we found are listed in Table \ref{tab:speck48/96}.
	The optimal trails of 11 and 12 rounds we find in Sect. \ref{subsubsec:search for optimal RKD trails} have higher probabilities than the former work, and they are all proved to be valid and have no non-independent CMA.	
	
	For the 14 round version of SPECK48/96, we find several trails with probability $2^{-67}$ under the independence assumption and all of them include a non-independent CMA in Round 0 and 3 of the key schedule as shown in Table \ref{tab:CMA of 14round SPECK48/96}. It can be proved that the input states of the CMA above are uniform random, and its true probability is $2^{-8.5906}$ calculated by GANAK. Therefore, the probability of the trails we find is $2^{-66.5906}$.
	
	\begin{table}[tp]
		\begin{center}
			\begin{minipage}{280pt}
				\caption{The CMA in Round 0 and 3 of the key schedule of 14 round SPECK48/96}
				\label{tab:CMA of 14round SPECK48/96}
				\centering
				\begin{tabular}{ccc}
					\toprule
					$M_{0}$          & Round 0      & The constraints on the output                         \\ 
					\midrule
					$\varDelta x$  & 0b1100010000000000100{\textcolor{red}{1}}0010 & \multirow{3}{*}{$z_{4} = \neg c_{4}$}  \\
					$\varDelta y$  & 0b0100010000001000000{\textcolor{red}{1}}0000 &   \\ 
					$\varDelta z$  & 0b0000000000001000100{\textcolor{red}{0}}0010 &         \\ 
					\toprule
					$M_{3}$       & Round 3      & The constraints on the input         \\
					\midrule
					$\varDelta zz$ & 0b100{\textcolor{red}{0}}00100000000000001000 & \multirow{3}{*}{${zz}_{20} = d_{20}$  }    \\ 
					$\varDelta u$  & 0b000{\textcolor{red}{1}}00100000000000001000 &      \\
					$\varDelta v$  & 0b100{\textcolor{red}{1}}00000000000000000000 &      \\
					\midrule
					\multicolumn{3}{c}{$zz = (z\oplus {\rm 0}) \ggg 8$, $\Pr(M_{0},M_{3}) = 2^{-8.5906} > 2^{-9}= \Pr(M_{0}) \times\Pr(M_{3})$}   \\ 
					\botrule
				\end{tabular}
			\end{minipage}
		\end{center}
	\end{table}
	
	For the 15 round version of SPECK48/96, the trail we find under the independence assumption have the probability of $2^{-87}$ higher than the previous work, and the CMA in the Round 10 and 13 of the key schedule is non-independent as shown in Table \ref{tab:CMA of 15round SPECK48/96}. The probability of the trail we find is $2^{-83.5081}$.

	\subsubsection{SPECK64/128}
	\label{subsubsec:speck64/128}
	The trails of 14 and 15 rounds we find under the independence assumption have much higher probability than that in \cite{DBLP:journals/dcc/SadeghiRB21}, as shown in Table \ref{tab:speck64/128}.
	
	\begin{table}[tp]
		\begin{center}
			\begin{minipage}{230pt}
				\caption{The comparison of our RKD trails with the result of \cite{DBLP:journals/dcc/SadeghiRB21} for SPECK64/128}
				\label{tab:speck64/128}
				\renewcommand{\arraystretch}{1.3}%
				\setlength{\tabcolsep}{8.0pt}
				\renewcommand{\thefootnote}{\fnsymbol{footnote}}
				\begin{tabular}{ccccc}
					\toprule
					Round &$\log_2{P_{DK}}$    &$\log_2{P_{D}}$ &$\log_2{P_{K}}$  & Ref.         \\
					\midrule
					14  &$-88 $    &$-37$    &$-51$ & \cite{DBLP:journals/dcc/SadeghiRB21} \\
					15  &$-105$    &$-45$    &$-60$ & \cite{DBLP:journals/dcc/SadeghiRB21} \\
					\midrule
					14  &$-72(79)\footnotemark[1]$    &$-35$    &$-37(-44)$      		      		  & Our \\			
					15  &$-89(97)$    &$-42$    &$-47(-55)$	             		  & Our	\\
					\botrule
				\end{tabular}
				\footnotetext[*]{Note: the 14- and 15-round trails we find include several non-independent CMAs and the values in brackets are their probabilities computed under the independence assumption.}
			\end{minipage}
		\end{center}
	\end{table}
	
	As for the 14 round SPECK64/128, we find 8 trails with $P_{D} = 2^{-35}$ and $P_{DK} = 2^{-79}$ under the independence assumption, and two of them are invalid. In the valid trails, two have two non-independent CMAs, and four have three which makes their accurate probability much higher. The trails with highest probability have non-independent CMAs in Round 7 and 10, Round 8 and 11, and Round 9 and 12 of the key schedule as shown in Table \ref{tab:(7,10) CMA of 14 round SPECK64/128}, Table \ref{tab:(8,11) CMA of 14 round SPECK64/128}, and Table \ref{tab:(9,12) CMA of 14 round SPECK64/128}.  
	And the more accurate probability of this trail calculated by our method is $2^{-72}$, $2^{7}$ times higher than it calculated under the independence assumption.
	
	As for the 15 round SPECK64/128, the trail with probability $2^{-97}$ we find has four non-independent CMAs as shown in Table \ref{tab:(7,10) CMA of 15 round SPECK64/128}, Table \ref{tab:(10,13) CMA of 15 round SPECK64/128}, Table \ref{tab:(8,11) CMA of 15 round SPECK64/128}, and Table \ref{tab:(9,12) CMA of 15 round SPECK64/128}, and the more accurate probability computed by our method is $2^{-89}$. 
	
	\section{Application on Chaskey}
	\label{sec:application on Chaskey}
	From Fig. \ref{fig:Chaskey}, the round function of Chaskey includes two consecutive modular additions. Since there is no key insertion between different rounds, the modular additions are connected directly, which means there are two addition chains in the differential trails of Chaskey. Under the independence assumption, the differential probability is calculated by the product of the probabilities of each addition on these two chains. The designers of Chaskey provide a best found differential trail for 8 rounds in the Table 4 of \cite{mouha2014chaskey}, and its probability is $2^{-293}$ (under independence assumption) and $2^{-289.9}$ (calculated by Leurent’s ARX Toolkit). In this section, we are going to calculate the probability of this 8 round trail with the consideration of non-independence.
	
	Assuming that two modular addition chains in the trail of 8 rounds are independent, and every four additions on each chain are independent too. We build SAT models of state propagation on these CMAs, and use the solver GANAK to calculate their solution numbers. The results are shown in Table \ref{tab: Chain 1 in 8 round Chaskey} and Table \ref{tab: Chain 2 in 8 round Chaskey}.
	
	\begin{table}[tp]
		\begin{center}
			\begin{minipage}{310pt}
				\caption{The probability of Chain 1 in the 8 rounds differential trail for Chaskey in \cite{mouha2014chaskey}}
				\renewcommand{\arraystretch}{0.9}%
				\label{tab: Chain 1 in 8 round Chaskey}
				\begin{tabular}{cccc}
					\toprule
					\multicolumn{2}{c}{Chain 1}  & \multirow{2}{*}{$\log_2\left(\prod_{r=i}^{i+3}\Pr \left(M_r\right)  \right)$} & \multirow{2}{*}{$\log_2\left(\Pr\left(M_i,M_{i+1},M_{i+2},M_{i+3}\right)\right)$}     \\ 
					\cmidrule{1-2}
					Round               & Addition &                                   &                         \\ 
					\midrule
					\multirow{2}{*}{1}  & 0       & \multirow{4}{*}{-70}         	  & \multirow{4}{*}{-67.1106} \\ %
					& 1       &                                   &                           \\ 
					\multirow{2}{*}{2}  & 2       &                                   &                           \\ %
					& 3       &                                   &                           \\ 
					\midrule
					\multirow{2}{*}{3}  & 4       & \multirow{4}{*}{-5}           	  & \multirow{4}{*}{-5}       \\ %
					& 5       &                                   &                           \\ 
					\multirow{2}{*}{4}  & 6       &                                   &                           \\ %
					& 7       &                                   &                           \\ 
					\midrule
					\multirow{2}{*}{5}  & 8       & 0                                 & 0                   	  \\ 
					\cmidrule{2-4} 
					& 9       & \multirow{3}{*}{-11}       		  & \multirow{3}{*}{-10.5793} \\ 
					\multirow{2}{*}{6}  & 10      &                                   &                           \\ %
					& 11      &                                   &                           \\
					\midrule
					\multirow{2}{*}{7}  & 12      & \multirow{4}{*}{-66}         	  & \multirow{4}{*}{-64.5961} \\ %
					& 13      &                                   &                           \\ 
					\multirow{2}{*}{8}  & 14      &                                   &                           \\ %
					& 15      &                                   &                           \\ 
					\midrule
					\multicolumn{2}{c}{$P_1$}       & -152                          & -147.2889                 \\ \botrule
				\end{tabular}
			\end{minipage}
		\end{center}
	\end{table}
	
	\begin{table}[htp]
		\begin{center}
			\begin{minipage}{310pt}
				\caption{The probability of Chain 2 in the 8 rounds differential trail for Chaskey in \cite{mouha2014chaskey}}
				\renewcommand{\arraystretch}{0.9}%
				\label{tab: Chain 2 in 8 round Chaskey}
				\begin{tabular}{cccc}
					\toprule
					\multicolumn{2}{c}{Chain 2} & \multirow{2}{*}{$\log_2\left(\prod_{r=i}^{i+3}\Pr \left(M_r\right)  \right)$} & \multirow{2}{*}{$\log_2\left(\Pr\left(M_i,M_{i+1},M_{i+2},M_{i+3}\right)\right)$}    \\ 
					\cmidrule{1-2}
					Round               & Addition &                                   &                       \\ 
					\midrule
					\multirow{2}{*}{1}  & 0       & \multirow{4}{*}{$-61$}       & \multirow{4}{*}{$-60.1647$} \\ %
					& 1       &                                   &                           \\ 
					\multirow{2}{*}{2}  & 2       &                                   &                           \\ %
					& 3       &                                   &                           \\ 
					\midrule
					\multirow{2}{*}{3}  & 4       & \multirow{4}{*}{$-11$}      & \multirow{4}{*}{$-11$} \\ %
					& 5       &                                   &                           \\ 
					\multirow{2}{*}{4}  & 6       &                                   &                           \\ %
					& 7       &                                   &                           \\ 
					\midrule
					\multirow{2}{*}{5}  & 8       & \multirow{4}{*}{$-12$}      & \multirow{4}{*}{$-12$}      \\  %
					& 9       &              					  &        					  \\ 
					\multirow{2}{*}{6}  & 10      &                                   &                           \\ %
					& 11      &                                   &                           \\ 
					\midrule
					\multirow{2}{*}{7}  & 12      & \multirow{4}{*}{$-57$}      & \multirow{4}{*}{$-54.7177$} \\ %
					& 13      &                                   &                           \\ 
					\multirow{2}{*}{8}  & 14      &                                   &                           \\ %
					& 15      &                                   &                           \\ 
					\midrule
					\multicolumn{2}{c}{$P_2$}       & $-141$                       & $-137.8824 $                \\ 
					\botrule
				\end{tabular}
			\end{minipage}
		\end{center}
	\end{table}
	
	Calculated by the solver GANAK, the differential probabilities of the Addition (0,1,2,3) and (12,13,14,15) in Chain 1 and Chain 2 are $2^{-67.1106}$, $2^{-64.5961}$, $2^{-60.1647}$, and $2^{-54.7177}$ respectively.   
	
	As for Addition (4,5,6,7) of Chain 1 and 2, there is no association between the differential constraints of each pair of consecutive additions, so we calculate their differential probabilities by multiplying the probability of each addition.
	
	As for Addition (8,9,10,11) of Chain 2, we find the constraints of each addition are not associated with the carry states and the outputs of them are uniform random. Therefore, we believe that these four modular additions are independent of each other.
	
	As for Addition (8,9,10,11) of Chain 1, we use GANAK to calculate the total difference probability of the last three modular additions since the output of the 8th addition is uniform random.
	
	The differential probability of Chain 1 and 2 are $2^{-147.2889}$ and $2^{-137.8824}$ respectively, so the total probability of the 8 round trail is actually $2^{-285.1713}$, higher than $2^{-293}$ (under independence assumption) and $2^{-289.9}$ (calculated by Leurent’s ARX Toolkit) in \cite{mouha2014chaskey}.
	
	\section{Conclusion}
	\label{sec:conclusion}
	In this paper, we study the differential properties of the single and consecutive modular addition from differential constraints on its inputs and output. We find the influence of non-independence is connected with the relationship between the differential constraints of these two additions on the intermediate state. 
	If those constraints are contradictory, then the differential trails of the CMA is impossible (invalid).
	If these constraints make the bit values of same positions in the intermediate state related to each other, then the non-independence will affect the probability of the differential trail on the CMA, and the probabilities calculated under the independence assumption is inaccurate.
	
	We introduce a SAT-based method to capture the state propagation of a given differential on modular addition. 
	By this method, we build a big SAT model to verify whether a given differential trail of an ARX cipher includes any right pairs.
	By this method, we build the \#SAT model to accurately calculate the probability of a given differential on CMAs, which can help to get the more accurate probability of the entire trail. We are the first to consider the accurate calculation of differential probability of CMA.
	Given a differential trail of ARX-based ciphers, we introduce a set of inspection procedures to detect the validity of the trail, and if it is invalid, find and record the difference that caused the invalid. Other, find the non-independent CMAs in the trail and calculate its accurate differential probabilities.
	
	We apply these methods to search RKD trails of SPECK family of block ciphers.
	Under our search strategies, we find better trails than the work in \cite{DBLP:journals/dcc/SadeghiRB21}.
	In addition, we apply our \#SAT method to calculate the probability of 8 round differential trail of Chaskey, given by its designer \cite{mouha2014chaskey}. After a more accurate calculation, we find the probability of this trail is much high than that calculated under the independence assumption.
	
	In this paper, we only have a deeper understanding of the non-independence of the modular addition, and more efforts are needed to study the difference properties of the modular addition. If a theory is studied from the perspective of difference to describe the differential property of consecutive modulus addition, then better trails will be found for ARX ciphers. There are many ARX-based ciphers, such as ChaCha, Siphash, and Sparckle, includes CMA in their round functions, so it is necessary to take the non-independence into consideration.

	\begin{appendices}
		\section{Experiments to prove the effectiveness of our method}
		\label{sec:Experiments to prove the effectiveness of our method}
		In order to verify the effectiveness of our method, we construct several toy ciphers with small block size to evaluate the difference between the differential probability of an optimal trail calculated by our method and its real value computed by traversing. 
		For example, to compute the real probability of an $R$ round trail for a toy cipher with the 28-bit block size, we prepare all $2^{28}$ plaintext pairs to encrypt and compute the differential probability of each round, denoted as $P^i$ (for $i = 0,1,\cdots,R$), by counting the number of pairs that satisfy the difference.
		
		Firstly, we build two toys of Chaskey with a 32-bit and a 28-bit block size, denoted as toy Chaskey-32 and toy Chaskey-28, and their rotation constants are $\{\{3,2\}, 3, \{2,2\}, 3 \}$ that is
		\begin{equation}
			\begin{aligned}
				w_1^r &= (v_1^r \lll 3 ) \oplus (w_0^r \ggg 3),\\ 
				w_3^r &= (v_3^r \lll 2 ) \oplus w_2^r,\\
				v_3^{r+1} &= (w_3^r \lll 2 ) \oplus v_0^{r+1}, \\ 
				v_1^{r+1} &= (w_1^r \lll 2 ) \oplus (v_2^{r+1} \ggg 3 ),
			\end{aligned}
		\end{equation}
		with the structure shown in Fig.\ref{fig:Chaskey}.
		In Table \ref{tab:toy chaskey wordlen 8 + round 5}, we find the optimal trail of 5 rounds for the toy Chaskey-32 with a probability of $2^{27}$ under the independence assumption, and the probability $2^{-25}$ computed by our method is much closer to its real value $2^{-24.89148}$.
		For this trail, the main difference in probability calculation is in round 1 and round 3 where the CMAs $ (M_2^1, M_4^1)$ and $(M_2^3, M_4^3)$ are not independent found by our method. 
		As shown in Table \ref{tab:The CMA in the round 1 of the toy Chaskey-32} and \ref{tab:The CMA in the round 3 of the toy Chaskey-32}, they have higher probabilities than that computed under the independence assumption because there are same differential constraints between the additions. 
		So the probabilities of round 1 and 3 in our method are $2^{-5}$ and $2^{-8}$, and from the experiment, it is very close to the real value $2^{-4.99859}$ and $2^{-7.97459}$. 
		For the toy Chaskey-28, we find the optimal trails of 7 and 6 rounds and it can be find in Table \ref{tab:toy chaskey wordlen 7 + round 7} and Table \ref{tab:toy chaskey wordlen 7 + round 6} that their probabilities calculated by our method are both closer to the real value than that computed by the previous way.
		
		Secondly, we build several toy ciphers with the similar structure as the round function of SPECK in which the insertion of the subkey is replaced by the round number. The toy cipher with a 28-bit block size and rotation constant $[6,3]$ is denoted as the toy SPECK-28, and its encryption is 
		\begin{equation}
			\begin{aligned}
				x^{i} &= ((x^{i-1} \ggg 6 ) \oplus y^{i-1}) \oplus (i-1), \\ 
				y^{i} &= x^{i} \oplus (y^{i-1} \lll 3 ),
			\end{aligned}
		\end{equation}
		similar to Fig.\ref{fig:speck}. In Table \ref{tab:toy speck wordlen 8 + round 14}, we find an optimal differential trail of 14 rounds for the toy SPECK-28, and its probability $2^{-22.41504}$ computed by our method is closer to the real value $2^{-22.35614}$ than $2^{-23}$ computed under the independence assumption. Our method find that the CMA including the addition in round 3 and 4 are non-independent, and the uneven output of the previous addition causes the probability of differential propagation on the latter addition higher than that computed independently. 
		And the differential probability of round 4 computed by our method is closed to the real value.  
		
		From the experiments on these toy ciphers, it turns out that our method can calculate the probability of differential trails more accurately than the previous way. 
		
		\begin{table}[tp]
			\begin{minipage}{\textwidth}
				\caption{An optimal differential trail of 5 rounds for the toy Chaskey-32 }
				\label{tab:toy chaskey wordlen 8 + round 5} 
				\setlength{\tabcolsep}{2.2 pt}%
				\renewcommand{\arraystretch}{1.2}%
				\begin{tabular*}{\textwidth}{cccccccc}
					\toprule
					\multicolumn{5}{c}{Differential trail} & \multicolumn{3}{c}{ $-\log_2P^i$} \\
					\cmidrule{6-8}          
					\multirow{2}{*}{Round} & \multirow{2}{*}{$\varDelta v_0$} & \multirow{2}{*}{$\varDelta v_1$} & \multirow{2}{*}{$\varDelta v_2$} & \multirow{2}{*}{$\varDelta v_3$} &  Independent  & Our  & Real \\
					&       &       &       &       & Assumption  & method & value  \\
					\midrule
					0     & 0b00000000 & 0b00000000 & 0b00100000 & 0b00100000 & 1     & 1     & 1 \\
					1     & 0b10000000 & 0b00000000 & 0b00000000 & 0b10000010 & \textcolor[rgb]{ 1,  0,  0}{ 6 } & \textcolor[rgb]{ 1,  0,  0}{ 5} & \textcolor[rgb]{ 1,  0,  0}{4.99859} \\
					2     & 0b10001000 & 0b00000000 & 0b00010000 & 0b10111010 & 10    & 10    & 10.00445 \\
					3     & 0b00000100 & 0b00000000 & 0b00010001 & 0b00000101 & \textcolor[rgb]{ 1,  0,  0}{9} & \textcolor[rgb]{ 1,  0,  0}{8} & \textcolor[rgb]{ 1,  0,  0}{7.97459} \\
					4     & 0b00000000 & 0b00000000 & 0b10000000 & 0b10000000 & 1     & 1     & 0.91384 \\
					5     & 0b00000010 & 0b00000000 & 0b00000000 & 0b00001010 &       &       &  \\
					\midrule
					&       &  \multicolumn{3}{r}{ $-\log_2P =-\sum_{i=0}^{4}\log_2P^i $}      & 27    & 25    & 24.89148 \\
					\bottomrule
				\end{tabular*}
				\footnotetext{Note: our method find that the CMAs in round 1 and 3 are nonindependent and their probabilities computed by our method are close to the real value.}
			\end{minipage}
		\end{table}%
		
		\begin{table}[tp]
			\begin{minipage}[t]{0.5\textwidth}
				\centering
				\caption{The CMA in the round 1 of the toy Chaskey-32}
				\label{tab:The CMA in the round 1 of the toy Chaskey-32}	
				\begin{tabular}{ccc}
					\toprule
					$M^1_{2}$        & Round 1         & Constraint                         \\ 
					\midrule
					$\varDelta x$  &0b00000000 &   \\
					$\varDelta y$  &0b10000010 & $z_{2}=\lnot z_{1}$ \\ 
					$\varDelta z$  &0b10000110 &         \\ 
					\toprule
					$M^1_{4}$       & Round 1        & Constraint          \\
					\midrule
					$\varDelta z$  & 0b10000110 & \\ 
					$\varDelta u$  & 0b10000000 & $\lnot{z}_2={z}_1$     \\
					$\varDelta v$  & 0b00000010 &   \\
					\midrule
					\multicolumn{3}{c}{$\Pr\left(M^1_{4}\middle\vert M^1_2\right)%
						=2^{-1} >2^{-2} = \Pr(M^1_{4})$}   \\
					\botrule
				\end{tabular}
			\end{minipage}
			\begin{minipage}[t]{0.5\textwidth}
				\centering
				\caption{The CMA in the round 3 of the toy Chaskey-32}
				\label{tab:The CMA in the round 3 of the toy Chaskey-32}	
				\begin{tabular}{ccc}
					\toprule
					$M^3_{2}$        & Round 3         & Constraint                          \\ 
					\midrule
					$\varDelta x$  &0b00010001 &   \\
					$\varDelta y$  &0b00000101 & $z_{5}=\lnot z_{4}$ \\ 
					$\varDelta z$  &0b00110100 &         \\ 
					\toprule
					$M^3_{4}$       & Round 3        & Constraint          \\
					\midrule
					$\varDelta z$  & 0b00110100 & \\ 
					$\varDelta u$  & 0b00000100 & $\lnot{z}_5={z}_4$     \\
					$\varDelta v$  & 0b00010000 &   \\
					\midrule
					\multicolumn{3}{c}{$\Pr\left(M^3_{4}\middle\vert M^3_2\right) %
						=2^{-2}   >2^{-3} = \Pr(M^3_{4})$}   \\ 
					\botrule
				\end{tabular}
			\end{minipage}
		\end{table}

		\begin{table}[tp]
			\begin{minipage}{\textwidth}
				\caption{A optimal differential trail of 7 rounds for the toy Chaskey-28 }
				\label{tab:toy chaskey wordlen 7 + round 7}
				\setlength{\tabcolsep}{3pt}%
				\begin{tabular*}{\textwidth}{cccccccc}
					\toprule
					\multicolumn{5}{c}{Differential trail} & \multicolumn{3}{c}{ $-\log_2P^i$} \\
					\cmidrule{6-8}          
					\multirow{2}{*}{Round} & \multirow{2}{*}{$\varDelta v_0$} & \multirow{2}{*}{$\varDelta v_1$} & \multirow{2}{*}{$\varDelta v_2$} & \multirow{2}{*}{$\varDelta v_3$} &  Independent  & Our  & Real \\
					&       &       &       &       & Assumption  & method & value  \\
					\midrule
					0     & 0b1000000 & 0b0000000 & 0b0000011 & 0b1000001 & 4     & 4     & 4 \\
					1     & 0b1000000 & 0b0000000 & 0b0010000 & 0b1010010 & \textcolor[rgb]{ 1,  0,  0}{8} & \textcolor[rgb]{ 1,  0,  0}{6} & \textcolor[rgb]{ 1,  0,  0}{6} \\
					2     & 0b0000000 & 0b0000000 & 0b0010000 & 0b0010000 & 1     & 1     & 1 \\
					3     & 0b1000000 & 0b0000000 & 0b0000000 & 0b1000010 & \textcolor[rgb]{ 1,  0,  0}{7} & \textcolor[rgb]{ 1,  0,  0}{5} & \textcolor[rgb]{ 1,  0,  0}{4.97625} \\
					4     & 0b1000000 & 0b0000000 & 0b0010000 & 0b1010010 & \textcolor[rgb]{ 1,  0,  0}{8} & \textcolor[rgb]{ 1,  0,  0}{6} & \textcolor[rgb]{ 1,  0,  0}{5.81430} \\
					5     & 0b0000000 & 0b0000000 & 0b0010000 & 0b0010000 & 1     & 1     & 0.81714 \\
					6     & 0b1000000 & 0b0000000 & 0b0000000 & 0b1000010 & 4     & 4     & 3.39232 \\
					7     & 0b1000100 & 0b0000000 & 0b0010000 & 0b1100110 &       &       &  \\
					\midrule
					&       &    \multicolumn{3}{r}{ $-\log_2P =-\sum_{i=0}^{6}\log_2P^i $}      & 33    & 27    & 26 \\
					\bottomrule
				\end{tabular*}
				\footnotetext{Note: our method find that the CMAs in round 1, 3, and 4 are nonindependent and their probabilities computed by our method are close to the real value.}
			\end{minipage}
		\end{table}%
		
		\begin{table}[tp]
			\begin{minipage}{\textwidth}
				\caption{A optimal differential trail of 6 rounds for the toy Chaskey-28 }
				\label{tab:toy chaskey wordlen 7 + round 6}
				\setlength{\tabcolsep}{3pt}%
				\begin{tabular*}{\textwidth}{cccccccc}
					\toprule
					\multicolumn{5}{c}{Differential trail} & \multicolumn{3}{c}{ $-\log_2P^i$} \\
					\cmidrule{6-8}          
					\multirow{2}{*}{Round} & \multirow{2}{*}{$\varDelta v_0$} & \multirow{2}{*}{$\varDelta v_1$} & \multirow{2}{*}{$\varDelta v_2$} & \multirow{2}{*}{$\varDelta v_3$} &  Independent  & Our  & Real \\
					&       &       &       &       & Assumption  & method & value  \\
					\midrule
					0     & 0b1000000 & 0b0000000 & 0b0010001 & 0b1010001 & 5     & 5     & 5 \\
					1     & 0b0000000 & 0b0000000 & 0b0010000 & 0b0010000 & 1     & 1     & 1 \\
					2     & 0b1000000 & 0b0000000 & 0b0000000 & 0b1000010 & \textcolor[rgb]{ 1,  0,  0}{7} & \textcolor[rgb]{ 1,  0,  0}{5} & \textcolor[rgb]{ 1,  0,  0}{5.00565} \\
					3     & 0b1000000 & 0b0000000 & 0b0010000 & 0b1010010 & \textcolor[rgb]{ 1,  0,  0}{8} & \textcolor[rgb]{ 1,  0,  0}{6} & \textcolor[rgb]{ 1,  0,  0}{5.99154} \\
					4     & 0b0000000 & 0b0000000 & 0b0010000 & 0b0010000 & 1     & 1     & 0.96390 \\
					5     & 0b1000000 & 0b0000000 & 0b0000000 & 0b1000010 & 4     & 4     & 3.82947 \\
					6     & 0b1000100 & 0b0000000 & 0b0010000 & 0b1100110 &       &       &  \\
					\midrule
					&       &    \multicolumn{3}{r}{ $-\log_2P =-\sum_{i=0}^{5}\log_2P^i $}      & 26    & 22    & 21.79055 \\
					\bottomrule
				\end{tabular*}
				\footnotetext{Note: our method find that the CMAs in round 2 and 3 are nonindependent and their probabilities computed by our method are close to the real value.}
			\end{minipage}
		\end{table}%

		\begin{table}[tp]
			\begin{minipage}{\textwidth}
				\caption{A optimal differential trail of 8 rounds for the toy SPECK-28 }
				\label{tab:toy speck wordlen 8 + round 14}
				\setlength{\tabcolsep}{5pt}%
				\begin{tabular*}{\textwidth}{cccccc}
					\toprule
					\multicolumn{3}{c}{Differential trail}& \multicolumn{3}{c}{$-\log_2 P^i$} \\
					\cmidrule{4-6}          
					\multirow{2}{*}{Round} & \multirow{2}{*}{$\varDelta x$} & \multirow{2}{*}{$\varDelta y$} &  Independent  & Our  & Real \\
					&       &       & Assumption  & method & value  \\
					\midrule
					0     &  0b00100101000000 & 0b00000000100100 & 3     & 3     & 3 \\
					1     &  0b00000000000001 & 0b00000100100001 & 3     & 3     & 3 \\
					2     &  0b00000000100001 & 0b00100100101001 & 5     & 5     & 4.98764 \\
					3     &  0b10100000101001 & 0b10000101100000 & 5     & 5     & 4.98722 \\
					4     &  0b00101100000000 & 0b00000000000100 & \textcolor[rgb]{ 1,  0,  0}{3} & \textcolor[rgb]{ 1,  0,  0}{2.41504} & \textcolor[rgb]{ 1,  0,  0}{2.38489} \\
					5     &  0b00000000100000 & 0b00000000000000 & 0     & 0     & 0  \\
					6     &  0b10000000000000 & 0b10000000000000 & 1     & 1     & 1.10886 \\
					7     &  0b10000010000000 & 0b10000010000100 & 3     & 3     & 2.88753 \\
					8     &  0b10000000000110 & 0b10010000100010 &       &       &  \\
					\midrule
					&\multicolumn{2}{r}{ $-\log_2P =-\sum_{i=0}^{7}\log_2P^i $}& 23    & 22.41504 & 22.35614 \\
					\bottomrule
				\end{tabular*}
			\end{minipage}
		\end{table}%

		\section{SPECK64/128}
		\label{sec:SPECK64/128}
		\subsection{The trail of 14 rounds for SPECK64/128}
		\label{subsec:The CMAs in trail of 14 rounds SPECK64/128}
		In the trail we find, there are three non-independence CMAs listed in Table \ref{tab:(7,10) CMA of 14 round SPECK64/128}, Table \ref{tab:(8,11) CMA of 14 round SPECK64/128}, and Table \ref{tab:(9,12) CMA of 14 round SPECK64/128}. 
		
		\begin{table}[htp]
			\begin{center}
				\begin{minipage}{\textwidth}
					\caption{The CMA in Round 7 and Round 10 of the 14 round version SPECK64/128}
					\label{tab:(7,10) CMA of 14 round SPECK64/128}	
					\centering
					\begin{tabular}{ccc}
						\toprule
						$M_{7}$        & Round 7 of the key schedule       & The constraints on the output                         \\ 
						\midrule
						&\qquad \qquad{\textcolor{red}{15}}\qquad \qquad &   \\
						$\varDelta x$  &0b00000000000000{\textcolor{red}{000}}000000000000000 &   \\
						$\varDelta y$  &0b00000000000000{\textcolor{red}{001}}000000000000000 & $z_{17}=\lnot z_{16}=\lnot z_{15} = y_{15}$ \\ 
						$\varDelta z$  &0b00000000000000{\textcolor{red}{111}}000000000000000 &         \\ 
						\toprule
						$M_{10}$       & Round 10 of the key schedule           & The constraints on the input         \\
						\midrule
						&\qquad \qquad \qquad \qquad \qquad {\textcolor{red}{7}} &   \\
						$\varDelta zz$ & 0b0000000000000000000000{\textcolor{red}{111}}0000000 & \\ 
						$\varDelta u$  & 0b0000000000000111100000{\textcolor{red}{001}}0000000 & $\lnot{zz}_9={zz}_8={zz}_7$     \\
						$\varDelta v$  & 0b0000000000111100100000{\textcolor{red}{000}}0000000 &   \\
						\midrule
						\multicolumn{3}{c}{$zz\ =\ (z\oplus{\rm 0b0111})\ggg8$, $\Pr\left(M_{10}\middle\vert M_7\right)= \frac{\Pr(M_{10},M_{7})}{\Pr(M_{7})} =2^{-8}>2^{-10} = \Pr(M_{10})$}   \\ 
						\botrule
					\end{tabular}
				\end{minipage}
			\end{center}
		\end{table}
		
		\begin{table}[tp]
			\begin{center}
				\begin{minipage}{\textwidth}
					\caption{The CMA in Round 8 and Round 11 round of the 14 round version SPECK64/128}
					\label{tab:(8,11) CMA of 14 round SPECK64/128}		
					\centering
					\begin{tabular}{ccc}
						\toprule
						$M_{8}$        & Round 8 of the key schedule         & The constraints on the output                         \\ 
						\midrule
						&\qquad \  {\textcolor{red}{18}}\qquad \qquad \qquad & \\
						$\varDelta x$  & 0b0000000000{\textcolor{red}{0000}}000000000000000000 &   \\
						$\varDelta y$  & 0b0000000000{\textcolor{red}{0001}}111000000000000000 &$z_{21}=\lnot z_{20}=\lnot z_{19}=\lnot z_{18} = y_{18}$ \\ 
						$\varDelta z$  & 0b0000000000{\textcolor{red}{1111}}001000000000000000 &         \\ 
						\toprule
						$M_{11}$       & Round 11 of the key schedule          & The constraints on the input         \\
						\midrule
						&\qquad \qquad \qquad \qquad \ \ {\textcolor{red}{10}}\qquad \   & \\
						$\varDelta zz$ & 0b000000000000000000{\textcolor{red}{1111}}0010000000 & \\ 
						$\varDelta u$  & 0b000000000000000010{\textcolor{red}{0001}}0000000000 & $\lnot{zz}_{13}={zz}_{12}={zz}_{11}={zz}_{10}$ \\
						$\varDelta v$  & 0b000000000000000010{\textcolor{red}{0000}}0010000000 &   \\
						\midrule
						\multicolumn{3}{c}{$zz\ =\ (z\oplus{\rm 0b1000})\ggg8$, $\Pr\left(M_{11}\middle\vert  M_8\right)= \frac{\Pr(M_{8},M_{11})}{\Pr(M_{8})} =2^{-3}>2^{-6} = \Pr(M_{11})$}   \\ 
						\botrule
					\end{tabular}
				\end{minipage}
			\end{center}
		\end{table}
		
		\begin{table}[tp]
			\begin{center}
				\begin{minipage}{\textwidth}
					\caption{The CMA in Round 9 and Round 12 of the 14 round version SPECK64/128}
					\label{tab:(9,12) CMA of 14 round SPECK64/128}	
					\centering
					\begin{tabular}{ccc}
						\toprule
						$M_{9}$        & Round 9 of the key schedule       & The constraints on the output                         \\ 
						\midrule
						&\qquad \qquad{\textcolor{red}{15}}\qquad \qquad &   \\
						$\varDelta x$  &0b00000000000000{\textcolor{red}{000}}000000000000000 &   \\
						$\varDelta y$  &0b00000000000000{\textcolor{red}{001}}000000000000000 &$z_{17}=\lnot z_{16}=\lnot z_{15} = y_{15}$ \\ 
						$\varDelta z$  &0b00000000000000{\textcolor{red}{111}}000000000000000 &         \\ 
						\toprule
						$M_{12}$       & Round 12 of the key schedule          & The constraints on the input         \\
						\midrule
						&\qquad \qquad \qquad \qquad \qquad {\textcolor{red}{7}} &   \\
						$\varDelta zz$ & 0b0000000000000000000000{\textcolor{red}{111}}0000000 & \\ 
						$\varDelta u$  & 0b0000000000000100101000{\textcolor{red}{001}}0000000 & $\lnot{zz}_9={zz}_8={zz}_7$ \\
						$\varDelta v$  & 0b0000000000000100101000{\textcolor{red}{000}}0000000 &   \\
						\midrule
						\multicolumn{3}{c}{$zz\ =\ (z\oplus{\rm 0b1001})\ggg8$, $\Pr\left(M_{12}\middle\vert M_9\right)= \frac{\Pr(M_{12},M_{9})}{\Pr(M_{9})} =2^{-4}>2^{-6} = \Pr(M_{12})$}   \\ 
						\botrule
					\end{tabular}
				\end{minipage}
			\end{center}
		\end{table}
		
		\subsection{The trail of 15 rounds for SPECK64/128}
		\label{subsec:The CMAs in trail of 15 rounds SPECK64/128}
		Four non-independence CMAs in this trail are listed in Table \ref{tab:(7,10) CMA of 15 round SPECK64/128}, \ref{tab:(8,11) CMA of 15 round SPECK64/128}, \ref{tab:(9,12) CMA of 15 round SPECK64/128}, and \ref{tab:(10,13) CMA of 15 round SPECK64/128}.

		\begin{table}[htb]
			\begin{center}
				\begin{minipage}{\textwidth}
					\caption{The CMA in the 9th and 12th round of the 15 round version SPECK64/128}
					\label{tab:(9,12) CMA of 15 round SPECK64/128}		
					\centering
					\begin{tabular}{ccc}
						\toprule
						$M_{9}$        & Round 9 of the key schedule       & The constraints on the output      \\ 
						\midrule
						$\varDelta x$  & 0b00000000000000000000000010000000 & $z_{17} = \neg z_{16} = \neg z_{15} = y_{15}$ \\ 				
						$\varDelta y$  & 0b00000000000000001000000000000000 &  \multirow{2}{*}{$z_{7} =x_7$}         \\ 
						$\varDelta z$  & 0b00000000000000111000000010000000 &                             \\     
						\toprule
						$M_{12}$       & Round 12 of the key schedule          & The constraints on the input         \\
						\midrule
						$\varDelta zz$ & 0b10000000000000000000001110000000 &  \\ 
						$\varDelta u$  & 0b00000000000001011010000010000000 & $\lnot{zz}_9={zz}_8, {zz}_7 = u_7$ \\ 
						$\varDelta v$  & 0b10000000000011011010000100000000 &                                    \\ 
						\midrule
						\multicolumn{3}{c}{$zz\ =\ (z\oplus{\rm 0b1001})\ggg8$, $\Pr\left(M_{12}\middle\vert  M_9\right)= \frac{\Pr(M_{9},M_{12})}{\Pr(M_{9})} =2^{-7}>2^{-8} = \Pr(M_{12})$}   \\ 
						\botrule
					\end{tabular}
				\end{minipage}
			\end{center}
		\end{table}
		
		\begin{table}[tp]
			\begin{center}
				\begin{minipage}{\textwidth}
					\caption{The CMA in the 7th and 10th round of the 15 round version SPECK64/128}
					\label{tab:(7,10) CMA of 15 round SPECK64/128}		
					\centering
					\begin{tabular}{ccc}
						\toprule
						$M_{7}$        & Round 7 of the key schedule                            & The constraints on the output      \\ 
						\midrule
						$\varDelta x$  & 0b00000000000000000000000000000000 &  \\ 				
						$\varDelta y$  & 0b00000000000000001000000000000000 & $z_{17} = \neg z_{16} = \neg z_{15} = y_{15}$          \\ 
						$\varDelta z$  & 0b00000000000000111000000000000000 &            \\    
						\toprule
						$M_{10}$       & Round 10 of the key schedule          & The constraints on the input         \\
						\midrule
						$\varDelta zz$ & 0b00000000000000000000001110000000 & \multirow{3}{*}{$\lnot{zz}_9 = {zz}_8 = {zz}_7$}  \\  
						$\varDelta u$  & 0b00000000000001111000000010000000 &                                                                  \\   
						$\varDelta v$  & 0b00000000001111001000000000000000 &                                                                  \\ 
						\midrule
						\multicolumn{3}{c}{ $zz\ =\ (z\oplus{\rm 0b0111})\ggg8$, $\Pr\left(M_{10}\middle\vert M_7\right)= \frac{\Pr(M_{7},M_{10})}{\Pr(M_{7})} =2^{-8}>2^{-10} = \Pr(M_{10})$}   \\ 
						\botrule
					\end{tabular}
				\end{minipage}
			\end{center}
		\end{table}
		
		\begin{table}[tp]
			\begin{center}
				\begin{minipage}{\textwidth}
					\caption{The CMA in the 10th and 13th round of the 15 round version SPECK64/128}
					\label{tab:(10,13) CMA of 15 round SPECK64/128}		
					\centering
					\begin{tabular}{ccc}
						\toprule
						$M_{10}$        & Round 10 of the key schedule                            & The constraints on the output      \\ 
						\midrule
						$\varDelta x$  & 0b00000000000000000000001110000000 & \multirow{3}{*}{$z_{21}=\lnot z_{20}=\lnot z_{19}=\lnot z_{18}$} \\   					
						$\varDelta y$  & 0b00000000000001111000000010000000 &                                                                  \\  
						$\varDelta z$  & 0b00000000001111001000000000000000 &                                                                  \\ 
						\toprule
						$M_{13}$       & Round 13 of the key schedule          & The constraints on the input         \\
						\midrule
						$\varDelta zz$ & 0b00000000000000000011110010000000 & \multirow{3}{*}{${zz}_{12} = {zz}_{11} = {zz}_{10} $}  \\   
						$\varDelta u$  & 0b10000000001000001010010100000000 &                                                               \\   
						$\varDelta v$  & 0b10000000001000000110000010000000 &                                                                  \\
						\midrule
						\multicolumn{3}{c}{$zz\ =\ (z\oplus{\rm 0b01010})\ggg8$, $\Pr\left(M_{13}\middle\vert M_{10}\right)= \frac{\Pr(M_{10},M_{13})}{\Pr(M_{10})} =2^{-7}>2^{-9} = \Pr(M_{13})$}   \\ 
						\botrule
					\end{tabular}
				\end{minipage}
			\end{center}
		\end{table}
		
		\begin{table}[tp]
			\begin{center}
				\begin{minipage}{\textwidth}
					\caption{The CMA in the 8th and 11th round of the 15 round version SPECK64/128}
					\label{tab:(8,11) CMA of 15 round SPECK64/128}		
					\centering
					\begin{tabular}{ccc}
						\toprule
						$M_{8}$        & Round 8 of the key schedule                            & The constraints on the output      \\ 
						\midrule
						$\varDelta x$  & 0b00000000000000000000000000000000 & \multirow{3}{*}{$z_{21}=\lnot z_{20}=\lnot z_{19}=\lnot z_{18}$  } \\   					
						$\varDelta y$  & 0b00000000000001111000000000000000 &                                                                  \\  
						$\varDelta z$  & 0b00000000001111001000000000000000 &                                                                  \\ 
						\toprule
						$M_{11}$       & Round 11 of the key schedule        & The constraints on the input         \\
						\midrule
						$\varDelta zz$ & 0b00000000000000000011110010000000 & \multirow{3}{*}{$ \neg z_{23} = {zz}_{12} = {zz}_{11} ={zz}_{10}$}  \\   
						$\varDelta u$  & 0b00000000000000001000010000000000	&                                                               \\   
						$\varDelta v$  & 0b00000000000000011000000010000000 &                                                                  \\ 
						\midrule
						\multicolumn{3}{c}{$zz\ =\ (z\oplus{\rm 0b1000})\ggg8$, $\Pr\left(M_{11}\middle\vert M_{8}\right)= \frac{\Pr(M_{8},M_{11})}{\Pr(M_{8})} =2^{-4}>2^{-7} = \Pr(M_{11})$}   \\ 
						\botrule
					\end{tabular}
				\end{minipage}
			\end{center}
		\end{table}
		
		\section{The detail of the trails }
		\label{sec:The detail of the trails}  
		The RKD trails of SPECK family which are found in Sect. \ref{subsec:seach result} and have higher probability than the previous work are listed in this section.
		\subsection{The trail for SPECK32/64}
		For SPECK32/64, the trails of 11 and 15 rounds are shown in Table \ref{tab:The RKD trails of 11 rounds for SPECK32/64} and Table \ref{tab:The RKD trails of 15 rounds for SPECK32/64}.	
		\begin{table}[htp]
			\begin{center}
				\begin{minipage}{300pt}
					\caption{The optimal RKD trails of 11 rounds for SPECK32/64}
					\label{tab:The RKD trails of 11 rounds for SPECK32/64}
					\centering
					\begin{tabular}{ccccccc}
						\toprule
						Round & $\varDelta l$        & $\varDelta k$                 & $-log_2{P_K^i}$& $\varDelta x$                 & $\varDelta y$                 & $-log_2{P_D^i}$ \\ 
						\midrule
						0     & 0x0011               & 0x4a00                        & 4                                   & 0x48e1                        & 0x42e0                        &                                     \\
						1     & 0x0080               & 0x0001                        & 1                                   & 0x02e1                        & 0x4001                        & 4                                   \\
						2     & 0x0200               & 0x0004                        & 1                                   & 0x0205                        & 0x0200                        & 3                                   \\
						3     & 0x2800               & 0x0010                        & 2                                   & 0x0800                        & 0x0000                        & 1                                   \\
						4     & 0x0000               & {\color[HTML]{FE0000} 0x0000} & 0                                   & {\color[HTML]{FE0000} 0x0000} & {\color[HTML]{FE0000} 0x0000} & 0                                   \\
						5     & 0x0000               & {\color[HTML]{FE0000} 0x0000} & 0                                   & {\color[HTML]{FE0000} 0x0000} & {\color[HTML]{FE0000} 0x0000} & 0                                   \\
						6     & 0x0040               & {\color[HTML]{FE0000} 0x0000} & 0                                   & {\color[HTML]{FE0000} 0x0000} & {\color[HTML]{FE0000} 0x0000} & 0                                   \\
						7     & 0x0000               & 0x8000                        & 0                                   & {\color[HTML]{FE0000} 0x0000} & {\color[HTML]{FE0000} 0x0000} & 0                                   \\
						8     & 0x0000               & 0x8002                        & 1                                   & 0x8000                        & 0x8000                        & 1                                   \\
						9     & 0x8000               & 0x8008                        & 2                                   & 0x0102                        & 0x0100                        & 3                                   \\
						10    & 0x8000               & 0x812a                        &                                     & 0x850a                        & 0x810a                        & 5                                   \\
						11    & \multicolumn{1}{l}{} & \multicolumn{1}{l}{}          &                                     & 0x152a                        & 0x1100                        &                                     \\ 
						&                      & $-log_2{P_K}$                 & 11              &                               & $-log_2{P_D}$                 & 17           \\ 
						\midrule
						\multicolumn{7}{c}{One of the Weak key: $(l^2,l^1,l^0,k^0)$ = 0xb90d, 0x6d3, 0x2d46, 0xdf0b.}\\
						\botrule                                                                                   
					\end{tabular}
				\end{minipage}
			\end{center}
		\end{table}
		\begin{table}[htp]
			\begin{center}
				\begin{minipage}{300pt}
					\caption{The RKD trails of 15 rounds for SPECK32/64}
					\label{tab:The RKD trails of 15 rounds for SPECK32/64}
					\centering
					\begin{tabular}{ccccccc}
						\toprule
						Round & $\varDelta l$        & $\varDelta k$                 & $-log_2{P_K^i}$& $\varDelta x$                 & $\varDelta y$                 & $-log_2{P_D^i}$ \\ 
						\midrule
						0     & 0x0400        & 0x0009                        & 2               & 0x524b                        & 0x064a                        &                 \\
						1     & 0x1880        & 0x0025                        & 4               & 0x5242                        & 0x5408                        & 7               \\
						2     & 0x4000        & 0x0080                        & 1               & 0x5081                        & 0x00a0                        & 4               \\
						3     & 0x0001        & 0x0200                        & 1               & 0x0281                        & 0x0001                        & 3               \\
						4     & 0x0014        & 0x0800                        & 2               & 0x0004                        & 0x0000                        & 1               \\
						5     & 0x0000        & {\color[HTML]{FE0000} 0x0000} & 0               & {\color[HTML]{FE0000} 0x0000} & {\color[HTML]{FE0000} 0x0000} & 0               \\
						6     & 0x0000        & {\color[HTML]{FE0000} 0x0000} & 0               & {\color[HTML]{FE0000} 0x0000} & {\color[HTML]{FE0000} 0x0000} & 0               \\
						7     & 0x2000        & {\color[HTML]{FE0000} 0x0000} & 1               & {\color[HTML]{FE0000} 0x0000} & {\color[HTML]{FE0000} 0x0000} & 0               \\
						8     & 0x0000        & 0x0040                        & 2               & {\color[HTML]{FE0000} 0x0000} & {\color[HTML]{FE0000} 0x0000} & 0               \\
						9     & 0x0000        & 0x01c0                        & 5               & 0x0040                        & 0x0040                        & 2               \\
						10    & 0x0040        & 0x0140                        & 2               & 0x8100                        & 0x8000                        & 2               \\
						11    & 0x00c0        & 0x8440                        & 15              & 0x8042                        & 0x8040                        & 3               \\
						12    & 0x0640        & 0x6afd                        & 12              & 0x8100                        & 0x8002                        & 2               \\
						13    & 0x8140        & 0xc01e                        & 6               & 0xebfd                        & 0xebf7                        & 5               \\
						14    & 0x7bff        & 0x416b                        &                 & 0x2fc0                        & 0x801f                        & 3               \\
						15    &               &                               &                 & 0x4155                        & 0x412b                        &                 \\
						&               & $-log_2{P_K}$                 & 53              &                               & $-log_2{P_D}$                 & 32              \\
						\midrule
						\multicolumn{7}{c}{One of the Weak key: $(l^2,l^1,l^0,k^0)$ = 0xb349, 0x973a, 0x786f, 0x31cb.}\\
						\botrule                                                                                   
					\end{tabular}
				\end{minipage}
			\end{center}
		\end{table}
		
		\subsection{The trails for SPECK48/96}
		For SPECK48/96, the trails of 11, 12, 14, and 15 rounds are shown in Table \ref{tab:The RKD trails of 11 rounds for SPECK48/96}, Table \ref{tab:The RKD trails of 12 rounds for SPECK48/96}, Table \ref{tab:The RKD trails of 14 rounds for SPECK48/96} and Table \ref{tab:The RKD trails of 15 rounds for SPECK48/96} respectively.
		\begin{table}[tp]
			\begin{center}
				\begin{minipage}{300pt}
					\caption{The RKD trails of 11 rounds for SPECK48/96}
					\label{tab:The RKD trails of 11 rounds for SPECK48/96}
					\centering
					\begin{tabular}{ccccccc}
						\toprule
						Round & $\varDelta l$        & $\varDelta k$                 & $-log_2{P_K^i}$& $\varDelta x$                 & $\varDelta y$                 & $-log_2{P_D^i}$ \\ 
						\midrule
						0     & 0x820008             & 0x081200                        & 4                                   & 0x4a12d0                        & 0x4040d0                        &                                     \\
						1     & 0x000040             & 0x400000                        & 1                                   & 0x4200d0                        & 0x024000                        & 5                                   \\
						2     & 0x000200             & 0x000002                        & 1                                   & 0x120200                        & 0x000200                        & 3                                   \\
						3     & 0x009000             & 0x000010                        & 2                                   & 0x001000                        & 0x000000                        & 1                                   \\
						4     & 0x000000             & {\color[HTML]{FE0000} 0x000000} & 0                                   & {\color[HTML]{FE0000} 0x000000} & {\color[HTML]{FE0000} 0x000000} & 0                                   \\
						5     & 0x000000             & {\color[HTML]{FE0000} 0x000000} & 0                                   & {\color[HTML]{FE0000} 0x000000} & {\color[HTML]{FE0000} 0x000000} & 0                                   \\
						6     & 0x000080             & {\color[HTML]{FE0000} 0x000000} & 0                                   & {\color[HTML]{FE0000} 0x000000} & {\color[HTML]{FE0000} 0x000000} & 0                                   \\
						7     & 0x000000             & 0x800000                        & 0                                   & {\color[HTML]{FE0000} 0x000000} & {\color[HTML]{FE0000} 0x000000} & 0                                   \\
						8     & 0x000000             & 0x800004                        & 1                                   & 0x800000                        & 0x800000                        & 1                                   \\
						9     & 0x800000             & 0x800020                        & 2                                   & 0x008004                        & 0x008000                        & 3                                   \\
						10    & 0x800000             & 0x808124                        &                                     & 0x8480a0                        & 0x8080a0                        & 5                                   \\
						11    &  					 &  					           &                                     & 0xa08504                        & 0xa48000                        &                                     \\  
						&                      & $-log_2{P_K}$                   & 11             					 &                                 & $-log_2{P_D}$                   & 18                  \\  
						\midrule
						\multicolumn{7}{c}{One of the Weak key: $(l^2,l^1,l^0,k^0)$ = 0x79f899, 0x47b57e, 0x19fe62, 0xfbf6e3.}\\
						\botrule                                                                                   
					\end{tabular}
				\end{minipage}
			\end{center}
		\end{table}
		
		\begin{table}[tp]
			\begin{center}
				\begin{minipage}{300pt}
					\caption{The RKD trails of 12 rounds for SPECK48/96}
					\label{tab:The RKD trails of 12 rounds for SPECK48/96}
					\centering
					\begin{tabular}{ccccccc}
						\toprule
						Round & $\varDelta l$        & $\varDelta k$                 & $-log_2{P_K^i}$& $\varDelta x$                 & $\varDelta y$                 & $-log_2{P_D^i}$ \\ 
						\midrule
						0     & 0x820008             & 0x081200                        & 4                                   & 0x4a1250                        & 0x404050                        &                                     \\
						1     & 0x000040             & 0x400000                        & 1                                   & 0x420050                        & 0x024000                        & 5                                   \\
						2     & 0x000200             & 0x000002                        & 1                                   & 0x120200                        & 0x000200                        & 3                                   \\
						3     & 0x009000             & 0x000010                        & 2                                   & 0x001000                        & 0x000000                        & 1                                   \\
						4     & 0x000000             & {\color[HTML]{FE0000} 0x000000} & 0                                   & {\color[HTML]{FE0000} 0x000000} & {\color[HTML]{FE0000} 0x000000} & 0                                   \\
						5     & 0x000000             & {\color[HTML]{FE0000} 0x000000} & 0                                   & {\color[HTML]{FE0000} 0x000000} & {\color[HTML]{FE0000} 0x000000} & 0                                   \\
						6     & 0x000080             & {\color[HTML]{FE0000} 0x000000} & 0                                   & {\color[HTML]{FE0000} 0x000000} & {\color[HTML]{FE0000} 0x000000} & 0                                   \\
						7     & 0x000000             & 0x800000                        & 0                                   & {\color[HTML]{FE0000} 0x000000} & {\color[HTML]{FE0000} 0x000000} & 0                                   \\
						8     & 0x000000             & 0x800004                        & 1                                   & 0x800000                        & 0x800000                        & 1                                   \\
						9     & 0x800000             & 0x800020                        & 2                                   & 0x008004                        & 0x008000                        & 3                                   \\
						10    & 0x800000             & 0x808124                        & 4                                   & 0x8480a0                        & 0x8080a0                        & 5                                   \\
						11    & 0x800004             & 0x840800                        &                                     & 0xa08504                        & 0xa48000                        & 7                                   \\
						12    & 					 &  				               &                                     & 0x242885                        & 0x002880                        &                                     \\ 
						&                      & $-log_2{P_K}$                   & 15             					 &                                 & $-log_2{P_D}$                   & 25         \\  
						\midrule
						\multicolumn{7}{c}{One of the Weak key: $(l^2,l^1,l^0,k^0)$ = 0x9c88b5, 0x482d8, 0x941556, 0xec0bfa.}\\
						\botrule                                                                                   
					\end{tabular}
				\end{minipage}
			\end{center}
		\end{table}
		
		\begin{table}[tp]
			\begin{center}
				\begin{minipage}{\textwidth}
					\caption{The RKD trails of 14 rounds for SPECK48/96}
					\label{tab:The RKD trails of 14 rounds for SPECK48/96}
					\begin{tabular*}{\textwidth}{ccccccc}
						\toprule
						Round & $\varDelta l$        & $\varDelta k$                 & $-log_2{P_K^i}$& $\varDelta x$                 & $\varDelta y$                 & $-log_2{P_D^i}$ \\ 
						\midrule
						0     & 0x0092c4             & 0x440810                        & 6                                   & 0x6d0831                        & 0x212821                        &                                     \\
						1     & 0x080020             & 0x204800                        & 3                                   & 0x290021                        & 0x082800                        & 7                                   \\
						2     & 0x000100             & 0x020001                        & 2                                   & 0x090900                        & 0x484900                        & 9                                   \\
						3     & 0x000882             & 0x120008                        & 2.5906 (3)*                                   & 0x4a420a                        & 0x080a08                        & 10                                  \\
						4     & 0x004000             & 0x000040                        & 1                                   & 0x005042                        & 0x400002                        & 5                                   \\
						5     & 0x020000             & 0x000200                        & 1                                   & 0x020012                        & 0x020000                        & 3                                   \\
						6     & 0x900000             & 0x001000                        & 2                                   & 0x100000                        & 0x000000                        & 1                                   \\
						7     & 0x000000             & {\color[HTML]{FE0000} 0x000000} & 0                                   & {\color[HTML]{FE0000} 0x000000} & {\color[HTML]{FE0000} 0x000000} & 0                                   \\
						8     & 0x000000             & {\color[HTML]{FE0000} 0x000000} & 0                                   & {\color[HTML]{FE0000} 0x000000} & {\color[HTML]{FE0000} 0x000000} & 0                                   \\
						9     & 0x008000             & {\color[HTML]{FE0000} 0x000000} & 1                                   & {\color[HTML]{FE0000} 0x000000} & {\color[HTML]{FE0000} 0x000000} & 0                                   \\
						10    & 0x000000             & 0x000080                        & 1                                   & {\color[HTML]{FE0000} 0x000000} & {\color[HTML]{FE0000} 0x000000} & 0                                   \\
						11    & 0x000000             & 0x000480                        & 2                                   & 0x000080                        & 0x000080                        & 1                                   \\
						12    & 0x000080             & 0x002080                        & 2                                   & 0x800400                        & 0x800000                        & 2                                   \\
						13    & 0x000080             & 0x812480                        &                                     & 0x80a084                        & 0x80a080                        & 5                                   \\
						14    & 					 & 						           &                                     & 0x8504a0                        & 0x8000a4                        &                                     \\
						&                      & $-log_2{P_K}$                   & 23.5906 (24)          &                                 & $-log_2{P_D}$                   & 43             \\ 
						\midrule
						\multicolumn{7}{c}{One of the Weak key: $(l^2,l^1,l^0,k^0)$ = 0xb67424, 0xd2a212, 0x3cadda, 0x65c7df.}\\
						\botrule                                                                                   
					\end{tabular*}
					\footnotetext[*]{The values in brackets are the probability computed under the independence assumption.}
				\end{minipage}
			\end{center}
		\end{table}
		
		\begin{table}[tp]
			\begin{center}
				\begin{minipage}{\textwidth}
					\caption{The RKD trails of 15 rounds for SPECK48/96}
					\label{tab:The RKD trails of 15 rounds for SPECK48/96}
					\centering
					\begin{tabular*}{\textwidth}{ccccccc}
						\toprule
						Round & $\varDelta l$        & $\varDelta k$                 & $-log_2{P_K^i}$& $\varDelta x$                 & $\varDelta y$                 & $-log_2{P_D^i}$ \\ 
						\midrule
						0     & 0x820008      & 0x081200                        & 4               & 0x4a1250                        & 0x404050                        &                 \\
						1     & 0x000040      & 0x400000                        & 1               & 0x420050                        & 0x024000                        & 5               \\
						2     & 0x000200      & 0x000002                        & 1               & 0x120200                        & 0x000200                        & 3               \\
						3     & 0x009000      & 0x000010                        & 2               & 0x001000                        & 0x000000                        & 1               \\
						4     & 0x000000      & {\color[HTML]{FE0000} 0x000000} & 0               & {\color[HTML]{FE0000} 0x000000} & {\color[HTML]{FE0000} 0x000000} & 0               \\
						5     & 0x000000      & {\color[HTML]{FE0000} 0x000000} & 0               & {\color[HTML]{FE0000} 0x000000} & {\color[HTML]{FE0000} 0x000000} & 0               \\
						6     & 0x000080      & {\color[HTML]{FE0000} 0x000000} & 0               & {\color[HTML]{FE0000} 0x000000} & {\color[HTML]{FE0000} 0x000000} & 0               \\
						7     & 0x000000      & 0x800000                        & 0               & {\color[HTML]{FE0000} 0x000000} & {\color[HTML]{FE0000} 0x000000} & 0               \\
						8     & 0x000000      & 0x800004                        & 1               & 0x800000                        & 0x800000                        & 1               \\
						9     & 0x800000      & 0x800020                        & 4               & 0x008004                        & 0x008000                        & 3               \\
						10    & 0x800000      & 0x8081e4                        & 9               & 0x8480a0                        & 0x8080a0                        & 6               \\
						11    & 0x800004      & 0x840000                        & 5               & 0xa08584                        & 0xa48080                        & 7               \\
						12    & 0x8080e0      & 0xab8004                        & 9               & 0xa42005                        & 0x802400                        & 8               \\
						13    & 0x800f24      & 0x0400a1                        & 5.5008 (9)               & 0x210024                        & 0x202020                        & 5               \\
						14    & 0x8b8000      & 0x0085a8                        &                 & 0x000181                        & 0x010080                        & 3               \\
						15    &               &                                 &                 & 0x808529                        & 0x888129                        &                 \\  
						&               & $-log_2{P_K}$                   & 41.5008 (45)              &                                 & $-log_2{P_D}$                   & 42              \\  
						\midrule
						\multicolumn{7}{c}{One of the Weak key: $(l^2,l^1,l^0,k^0)$ = 0x345351, 0x6c76c1, 0xa3cadf, 0xb4f9f9.} \\
						\botrule                                                                                   
					\end{tabular*}
				\end{minipage}
			\end{center}
		\end{table}
		
		\subsection{The trails for SPECK64/128}
		For SPECK64/128, the trails of 14 and 15 rounds are shown in Table \ref{tab:The RKD trails of 14 rounds for SPECK64/128} and Table \ref{tab:The RKD trails of 15 rounds for SPECK64/128} respectively.
		\begin{table}[htp]
			\begin{center}
				\begin{minipage}{\textwidth}
					\caption{The RKD trails of 14 rounds for SPECK64/128}
					\label{tab:The RKD trails of 14 rounds for SPECK64/128}
					\centering
					\begin{tabular*}{\textwidth}{ccccccc}
						\toprule
						Round & $\varDelta l$        & $\varDelta k$                 & $-log_2{P_K^i}$& $\varDelta x$                 & $\varDelta y$                 & $-log_2{P_D^i}$ \\ 
						\midrule
						0     & 0x00080082    & 0x12000800                        & 3               & 0x12401842                        & 0x40401040                        &                 \\
						1     & 0x00400000    & 0x00004000                        & 1               & 0x00401042                        & 0x40000002                        & 5               \\
						2     & 0x02000000    & 0x00020000                        & 1               & 0x02000012                        & 0x02000000                        & 3               \\
						3     & 0x90000000    & 0x00100000                        & 2               & 0x10000000                        & 0x00000000                        & 1               \\
						4     & 0x00000000    & {\color[HTML]{FE0000} 0x00000000} & 0               & {\color[HTML]{FE0000} 0x00000000} & {\color[HTML]{FE0000} 0x00000000} & 0               \\
						5     & 0x00000000    & {\color[HTML]{FE0000} 0x00000000} & 0               & {\color[HTML]{FE0000} 0x00000000} & {\color[HTML]{FE0000} 0x00000000} & 0               \\
						6     & 0x00800000    & {\color[HTML]{FE0000} 0x00000000} & 1               & {\color[HTML]{FE0000} 0x00000000} & {\color[HTML]{FE0000} 0x00000000} & 0               \\
						7     & 0x00000000    & 0x00008000                        & 3               & {\color[HTML]{FE0000} 0x00000000} & {\color[HTML]{FE0000} 0x00000000} & 0               \\
						8     & 0x00000000    & 0x00078000                        & 7               & 0x00008000                        & 0x00008000                        & 4               \\
						9     & 0x00008000    & 0x00008000                        & 4               & 0x00040080                        & 0x00000080                        & 2               \\
						10    & 0x00038000    & 0x00078080                        & 8 (10)              & 0x80008480                        & 0x80008080                        & 6               \\
						11    & 0x003c8000    & 0x00008400                        & 3 (6)               & 0x00840084                        & 0x00800480                        & 5               \\
						12    & 0x00038080    & 0x0004a080                        & 4 (6)               & 0x84800480                        & 0x80802080                        & 6               \\
						13    & 0x003c8000    & 0x8021a400                        &                 & 0x00000004                        & 0x04010400                        & 3               \\
						14    &               &                                   &                 & 0x8020a000                        & 0xa0288000                        &                 \\   
						&               & $-log_2{P_K}$                     & 37 (44)              &                                   & $-log_2{P_D}$                     & 35              \\   
						\midrule
						\multicolumn{7}{c}{One of the Weak key: $(l^2,l^1,l^0,k^0)$ = 0x748e0a7d, 0x928c0d5b, 0x29084dba, 0x49b9e7a2.} \\
						\botrule                                                                                   
					\end{tabular*}
				\end{minipage}
			\end{center}
		\end{table}
		
		\begin{table}[htp]
			\begin{center}
				\begin{minipage}{\textwidth}
					\caption{The RKD trails of 15 rounds for SPECK64/128}
					\label{tab:The RKD trails of 15 rounds for SPECK64/128}
					\centering
					\begin{tabular*}{\textwidth}{ccccccc}
						\toprule
						Round & $\varDelta l$        & $\varDelta k$                 & $-log_2{P_K^i}$& $\varDelta x$                 & $\varDelta y$                 & $-log_2{P_D^i}$ \\ 
						\midrule
						0     & 0x00080082 & 0x12000800                        & 3             & 0x124018c2                        & 0x404010c0                        &              \\
						1     & 0x00400000 & 0x00004000                        & 1             & 0x004010c2                        & 0x40000002                        & 5             \\
						2     & 0x02000000 & 0x00020000                        & 1             & 0x02000012                        & 0x02000000                        & 3             \\
						3     & 0x90000000 & 0x00100000                        & 2             & 0x10000000                        & 0x00000000                        & 1             \\
						4     & 0x00000000 & {\color[HTML]{FE0000} 0x00000000} & 0             & {\color[HTML]{FE0000} 0x00000000} & {\color[HTML]{FE0000} 0x00000000} & 0             \\
						5     & 0x00000000 & {\color[HTML]{FE0000} 0x00000000} & 0             & {\color[HTML]{FE0000} 0x00000000} & {\color[HTML]{FE0000} 0x00000000} & 0             \\
						6     & 0x00800000 & {\color[HTML]{FE0000} 0x00000000} & 1             & {\color[HTML]{FE0000} 0x00000000} & {\color[HTML]{FE0000} 0x00000000} & 0             \\
						7     & 0x00000000 & 0x00008000                        & 3             & {\color[HTML]{FE0000} 0x00000000} & {\color[HTML]{FE0000} 0x00000000} & 0             \\
						8     & 0x00000000 & 0x00078000                        & 7             & 0x00008000                        & 0x00008000                        & 4             \\
						9     & 0x00008000 & 0x00008000                        & 4             & 0x00040080                        & 0x00000080                        & 2             \\
						10    & 0x00038000 & 0x00078080                        & 8 (10)            & 0x80008480                        & 0x80008080                        & 6             \\
						11    & 0x003c8000 & 0x00008400                        & 4 (7)             & 0x00840084                        & 0x00800480                        & 5             \\
						12    & 0x00038080 & 0x0005a080                        & 7 (8)             & 0x84800480                        & 0x80802080                        & 6             \\
						13    & 0x003c8000 & 0x8020a500                        & 7 (8)             & 0x00010004                        & 0x04000400                        & 3             \\
						14    & 0x00018080 & 0x81254884                        &               & 0x8020a000                        & 0xa0208000                        & 7             \\
						15    &            &                                   &               & 0x2185e824                        & 0x2081e821                        &               \\   
						&            & $-log_2{P_K}$                     & 48 (55)            &                                   & $-log_2{P_D}$                     & 42            \\    
						\midrule
						\multicolumn{7}{l}{One of the Weak key: $(l^2,l^1,l^0,k^0)$ = 0xde9b2c6a, 0x5a50cb19, 0x8f42899e, 0xf8bbaeae.}\\
						\botrule                                                                                   
					\end{tabular*}
				\end{minipage}
			\end{center}
		\end{table}
		
	\end{appendices}


\begin{thebibliography}{30}
	
\bibitem{DBLP:conf/indocrypt/AumassonB12} Aumasson, J., Bernstein, D.J.: Siphash: A fast short-input PRF. In: Galbraith, S.D., Nandi, M. (eds.) Progress in Cryptology - INDOCRYPT 2012, 13th International Conference on Cryptology in India, Kolkata, India, December 9-12, 2012. Proceedings. Lecture Notes in Computer Science, vol. 7668, pp. 489–508. Springer (2012).	

\bibitem{BLAKE} Aumasson, J.-P., Henzen, L., Meier, W., Phan, R.C.-W.: SHA-3 proposal blake. Submission to NIST 92 (2008).


\bibitem{DBLP:conf/dac/BeaulieuSSTWW15} Beaulieu, R., Shors, D., Smith, J., Treatman-Clark, S., Weeks, B., Wingers, L.: The simon and speck lightweight block ciphers. In: Proceedings of the 52nd Annual Design Automation Conference. DAC ’15. Association for Computing Machinery, New York, NY, USA (2015). doi.org/10.1145/2744769.2747946

\bibitem{ToSC:BBCGPUVW20} Beierle, C., Biryukov, A., Cardoso dos Santos, L., Großschädl, J., Perrin, L., Udovenko, A., Velichkov, V., Wang, Q.: Lightweight AEAD and hashing using the Sparkle permutation family. IACR Trans. Symm. Cryptol. 2020(S1), 208–261 (2020). doi.org/10.13154/tosc.v2020.iS1.208-261

\bibitem{bernstein2008chacha} Bernstein, D.J.: Chacha, a variant of Salsa20. In: Workshop Record of SASC, vol. 8, pp. 3–5 (2008).

\bibitem{DBLP:series/lncs/Bernstein08} Bernstein, D.J.: The Salsa20 family of stream ciphers. In: Robshaw, M.J.B., Billet, O. (eds.) New Stream Cipher Designs - The eSTREAM Finalists. Lecture Notes in Computer Science, vol. 4986, pp. 84–97. Springer (2008). 

\bibitem{BiereFazekasFleuryHeisinger-SAT-Competition-2020-solvers} Biere, A., Fazekas, K., Fleury, M., Heisinger, M.: CaDiCaL, Kissat, Paracooba, Plingeling and Treengeling entering the SAT Competition 2020. In: Balyo, T., Froleyks, N., Heule, M., Iser, M., J¨rvisalo, M., Suda, M. (eds.) Proc. of SAT Competition 2020 – Solver and Benchmark Descriptions. Department of Computer Science Report Series B, vol. B-2020-1, pp. 51–53. University of Helsinki (2020)

\bibitem{DBLP:conf/fse/Biryukov0V14} Biryukov, A., Roy, A., Velichkov, V.: Differential analysis of block ciphers SIMON and SPECK. In: Cid, C., Rechberger, C. (eds.) Fast Software Encryption - 21st International Workshop, FSE 2014, London, UK, March 3-5, 2014. Revised Selected Papers. Lecture Notes in Computer Science, vol. 8540, pp. 546–570. Springer (2014).


\bibitem{DBLP:conf/stoc/Cook71} Cook, S.A.: The Complexity of Theorem-Proving Procedures. In: Harrison, M.A., Banerji, R.B., Ullman, J.D. (eds.) Proceedings of the 3rd Annual ACM Symposium on Theory of Computing, May 3-5, 1971, Shaker Heights, Ohio, USA, pp. 151–158. ACM (1971).

\bibitem{AFRICACRYPT:ElSAbdYou19} ElSheikh M., Abdelkhalek A., Youssef A.M.: On MILP-based automatic search for differential trails
through modular additions with application to Bel-T. In: Progress in Cryptology-AFRICACRYPT 2019 - 11th International Conference on Cryptology in Africa, Rabat, Morocco, July 9-11, 2019, Proceedings, pp. 273–296 (2019).

\bibitem{ferguson2010skein} Ferguson N., Lucks S., Schneier B., Whiting D., Bellare M., Kohno T., Callas J., Walker J.: The Skein hash function family. Submission to NIST (round 3), 7(7.5):3 (2010).

\bibitem{FSE:FWGSH16} Fu, K., Wang, M., Guo, Y., Sun, S., Hu, L.: MILP-based automatic search algorithms for differential and linear trails for SPECK. In: Peyrin, T. (ed.) FSE 2016. LNCS, vol. 9783, pp. 268–288. Springer (2016).

\bibitem{gurobi} Gurobi Optimization, LLC: Gurobi Optimizer Reference Manual (2021). https://www.gurobi.com.

\bibitem{AC:Leurent12} Leurent, G.: Analysis of differential attacks in ARX constructions. In: Wang, X., Sako, K. (eds.) ASIACRYPT 2012. LNCS, vol. 7658, pp. 226–243. Springer (2012).

\bibitem{FSE:LipMor01} Lipmaa, H., Moriai, S.: Efficient algorithms for computing differential properties of addition. In: Matsui, M. (ed.) FSE 2001. LNCS, vol. 2355, pp. 336–350. Springer (2002). 

\bibitem{C:LiuIsoMei20} Liu, F., Isobe, T., Meier, W.: Automatic verification of differential characteristics: Application to reduced Gimli. In: Micciancio, D., Ristenpart, T. (eds.) CRYPTO 2020, Part III. LNCS, vol. 12172, pp. 219–248. Springer (2020).

\bibitem{liu2017rotational} Liu, Y., De Witte, G., Ranea, A., Ashur, T.: Rotational-XOR cryptanalysis of reduced-round SPECK. IACR Trans. Symmetric Cryptol. 2017(3), 24–36 (2017).

\bibitem{DBLP:journals/tit/LiuLJW21} Liu, Z., Li, Y., Jiao, L., Wang, M.: A new method for searching optimal differential and linear trails in ARX ciphers. IEEE Trans. Inf. Theory 67(2), 1054–1068 (2021).


\bibitem{mouha2014chaskey} Mouha, N., Mennink, B., Van Herrewege, A., Watanabe, D., Preneel, B., Verbauwhede, I.: Chaskey: an efficient mac algorithm for 32-bit microcontrollers. In: International Conference on Selected Areas in Cryptography, pp. 306–323. Springer(2014).

\bibitem{EPRINT:MouPre13} Mouha, N., Preneel, B.: Towards finding optimal differential characteristics for ARX: Application to Salsa20. Cryptology ePrint Archive, Report 2013/328. https://eprint.iacr.org/2013/328 (2013).

\bibitem{DBLP:conf/sacrypt/MouhaVCP10} Mouha, N., Velichkov, V., Canniere, C.D., Preneel, B.: The differential analysis of S-functions. In: Biryukov, A., Gong, G., Stinson, D.R.(eds.) Selected Areas in Cryptography - 17th International Workshop, SAC 2010, Waterloo, Ontario, Canada, August 12-13, 2010, Revised Selected Papers. Lecture Notes in Computer Science, vol. 6544, pp. 36–56. Springer (2010).


\bibitem{DBLP:journals/dcc/SadeghiRB21} Sadeghi, S., Rijmen, V., Bagheri, N.: Proposing an MILP-based method for the experimental verification of difference-based trails: application to speck, SIMECK. Des. Codes Cryptogr. 89(9), 2113–2155 (2021).

\bibitem{DBLP:journals/dcc/Schulte-Geers13} Schulte-Geers, E.: On CCZ-equivalence of addition mod 2 n. Des. Codes Cryptogr. 66(1-3), 111–127. Springer (2013).


\bibitem{DBLP:conf/ijcai/SharmaRSM19} Sharma, S., Roy, S., Soos, M., Meel, K.S.: GANAK: A scalable probabilistic exact model counter. In: Kraus, S. (ed.) Proceedings of the Twenty-Eighth International Joint Conference on Artificial Intelligence, IJCAI 2019, Macao, China, August 10-16, 2019, pp. 1169–1176. ijcai.org (2019).


\bibitem{DBLP:conf/cp/Sinz05}Sinz, C.: Towards an optimal CNF encoding of boolean cardinality constraints. In: van Beek, P. (ed.) Principles and Practice of Constraint Programming - CP 2005, 11th International Conference, CP 2005, Sitges, Spain, October 1-5, 2005, Proceedings. Lecture Notes in Computer Science, vol. 3709, pp. 827–831. Springer (2005). 

\bibitem{DBLP:conf/sat/SoosNC09} Soos, M., Nohl, K., Castelluccia, C.: Extending SAT solvers to cryptographic problems. In: Kullmann, O. (ed.) Theory and Applications of Satisfiability Testing - SAT 2009, 12th International Conference, SAT 2009, Swansea, UK, June 30 - July 3, 2009. Proceedings. Lecture Notes in Computer Science, vol. 5584, pp. 244–257. Springer (2009).

\bibitem{DBLP:journals/tosc/SunWW21} Sun, L., Wang, W., Wang, M.: Accelerating the search of differential and linear characteristics with the SAT method. IACR Trans. Symmetric Cryptol. 2021(1), 269–315 (2021).

\bibitem{DBLP:conf/sat/Thurley06} Thurley, M.: sharpSAT - counting models with advanced component caching and implicit BCP. In: Biere, A., Gomes, C.P. (eds.) Theory and Applications of Satisfiability Testing - SAT 2006, 9th International Conference, Seattle, WA, USA, August 12-15, 2006, Proceedings. Lecture Notes in Computer Science, vol. 4121, pp. 424–429. Springer (2006).

\bibitem{TCS:Valiant97} Valiant, L.G.: The complexity of computing the permanent. Theoretical Computer Science, \textbf{8}(2), 189-201 (1979).

\bibitem{SAC:WanKeldun07} Wang, G., Keller, N., Dunkelman, O.: The delicate issues of addition with respect to XOR differences. In: Adams, C.M., Miri, A., Wiener, M.J. (eds.) SAC 2007. LNCS, vol. 4876, pp. 212–231. Springer (2007).


\bibitem{DBLP:conf/cans/XinLSL19} Xin, W., Liu, Y., Sun, B., Li, C.: Improved cryptanalysis on Siphash. In: Mu, Y., Deng, R.H., Huang, X. (eds.) Cryptology and Network Security - 18th International Conference, CANS 2019, Fuzhou, China, October 25-27, 2019, Proceedings. Lecture Notes in Computer Science, vol. 11829, pp. 61–79. Springer (2019).


\end{thebibliography}
\end{document}